\documentclass[11pt]{article}
\usepackage{style}
\graphicspath{ {./images/} }

\usepackage{blindtext}
\usepackage{algorithm,algpseudocode}
\usepackage{graphicx}
\usepackage{natbib}
\usepackage{subfiles}
\usepackage{cleveref}
\hypersetup{
  colorlinks   = true,
  urlcolor     = blue,
  linkcolor    = blue,
  citecolor   = blue
}
\usepackage{dsfont}
\usepackage{mathtools}
\usepackage{multirow}
\usepackage{authblk}
\usepackage{setspace}[]

\title{Evaluating Black-Box Classifiers via Stable Adaptive Two-Sample Inference}

\begin{document}
\author[1]{Yuchen Chen}
\author[1]{Jing Lei}
\affil[1]{Carnegie Mellon University}
\maketitle

\begin{abstract}
We consider the problem of evaluating black-box multi-class classifiers. In the standard setup, we observe class labels $Y\in \{0,1,\ldots,M-1\}$ generated according to the conditional distribution
$
    Y|X \sim \Cat\big(\eta(X)\big),
$
where $X$ denotes the features and $\eta$ maps from the feature space to the $(M-1)$-dimensional simplex. A black-box classifier is an estimate $\hat{\eta}$ for which we make no assumptions about the training algorithm. Given holdout data, our goal is to evaluate the performance of the classifier $\hat{\eta}$. Recent work suggests treating this as a goodness-of-fit problem by testing the hypothesis
    $H_0: \rho((X,Y),(X',Y')) \le \delta$,
where $\rho$ is some metric between two distributions, and $(X',Y')\sim P_X\times \Cat(\hat\eta(X))$. Combining ideas from algorithmic fairness, Neyman-Pearson lemma, and conformal p-values, we propose a new methodology for this testing problem. The key idea is to generate a second sample $(X',Y') \sim P_X \times \Cat\big(\hat\eta(X)\big)$ allowing us to reduce the task to two-sample conditional distribution testing. Using part of the data, we train an auxiliary binary classifier called a distinguisher to attempt to distinguish between the two samples. The distinguisher's ability to differentiate samples, measured using a rank-sum statistic, is then used to assess the difference between $\hat{\eta}$ and $\eta$ . Using techniques from cross-validation central limit theorems, we derive an asymptotically rigorous test under suitable stability conditions of the distinguisher.
\end{abstract}

\setstretch{1}

\section{Introduction}
Classification is a fundamental task in statistics and machine learning. 
In this work, we focus on evaluating black-box probabilistic multi-class classifiers. In the general set-up, we have features (covariates) $X \sim P_X$ supported on a feature space $\mathcal{X}$.  We also have a label space $\mathcal{L}$ consisting of $M$ labels $y=0,..,M-1$. The labels are generated by $Y \sim \Cat(\eta(X))$, where $\eta: \mathcal{X} \rightarrow \Delta_{M-1}$ is an unknown function mapping a feature vector $X$ to the probability simplex over $\mathcal{L}$ denoted by $\Delta_{M-1}$. Given a training sample of $(X,Y)$ pairs from the underlying distribution, the classification task is to train a probabilistic classifier $\hat{\eta}: \mathcal{X} \rightarrow \Delta_{M-1}$, an estimate of $\eta$.

Practitioners have access to a diverse class of rich and flexible classification methods at their disposal such as random forests, XgBoost, neural networks, etc.  In this work we consider a generic estimate $\hat\eta$, and do not make any assumptions about the inner workings of the training algorithm used to obtain $\hat\eta$, which we call a ``black-box'' estimate. After obtaining such a black-box estimate, we naturally want to know how good the classifier is. The importance of this question is further emphasized by the emergence of deep learning methods and the use of cloud-based services. These methods may be quite expensive to train. In an effort to save resources, one may wish to first train a classifier on a small subset of data and then test whether this classifier is sufficient or if more resources, such as more powerful algorithms and/or more training samples, are needed for the training process.  

A natural approach to this question is to evaluate the classifier's test accuracy on a hold-out set. The main drawback of using classification accuracy is that this metric does not quantify whether the estimated model reflects the data generating distribution. For example, consider the following binary classification example given in \citet{zhangClassificationProcedureGood2023}. Suppose that $M=2$ and $\eta(x) = (\frac{1}{2},\frac{1}{2})$ for all $x \in \mathcal{X}$. The classification accuracy of the probabilistic classifier $\hat{\eta}(x) \equiv (\frac{1}{2},\frac{1}{2})$ would be around $0.5$, which may seem poor,  but this classifier matches the data generating distribution, i.e. $\hat{\eta} = \eta$. Meanwhile, when the two labels are balanced, the probabilistic classifier $\hat{\eta}(x)\equiv(1,0)$ has the same classification accuracy, but the estimated parameter $\hat\eta(\cdot)$ is very different from the data generating distribution. In more complex settings, relying on test accuracy may lead to models with poor generalization. For instance, classification accuracy may be high because of overfitting \citep{zhangClassificationProcedureGood2023,javanmardGRASPGoodnessoffitTest2024}.

To combat the issues with using test accuracy, recent work has suggested evaluating black-box classifiers using Goodness-of-Fit (GoF) tests \citep{zhangClassificationProcedureGood2023,javanmardGRASPGoodnessoffitTest2024}. In the simplest form, given $(X_i,Y_i)_{i=1}^n\stackrel{iid}{\sim} P_X\times \Cat(\eta(X))$, we want to test 
\begin{equation}
\label{eq: exact hypothesis}
    H_0: \eta(X)=\hat{\eta}(X)\,,~~\text{almost surely}.
\end{equation}
where $\hat\eta(\cdot)$ is an external estimate of $\eta(\cdot)$ independent of $(X_i,Y_i)_{i=1}^n$. Two-sample tests are often used for such hypotheses.  
Suppose for now that we have additional $X$-marginal data $X'_1,...,X'_n\overset{iid}{\sim} P_X$. We can form a second sample $\{(X_i',Y_i')\}$ where $Y'_i \sim \Cat(\hat{\eta}(X'_i))$.

We propose using the following two-sample test which is related to previous works on algorithmic fairness and outcome indistinguishability \citep{dworkOutcomeIndistinguishability2021} and classification accuracy testing \citep{kimClassificationAccuracyProxy2021}. Label the observations in the first sample $(X_1,Y_1),...,(X_n,Y_n)$ with class label $C=0$ and label the observations in the second sample $(X'_1,Y'_1),...,(X'_n,Y'_n)$ with class label $C=1$. Using part of the data (index set $\mathcal{I}\subset[n]$), we train a distinguishing function $\hat g:\mathcal X\times\mathcal L\mapsto \mathbb R$, which is typically associated with a probabilistic classifier estimating $P(C=1|X,Y)$. We evaluate the test statistic
$$
    T = \frac{1}{|\mathcal{I}^c|^2} \sum_{i,j \in \mathcal{I}^c} \mathds{1}\left(\hat g(X_i,Y_i) < \hat g(X'_j,Y'_j)\right).
$$
The intuition behind this test statistic is that under the null, the two samples should be ``indistinguishable", so the distinguisher $\hat g$ cannot tell the sample generated using $\hat{\eta}$ apart from the sample generated using $\eta$. In this case $T \approx 0.5$ as $\hat g$ cannot do any better than random guessing. When there is separation between $P_X \times \Cat(\hat{\eta}(X))$ and $P_X \times \Cat(\eta(X))$, $\hat g$ should be significantly larger on the second sample compared to the first, so $T > 0.5$. We emphasize that even though we propose to evaluate a classifier using another classifier---the distinguisher---the estimation of it is usually significantly easier than the original classification task. For instance, estimating the distinguisher is always a binary classification problem, which has a latent low-dimensional structure since we know that the $X$-marginals are equal, the only signal is the conditional distribution of $Y|X$. Moreover, the distinguisher does not need to be perfect, it only needs to detect some difference between the two samples to have nontrivial power. Previous works using similar test procedures show that this statistic is still powerful even when the distinguisher is not consistent \citep{caiAsymptoticDistributionFreeIndependence2024a}. Conditional on the estimated distinguisher $\hat g$, $T$ is a two-sample U-statistic that satisfies a central limit theorem \citep{vaartAsymptoticStatistics1998a}. This normal approximation can be used to construct asymptotically valid rejection cutoffs.

Several additional extensions are needed to make the proposed procedure practical. The above procedure requires additional $X$-marginal data to form the second sample. A priori, we do not have access to such data. To actually implement this test procedure, we will need to consider how to construct the second sample used in the two-sample test. In addition, we do not want to test the exact equality in hypothesis \eqref{eq: exact hypothesis} because in practice we would never expect an estimate to be perfect, but rather we want to test whether the two distributions are ``close'' in some sense. We formalize this notion as a tolerant testing problem
\begin{equation}
\label{eq: tol hypothesis}
    H_0: \rho\bigg(P_X \times \Cat\big(\hat{\eta}(X)\big), P_X \times \Cat\big(\eta(X)\big)\bigg) < \delta,
\end{equation}
where $\rho$ is some measure of separation between the distributions. We will discuss how to construct the second sample and the extension to the tolerant hypothesis in detail when we describe the complete procedure in Section \ref{sec: procedure}. Additionally, we use cross-fitting to avoid loss of sample size efficiency caused by sample splitting.

We highlight some key features of our proposed procedure. Our method requires minimal assumptions on the distribution $(X,Y)$. In addition, we make no structural assumptions on the classifier $\hat{\eta}$, only assuming access to $\hat{\eta}$ through sampling. However, our method is adaptive to additional distributional information by leveraging the distinguisher $\hat g$. For example, if we expect that the signal in the $X$-marginal is sparse, we can use methods tailored to sparse classification (e.g. Lasso) to construct $\hat g$, allowing our test to be effective in high-dimensional and other complex settings. Finally, to the best of our knowledge, our test is the only one with a rigorous guarantee in the multi-class setting.

\subsection*{Related Work}
\paragraph*{Outcome indistinguishability}
Our strategy is closely related to the recent notion of outcome indistinguishability in the algorithmic fairness literature \citep{dworkOutcomeIndistinguishability2021,dwork2022beyond}. In essence, outcome indistinguishability posits that the outcomes generated by an ideal classifier cannot be distinguished from samples generated by nature. Our test can be viewed as testing whether the classifier given by $\hat{\eta}$ is outcome indistinguishable with respect to a distinguisher trained using the holdout samples. \citep{dworkOutcomeIndistinguishability2021} proposes several different levels of outcome indistinguishability. We describe the most basic form. Consider the binary classification setting where the number of labels $M=2$ and we view $\eta: \mathcal{X} \rightarrow [0,1]$ where $\eta(x) = \P[Y=1|X=x]$. For  a given class of distinguishers $\mathcal{A}$ and $\epsilon>0$, a probabilistic classifier $\hat{\eta}$ is $(\mathcal{A},\epsilon)$ outcome indistinguishable if 
$
    \Big|\P[A(X,Y) = 1] - \P[A(X,Y') = 1]\Big| \leq \epsilon,
$
for every $A\in\mathcal{A}$ and $X \sim P_X, Y\sim \Bern(\eta(X)), Y'\sim \Bern(\hat{\eta}(X'))$. Thus, outcome indistinguishibility says a classifier is ``good'' if any distinguisher in $\mathcal{A}$ cannot tell, up to a tolerance level $\epsilon$, the difference between true nature generated samples and samples generated from $\hat{\eta}$. In our setting, we choose $\mathcal{A}$ to be the class of all probabilistic classifiers trained on the holdout data. One difference with our approach is that we consider a rank-based statistic that can relate the tolerance $\epsilon$ to the area under the ROC curve of the distinguishers. We also extend the idea of indistinguishability to multi-class problems.

\paragraph*{GoF tests for binary classification}
There is a long history of goodness-of-fit tests for parametric models. These methods mostly focus on variants of Pearson's chi-square statistic. One line (of many) of work in this area is \citet{mccullaghAsymptoticDistributionPearsons1985,osiusNormalGoodnessofFitTests1992,farringtonAssessingGoodnessFit1996}. More recent work has studied goodness-of-fit tests for generalized linear models in high dimensional settings \citep{shah2018goodness,jankova2020goodness}. With the development and popularity of new nonparametric classification methods, there is a need to extend to these nonparametric settings.

Recently, there have been two developments in the nonparametric setting. \citet{zhangClassificationProcedureGood2023} proposed a binary adaptive goodness-of-fit test called BAGofT. The main idea is to use the estimated classifier to form an adaptive binning in the covariate space which turns the problem into testing multiple binomials, and then use Pearson's chi-square statistic. They aim to test the hypothesis
$
    H_0: \sup_x |\hat{\eta}(x) - \eta(x)| = O_p(r_n),
$
where $r_n$ is the convergence rate of the estimated classifier under the null. There are a few fundamental differences in this setup compared to ours. First, they test whether the estimated classifier is asymptotically equal to the true classifier while we test whether the estimated classifier is in a user-specified radius of the true classifier. Second, their procedure and theory does not treat the estimated classifier $\hat{\eta}$ as a black-box in that the training of $\hat{\eta}$ is a part of the test procedure. Finally, BAGofT only tests binary classification methods. \citet{zhangClassificationProcedureGood2023} offers a multi-class extension, but it is not rigorously justified and may be computationally heavy.

\citet{javanmardGRASPGoodnessoffitTest2024} proposes a randomization test to test the hypothesis
\begin{equation*}
    H_0: \rho\bigg(P_X \times \Bern\big(\hat{\eta}(X)\big), P_X \times \Bern\big(\eta(X)\big)\bigg) < \delta,
\end{equation*}
where $\rho$ is an $f$-divergence, such as total variation, chi-square divergence, etc. Thus, the setting considered by \citet{javanmardGRASPGoodnessoffitTest2024} is more comparable to ours than the setting considered in \citet{zhangClassificationProcedureGood2023}. Their method, called GRASP, is to form a randomization test which follows three main steps. First, for each sample, they generate $m$ counterfeit samples using the black-box classifier $\hat{\eta}$. The second step involves ranking the true sample among the counterfeits. From these ranks, a decision rule is computed using an optimization subroutine. Although both of our test procedures involve constructing multiple samples and ranks, they are done in different ways. This difference allows for some benefits compared to the GRASP method. First, our method allows for an easy extension to the multiclass classifier case. Furthermore, constructing the decision rule using normal approximation allows us to avoid the additional computational burden of the optimization subroutine and enables convenient construction of one-sided confidence intervals for the parameter of interest: the difference between the true distribution and the black-box output.

\paragraph*{Calibration}
\citet{leeTCalOptimalTest2023} presents a similar method of constructing a second sample and using a two-sample test to evaluate whether a classifier is calibrated. As pointed out by \citet{javanmardGRASPGoodnessoffitTest2024}, the test objective is different, in that in the $\delta=0$ case, calibration aims to test whether
$
    \E[\eta(X)|\hat{\eta}=\eta] = \eta,
$
while we aim to test whether
$
    \eta(X) = \hat{\eta}(X)~a.s.
$
Because of the differences in null hypothesis, the actual samples used in the two-sample tests differ between our work and \citet{leeTCalOptimalTest2023}. Moreover, we emphasize that the test procedure of \citet{leeTCalOptimalTest2023} leverages distributional assumptions such as smoothness, while our test does not make distributional assumptions.

\paragraph*{Two-Sample testing by classification}
The idea of using classification accuracy in two-sample testing has been studied recently \citep{kimClassificationAccuracyProxy2021,gerberMinimaxOptimalTesting2023}.  Our test statistic is slightly different as we use a rank-based approach. This type of rank-based approach has been applied for other problems such as independence testing \citep{caiAsymptoticDistributionFreeIndependence2024a}. To the best of our knowledge, this work is the first to use cross-fitting for two-sample testing by classification, which involves adapting recent analysis of cross-validation central limit theorems and stability. The null hypothesis \eqref{eq: exact hypothesis} is a special case of the two-sample conditional distribution test considered in \cite{huTwoSampleConditionalDistribution2024a,chen2025biased}, who considered testing the equality of the conditional distribution of $Y$ given $X$ between two populations, and allowed general $Y$.  The ranking-based test statistic is inspired by this work but the tolerance testing formulation \eqref{eq: tol hypothesis} and the construction of the second sample are new.

\section{Test Procedure}
\label{sec: procedure}
Suppose we are in a multi-class classification setting with label $Y$ taking values $0,...,M-1$, where $M>1$ is the total number of labels and features $X$ with distribution $P_X$ supported on a feature space $\mathcal{X}$. The underlying data-generating distribution has the form $ (X, Y) \sim P_X \times \Cat\big(\eta(X)\big),
$
where $\eta(x) \in \Delta_{M-1}$ is the conditional probability distribution of the label $Y$ conditional on $X=x$. Given a probabilistic classifier $\hat{\eta}:\mathcal{X} \rightarrow \Delta_{M-1}$ and a hold-out dataset $(X_1,Y_1),...,(X_n,Y_n)$ independent of the data used to train $\hat{\eta}$, we want to test the hypothesis
\begin{equation*}
    H_0: \rho\Big(P_X \times \Cat\big(\hat{\eta}(X)\big), P_X \times \Cat\big(\eta(X)\big)\Big) < \delta,
\end{equation*}
where $\rho$ is a metric to be specified later and $\delta>0$ is a pre-specified tolerance.

Recall that we propose the following procedure. Given two samples $\{(X_i,Y_i)\} \sim P_X \times \Cat\big(\eta(X)\big)$ and $\{(X'_i,Y'_i)\} \sim P_X \times \Cat\big(\hat{\eta}(X)\big)$, we train a distinguisher $\hat g$, which estimates the conditional probability that a sample belongs to $P_X \times \Cat\big(\eta(X)\big)$ or $P_X \times \Cat\big(\hat{\eta}(X)\big)$ and then evaluate a rank-sum test statistic.

To formulate a practical two-sample test, we need to answer three questions.
\begin{enumerate}
    \item How do we generate the second sample?
    \item What measure $\rho$ do we use?
    \item How do we handle possible double dipping in estimation of the distinguisher $\hat g$ and evaluation of the rank-sum test statistic?
\end{enumerate}

\subsection{Sample augmentation and distinguishing}

For a two-sample test, we need a sample from $P_X \times \Cat\big(\hat{\eta}(X)\big)$. The simplest way to obtain this second sample is to use a further sample split.
We take an alternate approach that does not require this additional sample split.
Using the data $(X_1,Y_1),...,(X_n,Y_n)$, we can artificially create a second sample $(X'_1,Y'_1),...,(X'_n,Y'_n)$ where for $i=1,...,n$,
\begin{align}
    \begin{split}
    \label{eq: second sample}
    X'_i = X_i, \quad 
    Y'_i\mid X_i' \sim \Cat\big(\hat{\eta}(X'_i)\big).
    \end{split}
\end{align}
Then $(X'_i,Y'_i) \sim P_X \times \Cat\big(\hat{\eta}(X)\big)$. For notation, we use
$
    \mathcal{D}_n := \{(X_i,Y_i,Y_i')\}_{i=1}^n,
$
to denote this augmented dataset. Using the augmented dataset, we can estimate the distinguisher using a symmetric classification procedure
\begin{equation}
\label{eq:distinguisher}
\wh{D}:\bigcup_{n=1}^\infty(\mathcal{X}\times \mathcal{L}\times \mathcal{L})^n \rightarrow \mathcal{G},
\end{equation}
where $\mathcal{G}$ is the collection of measurable functions from $\mathcal{X}\times\mathcal{L}$ to $[0,1]$. Here is how the procedure $\wh{D}$ works. Given $\mathcal{D}_n = \{(X_i,Y_i,Y'_i)\}_{i=1}^n$, we first form an expanded data set 
$$
    \Tilde{\mathcal{D}}_n = \{(X_1,Y_1,0),...,(X_n,Y_n,0),(X_1,Y_1',1),...,(X_n,Y_n',1)\}.
$$
This data set simply expands $\mathcal{D}_n$ to explicitly show the two-sample structure, where the last variable, denoted by $C={0,1}$, denotes whether the sample belongs to the first or second sample. Then $\hat g =\wh{D}(\mathcal{D}_n)$ is a measurable function from $\mathcal X\times\mathcal L$ to $[0,1]$, with $\hat g(x,y)$ estimating the conditional probability $P(C=1|X=x,Y=y)$ using data $\tilde{\mathcal{D}}_n$ as if $\tilde{\mathcal D}_n$ consists of independent samples from the equal-weight mixture of $P_X\times \Cat(\eta(X))$ and $P_X\times\Cat(\hat\eta(X))$. This can be done using any off-the-shelf classification procedure.  In Section \ref{sec: coupled sample}, we give evidence that we can effectively estimate the distinguisher under the dependency structure of the augmented dataset. To avoid confusion, we will call $\wh D$ the ``distinguisher procedure'' and we refer to any output $\hat g$ of $\wh D$ as a ``distinguisher''.

\subsection{Neyman-Pearson separation}
We have two distributions $P_0, P_1$ supported on a common space $\mathcal{Z}$ and want to construct a separation measure $\rho$, between $P_0$ and $P_1.$ Following the intuition of two-sample testing by classification, we should let $\rho(P_0,P_1)$ measure how much an optimal distinguisher can distinguish between $P_0,P_1$. We first formalize how we can measure the distinguishing power of any distinguisher using the area under the receiver operating characteristic (ROC) curve.

Let $\hat{g}:\mathcal Z\mapsto [0,1]$ be the output of a distinguisher procedure for the distributions $P_0,P_1$ and $\wh{L}$ be the corresponding likelihood ratio estimate
\begin{equation*}
    \wh{L}(z) = \frac{\hat{g}(z)}{1-\hat{g}(z)}.
\end{equation*} Let $f_k,F_k$ denote the pdf/cdf for $\wh{L}$ under $P_k$ ($k=0,1$). For simplicity of discussion, we assume for now that $\wh L$ is a continuous random variable under both $P_0$ and $P_1$. The more general case can be handled using additional randomization, as described in the next subsection. We define the false positive rate (FPR) and the true positive rate (TPR) by
$$
    \FPR(t) = \int_{t}^\infty f_0 \qquad \TPR(t) = \int_{t}^\infty f_1.
$$
In the context of binary classification with labels $0,1$, we can construct a classifier by thresholding the corresponding likelihood ratio estimate $\wh{L}$ at a threshold $t\in [0,\infty)$. At a threshold $t$, the FPR represents the proportion of negatives that were classified as positives, and the TPR represents the proportion of positives that were classified as positives. A good distinguisher will minimize FPR and maximize TPR across the trajectory.

The tradeoff between FPR and TPR can be represented by the ROC curve, the curve $(\FPR(t),\TPR(t))$ parameterized by $t \in [0,\infty)$. The area under the ROC curve (AUC) can be used to aggregate the tradeoff across all thresholds into a single measure. A standard calculus computation shows that the area under the ROC curve (AUC) has the form
\begin{equation}
    \begin{split}
    \label{eq: AUC}
        \AUC(\hat g) &= \int_{0}^\infty (1-F_1(t))f_0(t) dt
        = \P(\wh{L}(Z)<\wh{L}(Z'))
        =\P(\hat{g}(Z)<\hat{g}(Z')),
    \end{split}
\end{equation}
where $(Z,Z')\sim P_0\times P_1$ and the last line is because $\hat{g}$ and $\wh{L}$ are related by a monotone transformation. AUC has been a metric used to evaluate binary classifiers. For a given $\wh{L}$, $\AUC(\wh{L})=0.5$ implies that $\wh{L}$ cannot distinguish between $P_0,P_1$ and $\AUC(\wh{L})=1$ implies that $\wh{L}$ can perfectly distinguish between the two distributions.

Suppose that $P_0,P_1$ have densities with respect to some common measure space. We can define the likelihood ratio $L:\mathcal{Z} \rightarrow [0,\infty]$
$
    L(z) = \frac{dP_1(z)}{dP_0(z)}.
$
The Neyman-Pearson Lemma \citet{neyman1933ix} shows that uniformly across all thresholds, the true likelihood ratio optimizes the FPR/TPR tradeoff. Consequently, the distinguisher that optimizes the AUC is given by the true likelihood ratio. 
  We now define the Neyman-Pearson metric characterizing the separation of $P_0,P_1$ as
$$
    \rho(P_0,P_1) = \AUC(L) - 0.5,
$$
where $L$ is the likelihood ratio. At the extremes, when $\rho(P_0,P_1) = 0$, no classifier can distinguish between the two distributions and when $\rho(P_0,P_1) = 0.5$, perfect classification is possible. This measure of separation is closely related to the total variation metric. The relationship is shown in the following proposition given in \citet{caiAsymptoticDistributionFreeIndependence2024a}(Proposition 2.1).
\begin{proposition}
    \label{prop: tv and auc}
    Let $\rho_{\rm tv}$ denote the total variation distance. Then we have
    $$
        \frac{1}{4}\rho_{\rm tv}(P_0,P_1) \leq \rho(P_0,P_1) \leq \frac{1}{2}\rho_{\rm tv}(P_0,P_1).
    $$
\end{proposition}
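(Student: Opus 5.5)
Write $\tau=\rho_{\rm tv}(P_0,P_1)$ and let $p_0,p_1$ be densities with respect to $\mu=P_0+P_1$. With randomized tie-breaking, the AUC of the likelihood ratio is $\P(L(Z)<L(Z'))+\tfrac12\P(L(Z)=L(Z'))$ for $(Z,Z')\sim P_0\times P_1$, so
\[
\rho(P_0,P_1)=\tfrac12\,\E\bigl[\operatorname{sign}(L(Z')-L(Z))\bigr].
\]
Both bounds will be read off the ROC curve of $L$. By Neyman--Pearson, this curve is concave and lies above every other achievable curve, and $\AUC(L)-\tfrac12$ is the area between it and the diagonal.

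\textbf{Upper bound.} I would use the point form of the ROC curve. Take any test $\phi$ (including randomized tests), with size $\alpha=\E_0\phi$ and power $\beta=\E_1\phi$. Then
\[
\beta-\alpha=\int\phi\,(p_1-p_0)\,d\mu\le\int(p_1-p_0)_+\,d\mu=\tau .
\]
So every point of the ROC curve satisfies $\TPR\le \FPR+\tau$, and trivially $\TPR\le 1$. The area above the diagonal is therefore at most
\[
\int_0^1\min(\tau,1-u)\,du=\tau-\tfrac{\tau^2}{2}\le\tau .
\]
This gives $\rho\le\tau$, which is off by a factor of $2$ from the claim, so a sharper argument is needed.

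A cleaner route is to write $p_0=m+a$ and $p_1=m+b$, where $m=\min(p_0,p_1)$ and $a,b\ge0$ have disjoint supports with $\int a=\int b=\tau$. Expanding $\E[\operatorname{sign}(L(Z')-L(Z))]$ as a double integral over $P_0\otimes P_1$ splits it into four pieces:
\begin{itemize}
\item The $m\otimes m$ piece integrates to $0$ by antisymmetry of the sign.
\item The $a\otimes b$ piece equals $+\tau^2$, because $L<1$ on the support of $a$ and $L>1$ on the support of $b$.
\item The $m\otimes b$ and $a\otimes m$ pieces together contribute at most $2\tau(1-\tau)$ in absolute value.
\end{itemize}
Hence $2\rho\le \tau^2+2\tau(1-\tau)=2\tau-\tau^2$. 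This is the same bound $\rho\le\tau-\tau^2/2$ as before, so it still does not reach $\tfrac12\tau$.

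A quick test on disjoint supports gives $\AUC=1$ and $\tau=1$, so $\rho=\tfrac12=\tfrac12\tau$, which is consistent with the claim. Now take $P_0$ uniform on $[0,1]$ and $P_1=(1-\tau)P_0+\tau\,\delta_2$. Then $\tau_{\rm tv}=\tau$, and $\AUC=\tfrac12(1-\tau)+\tau=\tfrac12+\tfrac{\tau}{2}$, so $\rho=\tfrac{\tau}{2}$ and the upper bound is tight here. So the constant $\tfrac12$ must come from a finer estimate of the cross terms. The $m\otimes b$ piece is $\int m(z)\,b(z')\operatorname{sign}(\cdot)$, where the sign is $+1$ whenever $z$ lies outside the support of $b$. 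I would bound it using a monotone coupling in $L$, i.e. by sorting mass according to the value of $L$ and tracking how much of $m$ sits above the support of $b$. I expect this to be the main obstacle, and I would verify it first on two-point distributions.

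\textbf{Lower bound.} Take the single threshold at $L=1$. It gives the ROC point $(\alpha,\alpha+\tau)$, where $\alpha=P_0(L>1)$. By concavity of the optimal ROC curve, the curve lies above the triangle with vertices $(0,0)$, $(\alpha,\alpha+\tau)$ and $(1,1)$. That triangle has area $\tfrac12\tau$ above the diagonal, since its base is the diagonal of length $\sqrt2$ and its height is $\tau/\sqrt2$. This gives $\rho\ge\tfrac{\tau}{2}$, which is stronger than the claimed $\tfrac14\tau$.

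This conflicts with the upper bound, because in the uniform-plus-point-mass example the claimed upper bound $\tfrac{\tau}{2}$ is attained. So I need to recheck the triangle area. The triangle has vertices $(0,0)$, $(1,1)$ and the point $(\alpha,\alpha+\tau)$, and its area is $\tfrac12\,\lvert 1\cdot(\alpha+\tau)-1\cdot\alpha\rvert=\tfrac{\tau}{2}$. So the area above the diagonal is $\tfrac{\tau}{2}$, which is exactly the example's value, so there is no contradiction. I would present the concavity-triangle argument for the lower bound with a constant at least $\tfrac14$. The remaining work, and the main obstacle, is pinning down the upper-bound constant $\tfrac12$ by refining the cross-term estimate.
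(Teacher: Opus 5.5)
Your two ROC arguments are correct. The step you leave open, refining the cross terms to reach $\rho\le\tfrac12\tau$, cannot be carried out: with your normalization $\tau=\sup_A|P_0(A)-P_1(A)|=\int(p_1-p_0)_+\,d\mu$, that inequality is false. What you have actually shown is $\tfrac12\tau\le\rho\le\tau-\tfrac12\tau^2$, and both ends are attained.

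\begin{itemize}
\item Your uniform-plus-atom example has ROC curve $(0,0)\to(0,\tau)\to(1,1)$, which is exactly your triangle. It is therefore the extremal case of the \emph{lower} bound, not evidence that the claimed upper bound is tight.
\item The three-point pair $P_0=(\tau,1-\tau,0)$, $P_1=(0,1-\tau,\tau)$ has ROC curve $u\mapsto\min(u+\tau,1)$. It gives $\rho=\tau-\tfrac12\tau^2>\tfrac12\tau$ for $0<\tau<1$.
\item A tie-free counterexample is $P_0=\mathrm{Unif}[0,1]$ and $P_1$ with density $2z$. Then $L(z)=2z$ and $\mathrm{AUC}=\Pr(Z<Z')=\tfrac23$, so $\rho=\tfrac16>\tfrac18=\tfrac12\tau$.
\end{itemize}

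The faulty inference is your resolution of the ``conflict.'' The triangle area was computed correctly. But a universal bound $\rho\ge\tfrac12\tau$ together with a universal bound $\rho\le\tfrac12\tau$ would force $\rho=\tfrac12\tau$ for every pair. Agreement in one example does not remove that contradiction.

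The factor-of-two discrepancy in \emph{both} directions is the real issue. The proposition is true when $\rho_{\rm tv}$ is the total variation \emph{norm} $\int|p_1-p_0|\,d\mu=2\tau$. Moreover, $c=2$ is the only scaling $\rho_{\rm tv}=c\tau$ for which both constants hold: the lower constant is attained, and the upper one is approached as $\tau\to0$.

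Under that reading your arguments already form a complete proof:
\begin{itemize}
\item the threshold-at-$1$ triangle gives $\rho\ge\tfrac12\tau=\tfrac14\rho_{\rm tv}$;
\item the bound $\mathrm{TPR}\le\mathrm{FPR}+\tau$ gives $\rho\le\tau-\tfrac12\tau^2\le\tfrac12\rho_{\rm tv}$.
\end{itemize}
With the sup-over-events normalization, the correct statement is instead $\tfrac12\rho_{\rm tv}\le\rho\le\rho_{\rm tv}$.

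The paper gives no argument of its own here; it defers to Proposition 2.1 of Cai, Lei and Roeder. Its simulation section also reports numbers incompatible with the stated upper bound. There $\rho\approx0.45$ alongside $\rho_{\rm tv}\approx0.69$, which violates $\rho\le\tfrac12\rho_{\rm tv}$. It does lie in $[\tfrac12\tau,\tau-\tfrac12\tau^2]\approx[0.345,0.452]$ for $\tau=0.69$.

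So drop the cross-term refinement, state the normalization of $\rho_{\rm tv}$ explicitly, and write up the two arguments you already have.
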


We remark that $\rho$ may not be a distance/metric as the triangle inequality may not be satisfied. However, Proposition \ref{prop: tv and auc} and the fact that $\rho_{\rm tv}$ satisfies the triangle inequality show that an approximate triangle inequality holds up to constants (i.e., it is a semi-metric).

If we have independent data $(Z_i)_{i=1}^n\sim P_0$, $(Z_j')_{j=1}^n\sim P_1$, (\ref{eq: AUC}) suggests that the rank-sum 
$$
    \frac{1}{n^2} \sum_{ij} \mathds{1}(\hat{g}(Z_i) < \hat{g}(Z'_j)),
$$
is a natural estimator of $\AUC(\hat{g}).$ Since Neyman-Pearson implies that $\AUC(\hat{g}) \leq \AUC(L) = \rho(P_0,P_1) + 0.5$, we can use the above rank-sum statistic based on any distinguisher $\hat g$ to infer a confidence lower bound for the separation $\rho$ which is sufficient to construct an asymptotically valid test for the hypothesis (\ref{eq: tol hypothesis}).  The asymptotic normality of the rank-sum statistic is formally established in Section \ref{sec: Theory} below. In the next subsection, we discuss dealing with the randomness in estimating the distinguisher $\hat g$.

\subsection{Avoid double-dipping}

If the same data is used for estimating the distinguisher and computing the rank sum statistic, we run into a double dipping issue which compromises the validity of the test procedure. We propose two test procedures below to avoid double dipping. The simplest involves sample-splitting and using the separate samples to estimate the distinguisher and evaluate the test statistic. To combat the loss of power from the reduced sample size of sample-splitting, we also propose a cross-fitting procedure which is valid whenever the distinguisher procedure $\widehat{D}$ is stable. Details on the asymptotics can be found in Section \ref{sec: Theory}.

\paragraph*{Sample-split procedure}
\label{sec: ss}
We first introduce a method using sample splitting. We partition $[n]$ into two partitions $\mathcal{I}_{n,1}$ and $\mathcal{I}_{n,2}$ with cardinality $n_1,n_2$. This forms sample splits $\mathcal{D}_{n,1} = \{(X_i,Y_i,Y'_i)\}_{i \in \mathcal{I}_{n,1}}$ and $\mathcal{D}_{n,2} = \{(X_i,Y_i,Y'_i)\}_{i \in \mathcal{I}_{n,2}}$. Let $\hat g_{n,1}$ be the output of the distinguisher procedure $\wh D$ in \eqref{eq:distinguisher} applied to the first split $\mathcal{D}_{n,1}$, i.e. $\hat g_{n,1} := \wh D(\mathcal{D}_{n,1})$. Using the other split $\mathcal{D}_{n,2}$, we can compute a rank sum statistic
\begin{equation*}
    T_{n,split} = \frac{1}{n_2^2} \sum_{i, j \in \mathcal{I}_2} R_{ij}(\mathcal{D}_{n,1}),
\end{equation*}
where we define $n_k = |\mathcal D_{n,k}|$ for $k=1,2$, and
\begin{equation}
\label{eq: rank ss}
    R_{ij}(\mathcal{D}_{n,1}) := \mathds{1}\bigg(\hat g_{n,1}(X_i,Y_i)<\hat g_{n,1}(X_j,Y'_j)\bigg).
\end{equation}

If $\hat{\eta} = \eta$, then $\hat g_{n,1}$ is unable to tell the difference between $(X_i,Y_i)_{i\in\mathcal{I}_{n,2}}$ and $(X_j,Y'_j)_{j\in\mathcal{I}_{n,2}}$, so we would expect $
    T_{n,split} \approx 0.5.
$ On the other hand, if $\wh g_{n,1}$ is able to distinguish the two samples, then on average $\hat g_{n,1}(X_j,Y'_j)$ should be larger than $\hat g_{n,1}(X_i,Y_i)$, so we would expect $T_{n,split} > 0.5$. Thus, we should reject for large values of $T_{n,split}$. To specify the rejection cutoff, we use normal approximation
$$
    \frac{\sqrt{n_2}}{\hat{\sigma}_{n,split}} \big[T_{n,split} - \mu(\mathcal{D}_{n,1})\big] \xrightarrow{d} \N(0,1),
$$
where $\mu(\mathcal D_{n,1})={\rm AUC}(\hat g_{n,1})$ and the scaling $\hat{\sigma}_{n,split}$ is defined in (\ref{eq: var est split}).
Although the centering $\mu(\mathcal{D}_{n,1})$ is a random quantity, we can still use this normal approximation to construct an asymptotically valid cutoff. We know that
$$
    \P\left(\frac{\sqrt{n_2}}{\hat{\sigma}_{n,split}} \big[T_{n,split} - \mu(\mathcal{D}_{n,1})\big] >z_{1-\alpha}\right) \leq \alpha+o(1).
$$
By Neyman-Pearson, we know that the center satisfies
$$
\mu(\mathcal{D}_{n,1})- \frac{1}{2} \leq \rho\bigg(P_X \times \Cat\big(\eta(X)\big), P_X \times \Cat\big(\hat{\eta}(X)\big)\bigg) < \delta,
$$
under the null. Thus, under the null
$$
    \P\left(\frac{\sqrt{n_2}}{\hat{\sigma}_{n,split}} \left[T_{n,split} - \delta-\frac{1}{2}\right] >z_{1-\alpha}\right) \leq \alpha+o(1).
$$
\begin{algorithm}[H]
\caption{Sample-split Procedure}
\label{alg:sample split}
\begin{algorithmic}[1]
\State Input: Data $(X_1,Y_1),...,(X_n,Y_n)$, probabilistic classifier $\hat{\eta}$, radius $\delta>0$, distinguisher procedure $\wh{D}$, nominal level $\alpha>0$.
\State Generate the second sample $(X'_1,Y'_1),...,(X'_n,Y'_n)$ according to (\ref{eq: second sample}).
\State Split the indices $[n]$ into two sets $\mathcal{I}_1$ and $\mathcal{I}_2$, form sample splits $\mathcal{D}_{n,1} = \{(X_i,Y_i,Y'_i)\}_{i \in \mathcal{I}_{n,1}}$ and $\mathcal{D}_{n,2} = \{(X_i,Y_i,Y'_i)\}_{i \in \mathcal{I}_{n,2}}$.
\State Using $\mathcal{D}_{n,1}$, fit the distinguisher $\hat{g}_{n,1} := \wh{D}(\mathcal{D}_{n,1})$.
\State Compute the test statistic $T_{n,split} = \frac{1}{n_2^2} \sum_{i, j \in \mathcal{I}_2} R_{ij}(\mathcal{D}_{n,1})$.
\State Reject if $
    \frac{\sqrt{n_2}(T_{n,split}-\delta-\frac{1}{2})}{\wh{\sigma}_{n,split}} > z_{1-\alpha},
$ where $\wh{\sigma}_{n,split}$ is defined in (\ref{eq: var est split}) and $z_{1-\alpha}$ is the corresponding normal quantile.
\end{algorithmic}
\end{algorithm}

In practice, we may have degenerate test statistics when $\hat g_{n,1}$ has point mass so that $\hat g_{n,1}(X_i,Y_i) = \hat g_{n,1}(X_j,Y'_j)$ with positive probability. In general, such degeneracy can be avoided using a random tie-breaking. Instead of using the ranks $R_{ij}$ defined in \eqref{eq: rank ss},
we first generate iid Uniform$([0,1])$ random variables $\{U_i\}_{i \in [n_{2}]}$ and $\{U'_j\}_{j\in [n_{2}]}$ and use
\begin{align}
    \widetilde{R}_{ij}(\mathcal{D}_{n,1}) := & \mathds{1}\bigg(\hat g_{n,1}(X_i,Y_i)<\hat g_{n,1}(X_j,Y'_j)\bigg)\nonumber\\
    ~~~&+ \mathds{1}\bigg(U_i < U'_j\bigg)\mathds{1}\bigg(\hat g_{n,1}(X_i,Y_i)=\hat g_{n,1}(X_j,Y'_j)\bigg)\,,\label{eq: random tie-breaking}
    \end{align}
instead. This external randomness allows us to effectively treat $\wh g_{n,1}$ as a continuous variable. For brevity, we will use the ranks $R_{ij}$ in the remainder of this paper. The extension to tie-breaking ranks $\widetilde{R}$ just involves additional bookkeeping.

\paragraph*{Cross-fitting procedure}

One major drawback of Algorithm \ref{alg:sample split} is that it requires sample splitting, which may negatively affect the empirical performance of this procedure by reducing the effective sample size. A common strategy to regain some of the lost efficiency in sample splitting is cross-fitting. In this section, we show how to incorporate cross-fitting into the proposed two sample test.

Recall $\mathcal{D}_n:=\{(X_i,Y_i,Y_i')\}_{i=1}^n$ is the augmented dataset and $\wh{D}:(\mathcal{X}\times\mathcal{L}\times\mathcal{L})^n \rightarrow \mathcal{G}$ is a distinguisher procedure. For the cross-fitting procedure with $K$ folds, let $\mathcal{I}_{n,1},...,\mathcal{I}_{n,K}$ be a partition of $[n]$. To simplify the presentation, we will assume that $|\mathcal{I}_{n,k}| = \frac{n}{K}$ for all $k=1,...,K$ and that $\frac{n}{K}$ is an integer. For $k=1,...,K$, we use $
    \mathcal{D}_{n,k} := \{(X_i,Y_i,Y_i')\}_{i \in \mathcal{I}_{n,k}}
$ to denote the data within the $k$-th fold and $\mathcal{D}_{n,-k} := \{(X_i,Y_i,Y_i')\}_{i \not\in \mathcal{I}_{n,k}}$ to denote the data outside of the $k$-th fold. For notation, let $\hat g_{n,-k}$ denote the output of the distinguisher procedure $\wh{D}$ applied to the data outside of the $k$-th fold, i.e. $\hat g_{n,-k} := \wh{D}(\mathcal{D}_{n,-k})$. Define
\begin{equation}
\label{eq: rank cf}
    R_{ij}(\mathcal{D}_{n,-k}) := \mathds{1}\bigg(\hat g_{n,-k}(X_i,Y_i)<\hat g_{n,-k}(X_j,Y'_j)\bigg).
\end{equation}

The cross-fitted version of the rank-sum test statistic is
$$
T_{n,cross} = \frac{1}{K}\sum_{k=1}^K\frac{1}{n_k^2}\sum_{i,j \in \mathcal{I}_k}R_{ij}(\mathcal{D}_{n,-k}),
$$
The major difficulty of the cross-fitted statistic is that the U-statistic kernel changes across the folds and so we should not expect a central limit theorem to hold in the usual way. However, we find that under some standard stability conditions on the distinguisher $\wh{D}$, the cross-fitted test statistic will satisfy, for some appropriately chosen scaling $\hat\sigma_{tr,cross}$,
$$
\frac{\sqrt{n}}{\hat{\sigma}_{tr,cross}}\left(T_{n,cross} - \frac{1}{K}\sum_{k=1}^K \mu(\mathcal{D}_{n,-k})\right) \xrightarrow{d} \N(0,1).
$$

The intuition is that if the estimating procedure is stable, the randomness of the U-statistic kernel is dominated by the sample points $(X_i,Y_i,Y_i')$, rather than the fitting procedure $\wh{D}$.
Similar to the sample split test, the centering is at a random value $\frac{1}{K}\sum_{k=1}^K \mu(\mathcal{D}_{n,-k})$ where
$$
    \mu(\mathcal{D}_{n,k}) - \frac{1}{2} = \AUC\left(\hat g_{n,-k}\right) - \frac{1}{2} \leq \rho\bigg(P_X \times \Cat(\eta(X)), P_X \times \Cat(\hat{\eta}(X))\bigg).
$$
Then under the null, we know that
$$
\frac{1}{K}\sum_{k=1}^K \mu(\mathcal{D}_{n,-k}) \leq \rho\bigg(P_X \times \Cat(\eta(X)), P_X \times \Cat(\hat{\eta}(X))\bigg)+ \frac{1}{2} < \delta + \frac{1}{2}.
$$
Then by similar argument as in the sample split test, we can reject when
$$
\frac{\sqrt{n}}{\hat{\sigma}_{tr,cross}}\left(T_{n,cross} - \delta-\frac{1}{2}\right) > z_{1-\alpha},
$$
where $\hat{\sigma}_{tr,cross}$ is defined in (\ref{eq: var est cross}).
The full cross-fit procedure is described in Algorithm \ref{alg:crossfit}.
\begin{algorithm}[H]
\caption{Cross-fit Procedure}
\label{alg:crossfit}
\begin{algorithmic}[1]
\State Input: Holdout data $(X_1,Y_1),...,(X_n,Y_n)$, probabilistic classifier $\hat{\eta}$, radius $\delta>0$, distinguishing procedure $\wh{D}$, number of splits $K$, nominal level $\alpha>0$.
\State Generate the second sample $(X'_1,Y'_1),...,(X'_n,Y'_n)$ according to (\ref{eq: second sample}).
\State Partition $[n]$ into $K$ folds $\mathcal{I}_{n,1},...,\mathcal{I}_{n,K}$.
\State On each fold $k=1,...,K$, compute the distinguisher $\hat g_{n,-k} := \wh{D}(\mathcal{D}_{n,-k})$.
\State Compute the test statistic $T_{n,cross} = \frac{1}{K}\sum_{k=1}^K\frac{1}{n_k^2}\sum_{i,j \in \mathcal{I}_k}R_{ij}(\mathcal{D}_{n,-k})$, where $R_{ij}$ is defined in (\ref{eq: rank cf}).
\State Reject if $
    \frac{\sqrt{n}(T_{n,cross}-\delta-\frac{1}{2})}{\hat{\sigma}_{tr,cross}} > z_{1-\alpha},
$ where $\hat{\sigma}_{tr,cross}$ is defined in (\ref{eq: var est cross}) and $z_{1-\alpha}$ is the corresponding normal quantile.
\end{algorithmic}
\end{algorithm}
\vspace{-1cm}
\paragraph*{Choice of $\delta$} Both the sample-split and cross-fit procedures, as stated in Algorithms (\ref{alg:sample split}) and (\ref{alg:crossfit}), output whether the hypothesis $H_0$ is rejected by a user prespecified choice of tolerance radius. However, a more interpretable procedure is to output $\delta_{\rm min}(\alpha)$, the smallest radius at which the test does not reject the null at level $\alpha$. By construction, we have
$
    \delta_{\rm min}(\alpha) = (T_{n,cross} -\frac{1}{2} - \frac{\hat{\sigma}_{tr,cross}}{\sqrt{n}}z_{1-\alpha})\vee 0\,,
$
which can be interpreted as a level $1-\alpha$ confidence lower bound of $\rho(P_X\times \eta\,,P_X\times \hat\eta)$.
A smaller $\delta_{\rm min}(\alpha)$ implies that $\hat{\eta}$ is a better classifier. This approach negates the need to prespecify a tolerance radius allowing the user to decide whether the $\delta_{\rm min}(\alpha)$ is sufficiently small for their purpose. We note that this approach is possible by using the normal approximation to derive the rejection cutoffs. In contrast, the use of an optimization program to construct the limiting distribution and rejection cutoffs of the GRASP procedure requires a prespecified radius.

\section{Theoretical Analysis}
\label{sec: Theory}
In this section, we give a theoretical analysis of the two-sample test. In the first part, we prove asymptotic normality of both the sample-split and cross-fit test statistics. In the second part, we derive consistent variance estimators. These two parts rigorously show the validity of Algorithms \ref{alg:sample split} and \ref{alg:crossfit} as detailed in the following theorem.  

\begin{theorem}
\label{thm:main_asymp}
    Suppose that Assumption \ref{as: ss} is satisfied. Then under the null, the sample split test statistic satisfies
    $$
    \P\left(\frac{\sqrt{n_2}}{\hat{\sigma}_{n,split}} \left[T_{n,split} - \delta-\frac{1}{2}\right] >z_{1-\alpha}\right) \leq \alpha+o(1).
$$
    Moreover, supposing Assumption \ref{as: crossfit} is satisfied, under the null the cross-fit test statistic satisfies
$$
\P\left(\frac{\sqrt{n}}{\hat{\sigma}_{tr,cross}}\left[T_{n,cross} - \delta-\frac{1}{2}\right] > z_{1-\alpha}\right) \leq \alpha + o(1).
$$
\end{theorem}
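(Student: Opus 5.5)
The plan is to reduce both claims to a central limit theorem centered at the random AUC of the fitted distinguisher, then use the Neyman--Pearson bound on that center and a ratio-consistency argument for the variance estimator.

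\textbf{Centering reduction.} Under $H_0$, the Neyman--Pearson lemma gives $\mu(\mathcal D_{n,1})\le \rho+\tfrac12<\delta+\tfrac12$, and likewise $\frac1K\sum_k\mu(\mathcal D_{n,-k})<\delta+\tfrac12$. The event $\{\sqrt{n_2}(T_{n,split}-\delta-\tfrac12)/\hat\sigma_{n,split}>z_{1-\alpha}\}$ is therefore contained in $\{\sqrt{n_2}(T_{n,split}-\mu(\mathcal D_{n,1}))/\hat\sigma_{n,split}>z_{1-\alpha}\}$, and the same inclusion holds for the cross-fit statistic. It thus suffices to show that each randomly centered, studentized statistic converges to $\N(0,1)$, or at least satisfies the one-sided bound asymptotically. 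We take the variance estimators $\hat\sigma$ as ratio-consistent (this is the second part of the section). By Slutsky, the problem then reduces to a CLT with a deterministic-given-the-data scaling $\sigma_n$, together with $\hat\sigma/\sigma_n\to1$ in probability and $\sigma_n$ bounded away from zero as required by the Assumptions.

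\textbf{Sample split.} Condition on $\mathcal D_{n,1}$. Then $\hat g_{n,1}$ is fixed, and the triples $(X_i,Y_i,Y_i')_{i\in\mathcal I_2}$ are i.i.d. The statistic $T_{n,split}$ is a U-type statistic of order $(1,1)$ with the bounded kernel $h(w_i,w_j)=\mathds 1(\hat g(X_i,Y_i)<\hat g(X_j,Y_j'))$. The diagonal terms $i=j$ contribute $O(1/n_2)$, so they are negligible. Hoeffding projection gives
\[
T-\mu=\tfrac1{n_2}\sum_i \bigl[h_1(W_i)+h_2(W_i)\bigr]+O_p(n_2^{-1}),
\]
with $h_1(w)=\P(\hat g(x,y)<\hat g(Z'))-\mu$ and $h_2$ defined analogously. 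The projection is a sum of bounded i.i.d. terms. Because its variance $\sigma^2(\mathcal D_{n,1})$ changes with $n$, we apply Berry--Esseen conditionally: it bounds the Kolmogorov distance by $C/(\sqrt{n_2}\,\sigma^3)$. That bound vanishes under the non-degeneracy in Assumption~\ref{as: ss}. Taking expectations over $\mathcal D_{n,1}$ via dominated convergence yields the unconditional bound.

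\textbf{Cross-fit, the main obstacle.} Here the kernel $\hat g_{n,-k}$ depends on data that appears in other folds, so the fold-wise terms are not conditionally independent. I would follow the cross-validation CLT approach (Bayle et al.). Write
\[
T_{n,cross}-\bar\mu=\frac1n\sum_{k}\sum_{i\in\mathcal I_k}\bigl[h_{1}^{(-k)}(W_i)+h_{2}^{(-k)}(W_i)\bigr]+\text{degenerate remainder}.
\]
The degenerate remainder is $O_p(1/n)$ fold by fold, by a conditional second-moment computation. The next step is to replace each $h^{(-k)}$ with a common deterministic function $\bar h$, for example the projection under $\E\hat g$ or under a limiting distinguisher. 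The stability condition in Assumption~\ref{as: crossfit} is precisely what should control this replacement error. The difference $\frac1n\sum_k\sum_{i\in\mathcal I_k}[h^{(-k)}-\bar h](W_i)$ has mean-zero summands given $\mathcal D_{n,-k}$. Its cross-fold covariances can be bounded by a loss-stability quantity, using an Efron--Stein argument that swaps one data point. This gives an $o_p(n^{-1/2})$ error when $n\cdot\gamma_{\rm stab}\to0$. I expect this to be the hardest step. The indicator kernel is not Lipschitz in $\hat g$, so the stability must be stated in terms of the projected functions $h^{(-k)}$, or a margin/anti-concentration condition is needed to convert stability of $\hat g$ into stability of the ranks. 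Once the replacement is done, $\frac1n\sum_i\bar h(W_i)$ is an i.i.d. sum, and Lindeberg--Feller gives $\N(0,\sigma^2_{tr})$. Consistency of $\hat\sigma_{tr,cross}$ together with Slutsky then completes the argument, and combining with the centering reduction yields the stated $\alpha+o(1)$ bound.
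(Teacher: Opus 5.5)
Your centering reduction via Neyman--Pearson, your use of ratio-consistent variance estimators plus Slutsky, and your sample-split argument all coincide with the paper's proof. The sample-split argument in both is a conditional Hoeffding projection, then conditional Berry--Esseen with error $\lesssim 1/(\sigma_{n,split}^3\sqrt{n_2})$, then averaging over $\mathcal D_{n,1}$. One small correction there: Assumption~\ref{as: ss} does not bound $\sigma_{n,split}$ below. You do not need it, since $\sigma_{n,split}\le 1$ and $\sigma^3_{n,split}\sqrt{n_2}\to\infty$ already make the scaled $O_p(n_2^{-1})$ remainder negligible.

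The genuine gap is in the cross-fit part, at the step you flag as hardest. You need the centered replacement error $n^{-1/2}\sum_k\sum_{i\in\mathcal I_{n,k}}[h^{(-k)}(W_i)-\bar h(W_i)]$ to be $o_p(1)$. Assumption~\ref{as: crossfit} is about perturbations of $\hat g_n$, not of the rank projections $\phi,\psi$, so it does not by itself control this error. Something must convert one into the other, and you leave that open, even suggesting the assumption might need restating in terms of $h^{(-k)}$. What you are missing is that part~1 of Assumption~\ref{as: crossfit} (bounded densities of $\hat g_n(X,Y)$ and $\hat g_n(X',Y')$ for every data set) is exactly the anti-concentration condition you say is needed. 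Your ``limiting distinguisher'' choice of $\bar h$ is also unsupported, since nothing assumes $\hat g_n$ converges.

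The paper closes this step as follows. It combines $|\mathds{1}\{a<b\}-\mathds{1}\{a'<b'\}|\le\mathds{1}\{|a-b|\le|a-a'|+|b-b'|\}$ with the density bound $B$ and the sub-Weibull tails of $\nabla_i\hat g_n$, aiming at $\|\nabla_i(\phi+\psi)\|_2=o(n^{-1/2})$ (Lemma~\ref{lem: stability u statistic}). It then invokes Theorem~\ref{thm: stability clt}. That theorem's proof is the Efron--Stein replacement argument with $\bar h(w)=\E[\phi(x,y;\mathcal D_{n,-k})+\psi(x,y';\mathcal D_{n,-k})]$, the expectation taken over the training data only, so that $\var(\bar h(W))=\sigma^2_{tr,cross}$ exactly.

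Your other option also works and needs less: compute the projections with $\bar g:=\E\hat g_{n,-k}$ in place of $\hat g_{n,-k}$, and call the result $\tilde h$. Because $K$ is fixed and, given $\mathcal D_{n,-k}$, the centered summands within fold $k$ are i.i.d.\ with mean zero, no cross-fold covariances need bounding. The second moment of the replacement error is at most $K$ times the mean squared difference of the centered $h^{(-k)}(W)$ and $\tilde h(W)$, so $L_2$-closeness of order $o(1)$ suffices. Efron--Stein applied to $\hat g$ itself gives $\E[(\hat g_{n,-k}(X,Y)-\bar g(X,Y))^2]\lesssim n\,\E[(\nabla_i\hat g_n(X,Y))^2]=O((\log n)^{-2\beta})$, and the same holds at $(X',Y')$. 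Write $\phi_g(x,y)=\P(g(x,y)<g(X',Y'))$. The indicator inequality and the density bound then give $\|\phi_{\hat g_{n,-k}}(X,Y)-\phi_{\bar g}(X,Y)\|_2\le 2B\|\hat g_{n,-k}(X,Y)-\bar g(X,Y)\|_2+2B\epsilon+\P(|\hat g_{n,-k}(X',Y')-\bar g(X',Y')|>\epsilon)^{1/2}$. This tends to zero if $\epsilon\to0$ slowly, using Chebyshev, and the same holds for $\psi$. Finally, Jensen plus part~3 give $\var(\tilde h(W))/\sigma^2_{tr,cross}\to1$.

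This route also avoids a rate subtlety in the loss-stability route. There the tail term enters the $L_2$ norm as $\P(|\nabla_i\hat g_n|>\epsilon)^{1/2}$, which must be $o(n^{-1/2})$ for some $\epsilon=o(n^{-1/2})$. At sub-Weibull scale exactly $\log(n)^{-\beta}n^{-1/2}$, the tail bound only delivers $O(n^{-1/2})$.
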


For asymptotic normality, we do not require any performance guarantees on the distinguisher procedure $\wh{D}$. However, a better distinguisher will lead to better power. In Section \ref{subsec:power}, we explain how the performance of the distinguisher affects the power. In Section \ref{sec: coupled sample}, we show that the dependence structure between the two samples does not hinder estimation of the distinguisher procedure $\wh{D}$ in some typical scenarios. 

\subsection{Asymptotic Normality}
Recall we have constructed two test statistics 
\begin{align*}
T_{n,split} = \frac{1}{n_2^2} \sum_{i, j \in \mathcal{I}_2}R_{ij}(\mathcal{D}_{n,1}), \quad
    T_{n,cross} = \frac{1}{K}\sum_{k=1}^K\frac{1}{n_k^2}\sum_{i,j \in \mathcal{I}_k}R_{ij}(\mathcal{D}_{n,-k})\,,
\end{align*}
where the quantities $R_{ij}(\cdot)$ are defined in \eqref{eq: rank ss} and \eqref{eq: rank cf} respectively and $n_k$ is the size of each fold which is assumed to be equal across all folds.

To show that both test statistics have correct asymptotic coverage, we derive the following normal approximations. 
First, we define projections of the rank-statistic, $\psi,\phi$,  by
\begin{align}
\label{eq:projections}
\begin{split}
    \phi(X,Y;\mathcal{D}) &= \E_{X',Y'}\Big[\mathds{1}\Big(\hat g_{\mathcal{D}}(X,Y)<\hat g_{\mathcal{D}}(X',Y')\Big)\Big]\\
    \psi(X',Y';\mathcal{D}) &= \E_{X,Y}\Big[\mathds{1}\Big(\hat g_{\mathcal{D}}(X,Y)<\hat g_{\mathcal{D}}(X',Y')\Big)\Big],
\end{split}
\end{align}
where $\hat g_{\mathcal{D}}$ is the distinguisher procedure $\wh D$ applied to the data $\mathcal{D}$ and the expectations are taken over $(X,Y)\sim P_X \times \Cat(\eta(X))$ and $(X',Y') \sim P_X \times \Cat(\hat \eta(X))$ respectively. Here we use the notation $\E_Z$ for expectation over $Z$ conditioning on all other randomness. Define the variances  
\begin{align}
    \label{eq: var ss}
    \sigma^2_{n,split} &= \var\Big(\phi(X,Y;\mathcal{D}_{n,1}) + \psi(X,Y';\mathcal{D}_{n,1})\Big|\mathcal{D}_{n,1}\Big)\\
    \label{eq: var crossfit}
    \sigma^2_{tr,cross} &= \var\Big(\E\left[\phi(X,Y;D_{n,-1}) + \psi(X,Y';D_{n,-1})|X,Y,Y'\right]\Big).
\end{align}
where $(X,Y) \sim P_X \times \Cat(\eta(X))$ and $Y'\sim \Cat(\hat{\eta}(X)),$ independent of the data $\mathcal{D}_n$.
In $\sigma^2_{n,split}$, $\mathcal{D}_{n,1}$ is the fitting data. As $\mathcal{D}_{n,1}$ is random, the variance term $\sigma^2_{n,split}$ is also random. In $\sigma^2_{tr,cross}$, we take an expectation over the fitting data $\mathcal{D}_n$. We note that $\sigma^2_{tr,cross}$ is deterministic. In the cross-fitting procedure, we will use notation $\sigma^2_{tr,cross}$ to emphasize that the value of $\sigma^2_{tr,cross}$ only depends on the fitting sample size, which is the same ($n-\frac{n}{K}$) across all folds. Our first result corresponds to the sample split statistic $T_{n,split}$.
\begin{assumption}
\label{as: ss}
Assume that as $n_1,n_2 \rightarrow \infty$, we have that $\sigma^3_{n,split} \sqrt{n_2} \rightarrow \infty$ in probability.
\end{assumption}

The reason for this assumption is that conditional on $\mathcal{D}_{n,1}$, $T_{n,split}$ is a two-sample U-statistic which is asymptotically normal as $n_2 \rightarrow \infty$. Assumption \ref{as: ss} is needed to show that $T_{n,split}$ is unconditionally normal as both $n_1,n_2 \rightarrow \infty$ simultaneously. This assumption is relatively mild, as $\sigma_{n,split}$ typically captures the variability from an independent triplet $(X,Y,Y')$, which should be of a constant order as long as the distingsuisher is non-degenerate. In practice we find that a $50-50$ split works well.

\begin{proposition}
    \label{thm: asymp normality ss}
    Under Assumption \ref{as: ss}, the test statistic $T_{n,split}$ given by Algorithm \ref{alg:sample split} satisfies
    $$
    \frac{\sqrt{n_2}}{\sigma_{n,split}} \Big(T_{n,split} - \mu(\mathcal{D}_{n,1})\Big) \xrightarrow{d} \N(0,1).
$$
\end{proposition}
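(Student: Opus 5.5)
Conditional on $\mathcal{D}_{n,1}$, the distinguisher $\hat g_{n,1}$ is a fixed measurable function. The triplets $\{(X_i,Y_i,Y_i')\}_{i\in\mathcal{I}_{n,2}}$ are i.i.d. and independent of $\mathcal{D}_{n,1}$. Hence $T_{n,split}$ is, conditionally, a V-statistic in $n_2$ i.i.d. triplets with the bounded kernel $h(w_i,w_j)=\mathds{1}(\hat g_{n,1}(x_i,y_i)<\hat g_{n,1}(x_j,y_j'))$, where $w=(x,y,y')$. Note that $R_{ij}$ couples $(X_i,Y_i)$ with $(X_j,Y_j')$, and the diagonal $i=j$ involves the dependent pair $(X_i,Y_i),(X_i,Y_i')$. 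So I treat it as a one-sample V-statistic on triplets rather than a genuine two-sample statistic.

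\textbf{Step 1: Hoeffding decomposition.} Write
\[
T_{n,split}-\mu(\mathcal{D}_{n,1})=\frac{1}{n_2}\sum_{i\in\mathcal{I}_2}\Big[\phi(X_i,Y_i;\mathcal{D}_{n,1})+\psi(X_i,Y_i';\mathcal{D}_{n,1})-2\mu(\mathcal{D}_{n,1})\Big]+\Delta_n .
\]
Here $\Delta_n$ collects two pieces. The first is the degenerate second-order term over $i\ne j$. The second is the diagonal contribution $n_2^{-2}\sum_i(R_{ii}-\mu)$ together with the $O(1/n_2)$ discrepancy between $n_2(n_2-1)$ and $n_2^2$. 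The kernel is bounded by $1$ and the degenerate part is orthogonal across pairs. Therefore $\E[\Delta_n^2\mid\mathcal{D}_{n,1}]\le C/n_2^2$ for an absolute constant $C$. Since $\sqrt{n_2}\Delta_n/\sigma_{n,split}$ has conditional second moment at most $C/(n_2\sigma^2_{n,split})$, it is $o_p(1)$ once $\sigma_{n,split}\sqrt{n_2}\to\infty$ in probability. That condition follows from Assumption~\ref{as: ss} because $\sigma_{n,split}\le 2$, which gives $\sigma^3\sqrt{n_2}\le 4\sigma\sqrt{n_2}$. The linear summands are i.i.d. given $\mathcal{D}_{n,1}$, with mean zero and conditional variance exactly $\sigma^2_{n,split}$.

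\textbf{Step 2: conditional Berry--Esseen.} Apply Berry--Esseen to the normalized linear term conditional on $\mathcal{D}_{n,1}$. The summands are bounded by $2$, so the third absolute moment is at most $2\sigma^2_{n,split}$. This gives
\[
\sup_t\Big|\P\Big(\tfrac{\sqrt{n_2}}{\sigma_{n,split}}\,\bar L_n\le t\,\Big|\,\mathcal{D}_{n,1}\Big)-\Phi(t)\Big|\le \frac{C'\,\E|\xi|^3}{\sigma^3_{n,split}\sqrt{n_2}}\le \frac{C''}{\sigma^3_{n,split}\sqrt{n_2}}.
\]
Assumption~\ref{as: ss} is exactly what makes this bound tend to $0$ in probability. (Using the sharper bound $\E|\xi|^3\le 2\sigma^2$ would only require $\sigma\sqrt{n_2}\to\infty$, but the stated assumption suffices.)

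\textbf{Step 3: uncondition and combine.} Since the conditional Kolmogorov distance is bounded by $1$ and tends to $0$ in probability, dominated convergence moves the limit to the unconditional distribution function. Adding the $o_p(1)$ remainder from Step 1 then finishes the proof by Slutsky's lemma.

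The main obstacle is Step 1. The remainder must be shown negligible \emph{relative to} the random scale $\sigma_{n,split}$, which may shrink. The diagonal and coupled terms must also be handled carefully, since $(X_i,Y_i)$ and $(X_i,Y_i')$ share $X_i$. I would bound the conditional second moments explicitly and then apply conditional Chebyshev to $\sqrt{n_2}\Delta_n/\sigma_{n,split}$.
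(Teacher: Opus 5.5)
Your proposal is correct and follows essentially the paper's route. Like the paper, you condition on $\mathcal{D}_{n,1}$, reduce $T_{n,split}$ to its Hájek projection, apply a conditional Berry--Esseen bound of order $1/(\sigma^3_{n,split}\sqrt{n_2})$, and then average over $\mathcal{D}_{n,1}$ using Assumption~\ref{as: ss}. Two of your steps are more careful than the paper, which simply cites the cross-fit decomposition lemma (whose proof uses a constant lower bound on the variance): you explicitly handle the diagonal $i=j$ terms, and you show the remainder is negligible on the random scale $\sigma_{n,split}$ via $\sigma_{n,split}\sqrt{n_2}\to\infty$.
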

The proof of asymptotic normality of the sample split estimator is similar to previous work such as \citet{caiAsymptoticDistributionFreeIndependence2024a}, and we defer it to Appendix \ref{sec: Proofs}. Next, we give the normal approximation for the cross-fit test statistic under the following stability assumption on $\wh{D}$.

\begin{definition}
    \label{def: subweibull} A random variable $Z$ is $(\kappa,\beta)$ sub-Weibull if for all $t\geq 0$
    $$
        \P\big(|Z|\geq t\big) \leq 2\exp\left(-\left(\frac{t}{\kappa}\right)^{1/\beta}\right).
    $$
\end{definition}

\begin{definition}
    \label{def:stability}
    Let $\mathcal{D}_n = \{Z_1,...,Z_n\}$ consist of iid samples $Z_i$ on sample space $\mathcal{Z}$ and define $\mathcal{D}_n^i = \{Z_1,.,Z_{i-1},Z'_i,Z_{i+1}..,Z_n\}$ to be $\mathcal{D}_n$ where the $i$-th sample $Z_i$ is replaced by an iid copy $Z_i'.$ Let $\ell: \bigcup_{n=1}^\infty \mathcal{Z}^n \rightarrow \R$. Define the perturb-one operator
    $$
        \nabla_i \ell(\mathcal{D}_n) = \ell(\mathcal{D}_n) - \ell(\mathcal{D}_n^i).
    $$
\end{definition}
In the following assumption, we take $Z_i$ in Definition \ref{def:stability} to be the triplet $(X_i,Y_i,Y'_i).$ 

\begin{assumption}
    \label{as: crossfit}
    Let $\wh{D}$ be the distinguisher procedure and $\hat g_n=\wh{D}(\mathcal D_n)$. Let $(X,Y) \sim P_X\times\Cat(\eta(X))$ and $(X',Y') \sim P_X\times\Cat(\hat\eta(X))$ be independent and independent of $\mathcal D_n$. 
    The following hold.
    \begin{enumerate}
        \item For any dataset $\mathcal{D}_n$, $\hat g_n(X,Y)$ and $\hat g_n(X',Y')$ have densities bounded by some $B>0$.
        \item  When $\mathcal D_n$ consists of iid samples from $P_X\times\Cat(\eta(X))\times\Cat(\hat\eta(X))$,
        both $\nabla_i \hat g_n(X,Y)$ and $\nabla_i\hat g_n(X',Y')$ are $(\log(n)^{-\beta}n^{-1/2},\beta)$ sub-Weibull for some positive constant $\beta$. Here the randomness are with respect to both that of $\hat g_n$ and the evaluating sample points $(X,Y)$, $(X',Y')$.
        \item There exists a constant $c>0$ such that $\sigma_{tr,cross}^2 > c >0.$ 
    \end{enumerate}
\end{assumption}

The first part of the assumption is a non-degeneracy condition of the distinguisher procedure. This assumption has been used in previous analysis of this type of rank-sum statistic \citep{huTwoSampleConditionalDistribution2024a}. Part 2 is a stability condition used to establish the cross-fit CLT. We require a stronger notion of sub-Weibull stability as opposed to $L_2$-stability because of the discontinuity of the ranking statistic. Our assumption is equivalent to $L_2$- stability at rate $n^{-1/2}$ up to polylog factors. For typical estimators that involve sample averages (M-estimation), under exchangeability, each sample point should contribute at most $O(1/n)$. We refer the reader to \citet[][Section 6]{leiModernTheoryCrossValidation2025} for references on stability of common algorithms.

We give some intuition for the third part of the assumption. For simplicity, consider a fixed data set $\mathcal{D}$ and the variance term $
    \var(\phi(X,Y;\mathcal{D}) + \psi(X,Y';
    \mathcal{D})\mid\mathcal{D}).
$
By the law of total variance, this is lower bounded by
$\E(\var[\phi(X,Y;\mathcal{D}) + \psi(X,Y';
    \mathcal{D})\mid X,\mathcal{D}]).
$
We can lower bound the conditional variance term by $(\phi(X,1;\mathcal{D})-\phi(X,0;\mathcal{D}))^2 \eta(X)(1-\eta(X))$
so 
\begin{equation*}
    \var\Big(\phi(X,Y;\mathcal{D}) + \psi(X,Y';
    \mathcal{D})\Big|\mathcal{D}\Big) \geq \E\Big[(\phi(X,1;\mathcal{D})-\phi(X,0;\mathcal{D}))^2 \eta(X)(1-\eta(X))\Big].
\end{equation*}

We can lower bound $\E\Big[(\phi(X,1;\mathcal{D})-\phi(X,0;\mathcal{D}))^2 \eta(X)(1-\eta(X))\Big]$ away from zero if on the subspace of $\mathcal{X}$ where $\eta(X)\neq 0,1$, $\hat g_{\mathcal{D}}$ is not degenerate, so that $(\phi(X,1;\mathcal{D})-\phi(X,0;\mathcal{D}))^2 > 0$. In general, degeneracy of $\hat g_{\mathcal{D}}$ is not an issue since if $\hat g_{\mathcal{D}}$ is degenerate, random tie-breaking as in \eqref{eq: random tie-breaking}, guarantees a lower bound for the variance.

\begin{proposition}
    \label{thm: crossfit asymp norm} Under Assumption \ref{as: crossfit}, the statistic $T_{n,cross}$ given by Algorithm \ref{alg:crossfit} satisfies
    $$
        \frac{\sqrt{n}}{\sigma_{tr,cross}} \left(T_{n,cross} - \frac{1}{K}\sum_{k=1}^K\mu(\mathcal{D}_{n,-k})\right) \xrightarrow{d} \N(0,1).
    $$
\end{proposition}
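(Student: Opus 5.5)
We decompose the centered statistic fold by fold, using the Hoeffding decomposition of a two-sample U-statistic conditional on the fitting data. For fold $k$, write $Z_i=(X_i,Y_i,Y_i')$ and set
\begin{equation*}
\frac{1}{n_k^2}\sum_{i,j\in\mathcal I_k}R_{ij}(\mathcal D_{n,-k})-\mu(\mathcal D_{n,-k})
=\frac{1}{n_k}\sum_{i\in\mathcal I_k}\Big[\bar h(Z_i;\mathcal D_{n,-k})-2\mu(\mathcal D_{n,-k})\Big]+\Delta_k ,
\end{equation*}
where $\bar h(Z;\mathcal D)=\phi(X,Y;\mathcal D)+\psi(X,Y';\mathcal D)$. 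The remainder $\Delta_k$ has two parts. The first is the degenerate second-order term. The second comes from the diagonal terms $i=j$: these pair $(X_i,Y_i)$ with $(X_i,Y_i')$, which are dependent, so $\E R_{ii}\neq\mu$. Conditionally on $\mathcal D_{n,-k}$, the degenerate part has variance $O(1/n_k^2)$ because $R_{ij}\in\{0,1\}$. The diagonal contributes $O(1/n_k)$. Both are therefore $o_P(n^{-1/2})$. It then suffices to show
\begin{equation*}
\frac{\sqrt n}{\sigma_{tr,cross}}\cdot\frac1K\sum_{k=1}^K\frac{1}{n_k}\sum_{i\in\mathcal I_k}\Big[\bar h(Z_i;\mathcal D_{n,-k})-2\mu(\mathcal D_{n,-k})\Big]\xrightarrow{d}\N(0,1).
\end{equation*}
Since $n_k=n/K$, this quantity equals $n^{-1/2}\sigma_{tr,cross}^{-1}\sum_{k}\sum_{i\in\mathcal I_k}[\bar h(Z_i;\mathcal D_{n,-k})-2\mu(\mathcal D_{n,-k})]$, which is a cross-validation-type sum of a loss evaluated at held-out points.

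Next we follow the cross-validation CLT route (as in Lei's theory of cross-validation, or Bayle et al.). Define $\bar h^*(Z)=\E[\bar h(Z;\mathcal D_{n,-1})\mid Z]$, where the expectation is over an independent fitting set of size $n-n/K$; note $\var(\bar h^*(Z))=\sigma^2_{tr,cross}$. We replace each $\bar h(Z_i;\mathcal D_{n,-k})-2\mu(\mathcal D_{n,-k})$ by $\bar h^*(Z_i)-\E\bar h^*$. The approximation error is
\begin{equation*}
E_n=n^{-1/2}\sum_k\sum_{i\in\mathcal I_k}\Big\{\bar h(Z_i;\mathcal D_{n,-k})-2\mu(\mathcal D_{n,-k})-\bar h^*(Z_i)+\E\bar h^*\Big\},
\end{equation*}
and we need $\E E_n^2=o(1)$. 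Given this, the classical Lindeberg CLT for the i.i.d. bounded sum $n^{-1/2}\sum_i(\bar h^*(Z_i)-\E\bar h^*)$, together with part 3 of Assumption \ref{as: crossfit} ($\sigma_{tr,cross}^2>c$), yields the result.

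The main obstacle is bounding $\E E_n^2$. Expanding the square, diagonal terms contribute $O(1/K)\cdot$ (a conditional-variance term that vanishes). For the cross terms with $i\neq i'$, the standard argument uses the Efron–Stein / perturb-one bound: $\E E_n^2\lesssim n\,\E[(\nabla_j \tilde\ell)^2]$, where $\tilde\ell$ is the centered held-out evaluation viewed as a function of a fitting point $Z_j$. This requires showing that $\nabla_j\phi(x,y;\mathcal D)$ and $\nabla_j\psi$ are of order $n^{-1/2}$ (up to logs) in $L_2$, with an additional $o(1)$ factor. The difficulty is that $\phi$ is an expectation of an indicator, so stability of $\hat g$ must be transferred through the discontinuity. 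The plan is as follows. Write $|\nabla_j \mathds 1(\hat g(X,Y)<\hat g(X',Y'))|\le \mathds 1(|\hat g(X,Y)-\hat g(X',Y')|\le|\nabla_j\hat g(X,Y)|+|\nabla_j\hat g(X',Y')|)$. On the event that both perturbations are at most $t_n=\kappa_n(\log n)^{\beta}\cdot C$ with $\kappa_n=(\log n)^{-\beta}n^{-1/2}$, the bounded density in part 1 bounds the conditional probability by $4Bt_n$. The sub-Weibull tail in part 2 makes the complement event polynomially small. Because the first-order projections integrate out $(X',Y')$ (respectively $(X,Y)$), this gives $\E|\nabla_j\phi|^2\lesssim t_n^2+n^{-c}$, which is $o(n^{-1})$ only if the log factors are tuned. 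I would choose $t_n=\kappa_n (\log n)^{\beta}\epsilon_n$ for a slowly vanishing $\epsilon_n$ and verify that the tail term $2\exp(-(\log n\,\epsilon_n^{1/\beta}))$ still vanishes faster than $n^{-1}$. If this tuning is too tight, the fallback is the second-order difference bound used in the cross-validation CLT literature, which requires only $\sqrt n\,\|\nabla_j \bar h\|_{L_2}\to0$ in an averaged sense. With $E_n=o_P(1)$ in hand, Slutsky's lemma completes the proof.
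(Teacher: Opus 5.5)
Your overall route is the paper's. It has three steps: a fold-wise Hoeffding decomposition, a cross-validation CLT driven by perturb-one stability, and a transfer of stability from $\hat g$ to $\phi,\psi$ as in Lemma \ref{lem: stability u statistic}. Your handling of the dependent diagonal pairs $i=j$ is correct and more explicit than Lemma \ref{lm: hajek decomposition}. You re-derive Theorem \ref{thm: stability clt} via Efron--Stein and Lindeberg instead of citing it, which is fine.

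The gap is in the stability transfer, which is the only place the rate in Assumption \ref{as: crossfit} enters. The tuning you propose cannot work. With $\kappa_n=(\log n)^{-\beta}n^{-1/2}$ and $t_n=\epsilon_n n^{-1/2}$, the sub-Weibull tail is $2\exp(-(t_n/\kappa_n)^{1/\beta})=2n^{-\epsilon_n^{1/\beta}}$. This is $o(n^{-1})$ only if $\epsilon_n^{1/\beta}>1$ eventually, and then $t_n^2\gtrsim n^{-1}$. So the bound $\E|\nabla_j\phi|^2\lesssim t_n^2+\P(\text{some perturbation}>t_n)$ is at best $O(n^{-1})$, never $o(n^{-1})$. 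The loss comes from bounding the square of a conditional probability over $(X',Y')$ by an unconditional tail probability. A $(\log n)^{-\beta}$ gain in the scale cannot push a tail at level $n^{-1/2}$ below $n^{-1}$. Your fallback is not an alternative, because it requires exactly $\sqrt n\,\norm{\nabla_j\bar h}_2\to0$, which is the statement in question.

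The fix is to use second moments instead of tails, and to apply the density bound at both evaluation points. Write $c=\hat g_n(X,Y)$, $W=\hat g_n(X',Y')$, $\Delta=\nabla_j\hat g_n(X,Y)$ and $\Delta'=\nabla_j\hat g_n(X',Y')$. Then $|\nabla_j\phi|\le\E_{X',Y'}\mathds{1}(|c-W|<2|\Delta|)+\E_{X',Y'}\mathds{1}(|c-W|<2|\Delta'|)$.

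\emph{First term.} $\Delta$ does not depend on $(X',Y')$, so the density bound on $W$ gives at most $4B|\Delta|$.

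\emph{Second term.} Write its square as an expectation over two independent copies $(W_1,\Delta'_1),(W_2,\Delta'_2)$. Integrate out $(X,Y)$ first using the density bound on $c$. This gives at most $4B\min(|\Delta'_1|,|\Delta'_2|)\mathds{1}(|W_1-W_2|<2|\Delta'_1|+2|\Delta'_2|)$. On $\{|\Delta'_1|\le|\Delta'_2|\}$, bound this by $4B|\Delta'_2|\mathds{1}(|W_1-W_2|<4|\Delta'_2|)$ and integrate out copy $1$ using the density of $W_1$; argue symmetrically on the complement. The result is $\lesssim B^2\E[\Delta'^2]$.

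The sub-Weibull condition gives $\norm{\Delta}_2,\norm{\Delta'}_2\lesssim\kappa_n=o(n^{-1/2})$. Hence $\norm{\nabla_j\bar h}_2\lesssim B\kappa_n=o(n^{-1/2})$, which is what both your Efron--Stein step and Theorem \ref{thm: stability clt} need.

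For what it is worth, the paper's proof of Lemma \ref{lem: stability u statistic} uses the same truncation idea. Its choice $\epsilon\asymp(\log n)^{-\beta/2}n^{-1/2}$ leaves a tail term of order $\exp(-\sqrt{\log n})$, which is not $o(n^{-1/2})$. The moment argument above is what actually closes this step.
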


 In the remainder of this subsection, we sketch the proof of asymptotic normality of the cross-fit statistic. The main technical tool is the following central limit theorem for cross-validation with random centering which is given in \citet{leiModernTheoryCrossValidation2025}. 
\begin{theorem}
    \label{thm: stability clt} 
    Let $Z_1,...,Z_n$ be iid random variables supported on some space $\mathcal{Z}$. Let $\mathcal{D}_n:=\{Z_1,...,Z_n\}$ and $Z$ be another iid copy. We split the data into $K$ folds with indices $\mathcal{I}_1,...,\mathcal{I}_K$ each of size $\frac{n}{K}$ and use $\mathcal{D}_{n,-k}$ to denote the data outside of the $k$-th fold. Let $\mathcal{F}$ be a parameter space and let
    $
        \hat{f}:\bigcup_{n=1}^\infty \mathcal{Z}^{n} \rightarrow \mathcal{F}
    $
    be a symmetric estimation procedure. Then for any function $\ell:\mathcal{Z}\times \mathcal{F} \rightarrow \R^{\geq0}$ satisfying the stability condition
    $$
        \norm{\nabla_i \ell(Z,\hat{f}(\mathcal{D}_{n}))}_2 = o\left(\frac{\sigma_n}{\sqrt{n}}\right),
    $$ where
    $
        \sigma_n^2 = \var\left(\E[\ell(Z,\hat{f}(\mathcal{D}_n))|Z]\right) < \infty,
    $ we have, with $n_{tr}:= n(1-1/K)$,
    $$
        \frac{\sqrt{n}}{\sigma_{n_{tr}}}\left(\frac{1}{n}\sum_{k=1}^K\sum_{i\in\mathcal{I}_{n,k}}\ell(Z_i,\hat{f}(\mathcal{D}_{n,-k})) - \frac{1}{K}\sum_{k=1}^K\E_Z \ell(Z,\hat{f}(\mathcal{D}_{n,-k}))\right) \xrightarrow{d} \N(0,1),
    $$
    where $\E_Z$ denotes expectation with respect to $Z$ conditioning on all other randomness.
\end{theorem}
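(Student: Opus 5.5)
The plan is to replace the cross-validated sum by an i.i.d.\ sum of a fixed (deterministic) function of the $Z_i$, and then apply the ordinary CLT. Let $n_{tr}=n(1-1/K)$ and define the averaged loss
$$
\bar\ell_n(z) := \E\big[\ell(z,\hat f(\mathcal D_{n_{tr}}))\big].
$$
Here the expectation is over an independent training set of size $n_{tr}$, so $\sigma_{n_{tr}}^2=\var(\bar\ell_n(Z))$. Write the centered statistic as
$$
W := \sum_{k=1}^K\sum_{i\in\mathcal I_{n,k}}\Big[\ell(Z_i,\hat f(\mathcal D_{n,-k}))-\E_Z\ell(Z,\hat f(\mathcal D_{n,-k}))\Big].
$$
Set $\bar W:=\sum_{i=1}^n[\bar\ell_n(Z_i)-\E\bar\ell_n(Z)]$. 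The statement follows from two claims:
\begin{enumerate}
\item $\E(W-\bar W)^2=o(n\sigma_{n_{tr}}^2)$;
\item $\bar W/(\sqrt n\,\sigma_{n_{tr}})\xrightarrow{d}\N(0,1)$.
\end{enumerate}
Slutsky's lemma then combines them.

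For the first claim, decompose $W-\bar W=\sum_{k=1}^K \Delta_k$, where
$$
\Delta_k=\sum_{i\in\mathcal I_{n,k}}\Big[\ell(Z_i,\hat f_{-k})-\bar\ell_n(Z_i)-\E_Z\ell(Z,\hat f_{-k})+\E\bar\ell_n(Z)\Big].
$$
Since $K$ is fixed, Cauchy--Schwarz gives $\E(W-\bar W)^2\le K\sum_k\E\Delta_k^2$, so it suffices to bound each $\E\Delta_k^2$.

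Conditional on $\mathcal D_{n,-k}$, the summands of $\Delta_k$ are i.i.d.\ in $Z_i$ and have conditional mean zero. Hence
$$
\E\Delta_k^2=\frac nK\,\E\Big[\big(\ell(Z,\hat f_{-k})-\bar\ell_n(Z)-(\E_Z\ell(Z,\hat f_{-k})-\E\bar\ell_n)\big)^2\Big].
$$
Subtracting the $Z$-mean only reduces the second moment, so this is at most
$$
\frac nK\,\E\big[\var\big(\ell(Z,\hat f(\mathcal D_{n_{tr}}))\mid Z\big)\big].
$$
I then apply the Efron--Stein inequality to the training data with $Z$ held fixed:
$$
\var\big(\ell(Z,\hat f(\mathcal D_{n_{tr}}))\mid Z\big)\le \tfrac12\sum_{i=1}^{n_{tr}}\E\big[(\nabla_i\ell(Z,\hat f(\mathcal D_{n_{tr}})))^2\mid Z\big].
$$
Taking expectations and using symmetry of $\hat f$ together with the stability hypothesis (read at training size $n_{tr}\asymp n$) gives $n_{tr}\cdot o(\sigma^2_{n_{tr}}/n)=o(\sigma_{n_{tr}}^2)$. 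Therefore $\E\Delta_k^2=o(n\sigma_{n_{tr}}^2/K)$, and summing over $k$ proves the first claim.

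For the second claim, $\bar W$ is a sum of i.i.d.\ mean-zero terms with variance $n\sigma_{n_{tr}}^2$. The function $\bar\ell_n$ changes with $n$, however, so this is a triangular array and I would verify the Lindeberg condition:
$$
\E\Big[\frac{(\bar\ell_n(Z)-\E\bar\ell_n)^2}{\sigma_{n_{tr}}^2}\mathds 1\big(|\bar\ell_n(Z)-\E\bar\ell_n|>\epsilon\sqrt n\,\sigma_{n_{tr}}\big)\Big]\to0 .
$$
I expect this to be the main obstacle in full generality, since finite variance alone does not give uniform integrability of the normalized squares along $n$. I would first handle the case relevant to this paper: there $\ell$ is a bounded rank indicator (or its projection $\phi+\psi$), so $0\le\bar\ell_n\le 2$, and $\sigma_{n_{tr}}$ is bounded below (Assumption \ref{as: crossfit}, part 3). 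The indicator then vanishes once $\epsilon\sqrt n\,\sigma_{n_{tr}}>2$, and Lindeberg is immediate. For general $\ell$, I would impose uniform integrability of $\{(\bar\ell_n(Z)-\E\bar\ell_n)^2/\sigma_{n_{tr}}^2\}$ as in \citet{leiModernTheoryCrossValidation2025}, or derive it from a uniform $2+\epsilon$ moment bound via Lyapunov's condition. A secondary point to check is that the stability rate, stated at sample size $n$, transfers to size $n_{tr}$; this is immediate since $n_{tr}/n=1-1/K$ is constant.
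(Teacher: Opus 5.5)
The paper gives no proof of this theorem; it quotes it from \citet{leiModernTheoryCrossValidation2025}, so there is no in-paper argument to compare against. Your proposal is the standard proof of such cross-validation CLTs, and each step checks out. Conditionally on $\mathcal D_{n,-k}$, the summands of $\Delta_k$ are i.i.d.\ and centered, so $\E[\Delta_k^2]\le (n/K)\,\E[\var(\ell(Z,\hat f(\mathcal D_{n_{tr}}))\mid Z)]$. The conditional Efron--Stein inequality, with the stability hypothesis read at size $n_{tr}$, makes this $o(n\sigma^2_{n_{tr}})$. Slutsky then reduces everything to a triangular-array CLT for $\sum_i \bar\ell_n(Z_i)$. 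Your Cauchy--Schwarz step over folds costs a factor $K$, which is harmless because $K$ is fixed in the paper.

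Your hesitation at the Lindeberg step is well founded, and it exposes a defect in the statement as written here: without a Lindeberg or uniform-integrability hypothesis, the conclusion is false. Take $Z\sim\mathrm{Unif}[0,1]$, the (symmetric) procedure $\hat f(\mathcal D_m)=m$, and $\ell(z,m)=\mathds{1}(z\le 1/m)$. Then $\nabla_i\ell\equiv 0$, so the stability condition holds trivially, and $\sigma_m^2=m^{-1}(1-m^{-1})$. The normalized statistic equals $(S-n/n_{tr})/(\sqrt n\,\sigma_{n_{tr}})$ with $S\sim\mathrm{Bin}(n,1/n_{tr})$. This converges to $(P-\lambda)/\sqrt\lambda$ with $P\sim\mathrm{Poisson}(\lambda)$ and $\lambda=K/(K-1)$, which is not Gaussian. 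So an extra condition of the kind you propose has to be part of the theorem.

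For the only place the paper uses the result, Proposition~\ref{thm: crossfit asymp norm}, your bounded-case check is exactly what is needed. There $\ell=\phi+\psi\in[0,2]$ and $\sigma^2_{tr,cross}>c$ by Assumption~\ref{as: crossfit}. In fact, for bounded $\ell$ it already suffices that $\sqrt n\,\sigma_{n_{tr}}\to\infty$.
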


There are two major steps we need to be able to apply Theorem \ref{thm: stability clt} to the test statistic $T_{n,cross}$. First, notice that Theorem \ref{thm: stability clt} is stated for iid samples, while the test statistic $T_{n,cross}$ is in the form of a two-sample U-Statistic. We use the following Hoeffding-like decomposition to reduce the U-statistic to the iid setting.
Recall the projections $\phi,\psi$ defined in (\ref{eq:projections}). On each fold $k=1,...,K$, we show that 
\begin{equation*}
    \frac{1}{n_k^2} \sum_{i,j\in\mathcal{I}_{n,k}}R_{ij}(\mathcal{D}_{n,-k}) \approx \frac{1}{n_k}\sum_{i\in\mathcal{I}_{n,k}} \Big(\phi(X_i,Y_i,\mathcal{D}_{n,-k}) + \psi(X_i,Y'_i,\mathcal{D}_{n,-k})\Big),
\end{equation*}
where the right side can be viewed as an average of functions of the the iid triplet $(X_i,Y_i,Y'_i)$. When the two samples are conditionally independent, this decomposition is exactly the Hoeffding decomposition \citep{vaartAsymptoticStatistics1998a}. We verify that the decomposition still holds for our dependency structure. Under the above decomposition, to apply Theorem \ref{thm: stability clt} with the correspondence $Z=(X,Y,Y')$, we want to set the $\ell$ function to be
$$
\ell((x,y,y'),\mathcal D) =   \phi(x,y;\mathcal{D}) + \psi(x,y';\mathcal{D}).
$$
To show the required stability, we show that the stability of the distinguisher procedure given in Assumption \ref{as: crossfit} implies $\norm{\nabla_i [\phi(X,Y;\mathcal{D}_n) + \psi(X,Y';\mathcal{D}_n)]} = o(n^{-1/2})$. With the conditions of Theorem \ref{thm: stability clt} satisfied, the proof of Proposition \ref{thm: crossfit asymp norm} can be reduced to an application of the cross validation CLT. The full proof is given is Appendix \ref{sec: Proofs}.



\subsection{Variance Estimation}
To apply the asymptotic normality results in the previous section, we need to provide consistent estimators of the variance terms. Let's first consider the sample split case
\begin{align*}
\sigma^2_{n,split} &= \var\left(\phi(X,Y;\mathcal{D}_{n,1}) + \psi(X,Y';\mathcal{D}_{n,1})\right).
\end{align*}
Naturally, we suggest using the empirical variance.
The issue is that we need to empirically estimate the U-statistic projections $\phi,\psi$ from data.
Recall the definition of the projections (\ref{eq:projections}).
A natural estimator for the projections $\phi(X,Y;\mathcal{D}_{n,1})$ and $\psi(X,Y';\mathcal{D}_{n,1})$ is thus
\begin{align}
\begin{split}
\label{eq:emp_proj}
    \wh{\phi}(X,Y;\mathcal{D}_{n,1}) &= \frac{1}{n_2} \sum_{i \in \mathcal{I}_{n,2}} \mathds{1}\Big(\hat g_{n,1}(X,Y)<\hat g_{n,1}(X_i,Y'_i)\Big)\\
    \wh{\psi}(X',Y';\mathcal{D}_{n,1}) &=\frac{1}{n_2} \sum_{i \in \mathcal{I}_{n,2}}\mathds{1}\Big(\hat g_{n,1}(X_i,Y_i)<\hat g_{n,1}(X',Y')\Big).
\end{split}
\end{align}
We can use the empirical version of $\sigma_{n,split}^2$ as an estimator of the variance 
\begin{equation}
    \label{eq: var est split}
    \wh{\sigma}^2_{n,split} =  \frac{1}{n_2-1}\sum_{i\in \mathcal{I}_{2,n}} \Big(\wh{\phi}(X_i,Y_i;\mathcal{D}_{n,1}) + \wh{\psi}(X_i,Y_i';\mathcal{D}_{n,1}) - 2T_{n,split}\Big)^2.
\end{equation}

\begin{proposition}
    \label{prop: var est split} Under Assumption \ref{as: ss}, the variance estimator \eqref{eq: var est split} satisfies 
    $
        \frac{\wh{\sigma}_{n,split}^2}{\sigma^2_{n,split}} \xrightarrow{p} 1.
    $
\end{proposition}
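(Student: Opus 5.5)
Throughout, we condition on $\mathcal{D}_{n,1}$, so that $\hat g_{n,1}$ is a fixed function and the triplets $(X_i,Y_i,Y_i')_{i\in\mathcal{I}_{n,2}}$ are iid. Write $a_i:=\phi(X_i,Y_i;\mathcal{D}_{n,1})+\psi(X_i,Y_i';\mathcal{D}_{n,1})$. The $a_i$ are iid, take values in $[0,2]$, and have mean $2\mu(\mathcal{D}_{n,1})$ and variance $\sigma^2_{n,split}$. The proof compares $\wh\sigma^2_{n,split}$ with the oracle estimator
$$
\tilde\sigma^2:=\frac{1}{n_2-1}\sum_{i\in\mathcal I_{n,2}}\big(a_i-2\mu(\mathcal D_{n,1})\big)^2 .
$$
It is enough to prove $\wh\sigma^2_{n,split}-\sigma^2_{n,split}=O_p(n_2^{-1/2})$, because Assumption \ref{as: ss} gives $\sigma^2_{n,split}\gg n_2^{-1/3}$ and hence $n_2^{-1/2}/\sigma^2_{n,split}\to0$ in probability. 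Conditional probability bounds of the form $O_p(n_2^{-1/2})$ hold uniformly in $\mathcal D_{n,1}$, because every bound below depends only on boundedness of the kernel. They therefore transfer to unconditional statements by dominated convergence.

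\textbf{Step 1 (oracle).} The summands $(a_i-2\mu)^2\in[0,4]$ are iid with mean $\sigma^2_{n,split}$. Chebyshev's inequality then gives $\tilde\sigma^2-\sigma^2_{n,split}=O_p(n_2^{-1/2})$, up to the harmless factor $n_2/(n_2-1)$.

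\textbf{Step 2 (empirical projections).} Set $\hat a_i=\wh\phi(X_i,Y_i)+\wh\psi(X_i,Y_i')$. The main point is
$$
\frac1{n_2}\sum_i(\hat a_i-a_i)^2=O_p(n_2^{-1}).
$$
Write $\wh\phi(X_i,Y_i)-\phi(X_i,Y_i)=\frac1{n_2}\sum_j\big[h(i,j)-\phi(X_i,Y_i)\big]$, where $h(i,j)=\mathds 1(\hat g(X_i,Y_i)<\hat g(X_j,Y_j'))$. Split this sum into the diagonal term $j=i$, which contributes $O(1/n_2)$, and the terms $j\neq i$. For $j\ne i$, conditionally on $(X_i,Y_i)$, the summands are iid, mean zero and bounded, because $(X_j,Y_j')$ is independent of $(X_i,Y_i)$ and has law $P_X\times\Cat(\hat\eta)$. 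This is the only place where the coupled structure $X_j'=X_j$ matters, and it is harmless since $j\ne i$. Their second moment is therefore $O(1/n_2)$. The same argument applies to $\wh\psi$, so $\E[(\hat a_i-a_i)^2\mid\mathcal D_{n,1}]=O(1/n_2)$, and Markov's inequality gives the claim.

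\textbf{Step 3 (centering and assembly).} First, $T_{n,split}-\mu=O_p(n_2^{-1/2})$. This follows from the variance of a bounded two-sample U-statistic, or directly from Proposition \ref{thm: asymp normality ss} together with $\sigma_{n,split}\le2$. Then write
$$
\hat a_i-2T_{n,split}=(a_i-2\mu)+(\hat a_i-a_i)-2(T_{n,split}-\mu).
$$
By the Cauchy--Schwarz inequality,
$$
\big|\wh\sigma_{n,split}-\tilde\sigma\big|\le\Big(\tfrac{1}{n_2-1}\textstyle\sum_i(\hat a_i-a_i)^2\Big)^{1/2}+2|T_{n,split}-\mu|\sqrt{\tfrac{n_2}{n_2-1}}=O_p(n_2^{-1/2}).
$$
Combined with Step 1, and since $\sigma_{n,split}\le 2$, this gives $\wh\sigma_{n,split}-\sigma_{n,split}=O_p(n_2^{-1/2})$.

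The main obstacle is the last division: passing from this additive error to a ratio. Dividing by $\sigma_{n,split}$ and using Assumption \ref{as: ss} in the form $\sigma_{n,split}\sqrt{n_2}\ge\sigma^3_{n,split}\sqrt{n_2}/4\to\infty$ gives $\wh\sigma_{n,split}/\sigma_{n,split}\to_p1$, and squaring gives the stated result. The care needed is in writing the Step 2 bound with the precise $1/n_2$ rate rather than merely $o_p(1)$. A cruder $o_p(1)$ error would not survive division by a possibly vanishing $\sigma_{n,split}$.
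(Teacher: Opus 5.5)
Your proposal is correct and has the same two-stage structure as the paper. Both compare $\wh\sigma^2_{n,split}$ with an oracle built from the true projections $\phi,\psi$, then control the cost of replacing them by $\wh\phi,\wh\psi$. The key estimate in the second stage is different, though.

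\textbf{The paper's route.} The paper's argument (written for the cross-fit case and said to adapt to one split) treats $\wh\phi,\wh\psi$ as empirical survival/CDF functions of $\hat g_{n,1}$ evaluated at a point. It invokes DKW to get a uniform bound $\max_i|\hat a_i-a_i|=O_p(n_2^{-1/2})$ and bounds differences of squared summands term by term. It then finishes by dividing by a constant lower bound on the variance. That lower bound is available for $\sigma^2_{tr,cross}$ but not for $\sigma^2_{n,split}$, and the paper does not spell out the replacement.

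\textbf{Your route.} You compute the conditional second moment directly, separating the diagonal $j=i$ term and using independence of the triplets for $j\neq i$. This gives the averaged bound $\frac{1}{n_2}\sum_i(\hat a_i-a_i)^2=O_p(n_2^{-1})$ without DKW. You then work at the level of root-mean-squares; the step is really Minkowski's inequality rather than Cauchy--Schwarz. You also treat the random centering $2T_{n,split}$ and a possibly vanishing $\sigma_{n,split}$ explicitly through Assumption \ref{as: ss}.

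\textbf{Trade-off.} DKW gives the paper uniformity in $i$ for free, so the coupling $X_j'=X_j$ needs no bookkeeping. Your computation is elementary and self-contained, and its rate survives division by $\sigma_{n,split}$ under only $\sigma_{n,split}\sqrt{n_2}\to\infty$.

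\textbf{One slip in the final assembly.} Step 1 as written gives only $\tilde\sigma^2-\sigma^2_{n,split}=O_p(n_2^{-1/2})$. From this you cannot conclude $\tilde\sigma-\sigma_{n,split}=O_p(n_2^{-1/2})$ by invoking $\sigma_{n,split}\le 2$. An upper bound is the wrong direction; in general you only have $|\tilde\sigma-\sigma_{n,split}|\le|\tilde\sigma^2-\sigma^2_{n,split}|/\sigma_{n,split}$, and $\sigma_{n,split}$ may vanish. Either of two fixes closes this:
\begin{itemize}
\item Combine at the level of squares, as your first paragraph already proposes. Since $\wh\sigma_{n,split}$ and $\tilde\sigma$ are bounded, $|\wh\sigma^2_{n,split}-\tilde\sigma^2|\le(\wh\sigma_{n,split}+\tilde\sigma)\,|\wh\sigma_{n,split}-\tilde\sigma|=O_p(n_2^{-1/2})$. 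This is where boundedness is actually what you need.
\item Sharpen Step 1 using $\var\big((a_i-2\mu(\mathcal D_{n,1}))^2\mid\mathcal D_{n,1}\big)\le \E\big[(a_i-2\mu(\mathcal D_{n,1}))^4\mid\mathcal D_{n,1}\big]\le 4\sigma^2_{n,split}$. This gives $\tilde\sigma^2-\sigma^2_{n,split}=O_p(\sigma_{n,split}\,n_2^{-1/2})$, hence $\tilde\sigma-\sigma_{n,split}=O_p(n_2^{-1/2})$ as you claimed.
\end{itemize}
With either fix, the division step you describe goes through.
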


Next, we move on to estimating the variance $\sigma^2_{tr,cross}$. The key idea is to leverage Efron-Stein inequality which says that the contribution to the variance by the evaluation variables $X,Y,Y'$ dominates the contribution to the variance by the fitting data $\mathcal{D}_{n,-k}$ under our stability assumption. This implies that we can aggregate the empirical variance across the $K$ folds. For each fold $k=1,...,K$, we can estimate the variance 
$$
    \wh{\sigma}^2_{n,-k} = \frac{1}{n_k-1}\sum_{i\in \mathcal{I}_{n,k}} \left(\wh{\phi}(X_i,Y_i;\mathcal{D}_{n,-k})+\wh{\psi}(X_i,Y'_i;\mathcal{D}_{n,-k}) - 2T_{n,-k}\right)^2,
$$
where 
$
    T_{n,-k} = \frac{1}{n_k^2}\sum_{i,j \in \mathcal{I}_k}R_{ij}(\mathcal{D}_{n,-k})
$ and $\wh{\phi}(X,Y;\mathcal{D}_{n,-k})$, $\wh{\psi}(X',Y';\mathcal{D}_{n,-k})$ are the corresponding cross-fit versions of \eqref{eq:emp_proj}. We can define the final estimate by aggregation
\begin{equation}
    \label{eq: var est cross}
    \wh{\sigma}^2_{tr,cross} = \frac{1}{K} \sum_{k=1}^K \wh{\sigma}^2_{n,-k}\,.
\end{equation}

\begin{proposition}
    \label{prop: var est cross}
    Under Assumption \ref{as: crossfit}, the crossfit variance estimator satisfies 
    $
        \frac{\wh{\sigma}_{tr,cross}^2}{\sigma^2_{tr,cross}} \xrightarrow{p} 1.
    $
\end{proposition}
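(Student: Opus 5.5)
The plan is to show that each per-fold estimator $\wh{\sigma}^2_{n,-k}$ satisfies $\wh{\sigma}^2_{n,-k} - \sigma^2_{tr,cross} = o_p(1)$. Averaging over the $K$ folds and using $\sigma^2_{tr,cross} > c$ from Assumption \ref{as: crossfit} then gives the ratio result. By symmetry it suffices to treat $k=1$. Introduce the intermediate random quantity
$$\sigma^2_{n,-1} := \var\big(\phi(X,Y;\mathcal{D}_{n,-1}) + \psi(X,Y';\mathcal{D}_{n,-1}) \,\big|\, \mathcal{D}_{n,-1}\big),$$
which is the cross-fit analogue of $\sigma^2_{n,split}$ with $\mathcal{D}_{n,1}$ replaced by $\mathcal{D}_{n,-1}$. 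Then split
$$\wh{\sigma}^2_{n,-1} - \sigma^2_{tr,cross} = \big(\wh{\sigma}^2_{n,-1} - \sigma^2_{n,-1}\big) + \big(\sigma^2_{n,-1} - \sigma^2_{tr,cross}\big)$$
and bound the two terms separately.

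\emph{First term.} Conditional on $\mathcal{D}_{n,-1}$, the fold-$1$ triplets are iid and independent of the fitted $\hat g_{n,-1}$. This is exactly the sample-split setting, with fitting data $\mathcal{D}_{n,-1}$ and evaluation data $\mathcal{D}_{n,1}$. I would therefore reuse the argument of Proposition \ref{prop: var est split}. Since the kernel is bounded in $[0,1]$, one has $\sup_{x,y}|\wh\phi - \phi| = o_p(1)$ and likewise for $\wh\psi$, either by a DKW-type bound for the empirical cdf of $\hat g_{n,-1}(X_i,Y_i')$, or simply through second moments at the sample points, where leave-one-out corrections are $O(1/n_k)$. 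The same control gives $T_{n,-1} - \mu(\mathcal{D}_{n,-1}) = o_p(1)$. A conditional law of large numbers for the bounded iid summands $\phi + \psi$ then yields $\wh{\sigma}^2_{n,-1} - \sigma^2_{n,-1} = o_p(1)$. Boundedness means this step needs no lower bound on the variance beyond $\sigma^2_{tr,cross} > c$.

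\emph{Second term.} This is the main step and the main obstacle: showing that the random variance $\sigma^2_{n,-1}$ concentrates at the deterministic $\sigma^2_{tr,cross}$. Write $h(Z;\mathcal{D}) = \phi(X,Y;\mathcal{D}) + \psi(X,Y';\mathcal{D})$ with $Z=(X,Y,Y')$, and set $\bar h(Z) = \E[h(Z;\mathcal{D}_{n,-1}) \mid Z]$, so that $\sigma^2_{tr,cross} = \var(\bar h(Z))$. Then
$$\sigma^2_{n,-1} = \E_Z\big[(h-\E_Z h)^2\big] \quad\text{and}\quad |\sigma_{n,-1} - \sigma_{tr,cross}| \le \big(\E_Z[(h - \bar h - \E_Z(h-\bar h))^2]\big)^{1/2},$$
by the triangle inequality in $L_2(P_Z)$. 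Taking expectations over $\mathcal{D}_{n,-1}$, it suffices to show that $\E\,\var\big(h(Z;\mathcal{D}_{n,-1}) \mid Z\big) \to 0$. By Efron--Stein applied to the fitting data with $Z$ held fixed,
$$\var\big(h(Z;\mathcal{D}_{n,-1}) \mid Z\big) \le \tfrac12 \sum_{i\notin\mathcal{I}_1} \E\big[(\nabla_i h)^2 \mid Z\big].$$
Consequently $\E\,\var(h \mid Z) \lesssim n\, \norm{\nabla_i h}_2^2$. The proof of Proposition \ref{thm: crossfit asymp norm} already shows, using parts 1 and 2 of Assumption \ref{as: crossfit} (bounded densities converting sub-Weibull perturbations of $\hat g$ into perturbations of the rank probabilities, with a $\log n$ cushion absorbing the indicator discontinuity), that $\norm{\nabla_i h}_2 = o(n^{-1/2})$. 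Hence $\E\,\var(h \mid Z) = o(1)$, so $\sigma^2_{n,-1} - \sigma^2_{tr,cross} = o_p(1)$ by Markov's inequality. If reusing that stability bound proves delicate, I would first re-derive it directly: bound $|\nabla_i \phi(x,y;\mathcal{D})|$ by $P'\big(|\hat g(X',Y') - \hat g(x,y)| \le |\nabla_i\hat g(X',Y')| + |\nabla_i \hat g(x,y)|\big)$ and use the density bound $B$ together with the sub-Weibull tails.

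Combining the two terms gives $\wh{\sigma}^2_{n,-k}/\sigma^2_{tr,cross} \to 1$ in probability for each $k$, and therefore the same holds for their average $\wh{\sigma}^2_{tr,cross}$.
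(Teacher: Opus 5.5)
Your proposal is correct and is essentially the paper's argument. The paper likewise combines a DKW bound, which replaces $\hat\phi,\hat\psi$ by $\phi,\psi$, with an Efron--Stein/stability bound showing that the fitting randomness in $\mathcal{D}_{n,-k}$ is negligible. The only difference is the intermediate quantity: the paper uses $\tilde\sigma^2_{tr,cross}$ (the true projections averaged over the evaluation folds) rather than your conditional variance $\sigma^2_{n,-1}$, so the law of large numbers sits in the paper's Step 1 instead of your first term.

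Your explicit reverse-triangle-inequality plus Efron--Stein computation, $\E\,\var(h\mid Z)\le \tfrac12 n_{tr}\|\nabla_i h\|_2^2=o(1)$, spells out what the paper delegates to Theorem 4.6 of \citet{leiModernTheoryCrossValidation2025}. Like the paper, it rests on the $o(n^{-1/2})$ rate of Lemma \ref{lem: stability u statistic}.
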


The CLTs given in Propositions \ref{thm: asymp normality ss}, \ref{thm: crossfit asymp norm} with the consistent variance estimates given in Propositions \ref{prop: var est split}, \ref{prop: var est cross} imply Theorem \ref{thm:main_asymp}, prove the asymptotic validity of both testing procedures.

\subsection{Power}\label{subsec:power}
So far we have established type-1 validity under mild assumptions on either the sample sizes in each split for the sample split procedure or stability for the cross-fit procedure to ensure. In particular, the quality of the distinguisher does not play a role in the validity of the test. In this subsection, we show that the quality of the distinguisher is important for the power. At a high level, the main source of conservativeness in our test is that we target a stochastic lower bound of $\AUC(L)$ by $\AUC(\hat g)$ where $\AUC(L)$ is the area under the ROC curve of the true likelihood ratio $L$. The following result, which follows from \citet{caiAsymptoticDistributionFreeIndependence2024a} [Supplement, Lemma 3] and (\ref{eq: AUC}), characterizes this conservativeness.

\begin{proposition}
\label{prop:auc_cons}
    For any distinguisher $\wh{g},$ with likelihood ratio transformation $\wh{L}=\wh{g}/(1-\wh{g})$, we have
    \begin{equation*}
        \AUC(L) - \AUC(\wh{L}) \leq 2\norm{\wh{L}-L}_{L_1}.
    \end{equation*}
\end{proposition}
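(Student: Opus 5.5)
The plan is to write both AUCs as expectations over two independent draws from $P_0$, weighted by the true likelihood ratio, and then compare the two ranking rules pointwise. Throughout, $\norm{\cdot}_{L_1}$ is taken with respect to $P_0$, and $P_1\ll P_0$ with $L=dP_1/dP_0$. If $P_1\not\ll P_0$, one can take $\norm{\wh L-L}_{L_1}=\infty$, or run the same argument against the dominating measure $(P_0+P_1)/2$. Ties of $\wh L$ are handled by the random tie-breaking of \eqref{eq: random tie-breaking}, so I treat $\wh L$ as effectively tie-free.

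\textbf{Step 1: a symmetric representation of the AUC.} Let $Z,W\stackrel{iid}{\sim}P_0$. For any score $h$, the identity \eqref{eq: AUC} together with a change of measure in the second coordinate gives
$$\AUC(h)=\P(h(Z)<h(Z'))=\E\big[\mathds 1(h(Z)<h(W))\,L(W)\big].$$
Exchanging the roles of $Z$ and $W$ and averaging the two expressions yields
$$\AUC(h)=\tfrac12\,\E\big[\mathds 1(h(Z)<h(W))L(W)+\mathds 1(h(W)<h(Z))L(Z)\big].$$
This holds when $h$ has no ties, which I ensure by tie-breaking. For a pair $\{Z,W\}$, the bracket equals the $L$-value of whichever point $h$ ranks higher.

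\textbf{Step 2: pointwise comparison.} Apply Step 1 with $h=L$. Ties of $L$ contribute equally to either ordering, so they are harmless. The ranking by $L$ always selects $\max(L(Z),L(W))$. Taking the difference of the two representations gives
$$\AUC(L)-\AUC(\wh L)=\tfrac12\,\E\Big[|L(W)-L(Z)|\,\mathds 1\big(\wh L \text{ and } L \text{ order } Z,W \text{ oppositely}\big)\Big].$$
This identity is the Neyman–Pearson optimality made quantitative.

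\textbf{Step 3: control the misordered pairs.} Suppose the pair is misordered, say $L(W)>L(Z)$ but $\wh L(W)\le \wh L(Z)$. Then
$$0<L(W)-L(Z)\le \big(L(W)-\wh L(W)\big)+\big(\wh L(Z)-L(Z)\big)\le |L(W)-\wh L(W)|+|L(Z)-\wh L(Z)|.$$
Substituting this into Step 2 and using that $Z$ and $W$ are both distributed as $P_0$ gives
$$\AUC(L)-\AUC(\wh L)\le \tfrac12\cdot 2\,\norm{\wh L-L}_{L_1(P_0)}=\norm{\wh L-L}_{L_1(P_0)}\le 2\norm{\wh L-L}_{L_1}.$$
The last inequality is what the proposition states; the factor $2$ leaves slack. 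That slack would absorb a different normalization, for example an $L_1$ norm under the mixture $(P_0+P_1)/2$, or the contribution of ties.

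\textbf{Main obstacle.} The delicate step is the bookkeeping in Step 1, namely the change of measure together with the handling of ties and of possible singular parts of $P_1$. The inequality itself reduces to the elementary triangle-inequality argument of Step 3. I would therefore first prove the result for tie-free $\wh L$ with $P_1\ll P_0$, and then extend it by randomized tie-breaking.
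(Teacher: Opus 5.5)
Your argument is correct, and it takes a different route from the paper. The paper gives no proof of its own; it obtains the proposition by combining the representation \eqref{eq: AUC} with Lemma 3 in the supplement of \citet{caiAsymptoticDistributionFreeIndependence2024a}.

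You instead prove it directly. You rewrite the AUC as an expectation over two i.i.d.\ $P_0$ draws, weighted by $L$ at the second draw. You symmetrize so that each pair contributes the $L$-value of whichever point is ranked higher, and note that ranking by $L$ always collects $\max\{L(Z),L(W)\}$. Finally you charge each misordered pair at most $|L(Z)-\widehat L(Z)|+|L(W)-\widehat L(W)|$. This is self-contained and makes Neyman--Pearson optimality quantitative. It also gives the sharper bound $\AUC(L)-\AUC(\widehat L)\le\|\widehat L-L\|_{L_1(P_0)}$, so the factor $2$ is slack. The citation buys only brevity.

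Your route also shows that the two conventions you adopt are necessary, not merely technical.
\begin{itemize}
\item Tie-breaking is needed. With the strict-inequality AUC of \eqref{eq: AUC} and atomless $L$, the constant $\widehat L\equiv 1$ has AUC $0$. Yet $\AUC(L)\ge 1/2$, while $\|\widehat L-L\|_{L_1(P_0)}=\E_{P_0}|L-1|$ can be arbitrarily small.
\item Without $P_1\ll P_0$ the bound fails outright. For $\eta\equiv(1,0)$ and $\hat\eta\equiv(1/2,1/2)$, one has $L(x,0)=1/2$ and $L(x,1)=\infty$. Choosing $\widehat L(x,0)=1/2$, $\widehat L(x,1)=0$ gives $\AUC(L)-\AUC(\widehat L)=3/4-1/4=1/2$, while $\|\widehat L-L\|_{L_1(P_0)}=0$.
\end{itemize}
So declaring the norm infinite in that case is a redefinition rather than a proof. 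Redoing your argument under $(P_0+P_1)/2$ also does not rescue it: it produces an extra term from the singular part of $P_1$, not the stated bound.

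In the paper's setting, the needed hypothesis is that $\hat\eta_y(x)=0$ whenever $\eta_y(x)=0$, for $P_X$-a.e.\ $x$. It should be stated as an assumption rather than left as a remark.
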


Here $\|\cdot\|_{L_1}$ denotes the $L_1$-norm under the distribution induced by the random pair $(X,Y)$. The above gives support of using the cross-fitting procedure over sample-splitting. As cross-fitting effectively allows for a larger sample size to train $\hat g$, we expect the quality of $\hat g$ to be better when cross-fitting leading to a less conservative test.

Consider the case where we want to evaluate a fixed classifier $\hat{\eta}$ with true separation $\rho(P_\eta,P_{\hat{\eta}}) = \delta^*,$ where we use $P_\eta,P_{\hat{\eta}}$ as shorthand for the distributions corresponding to the true $\eta$ and the classifier $\hat{\eta}$. The key question we want to answer is: 
\begin{quote}
What is the largest $\delta \in [0,\delta^*]$ such that we have nontrivial power against the null
$
    H_0 : \rho(P_\eta,P_{\hat{\eta}} ) < \delta?
$
\end{quote}

Using Proposition \ref{prop:auc_cons}, the following corollary shows how to answer this question given the quality of the distinguisher. For brevity, we focus on the cross-fit procedure and note that an analogous result can be stated for the sample-split procedure. For notation we use $K$ folds with each fold having equal size $n_k=n/K$. For the given distinguisher $\wh D$, we use $\hat g_{n,-k}$ to denote the distinguisher applied to the data outside the $k$-th fold. Let $\wh L_{n,-k}=\hat g_{n,-k}/(1-\hat g_{n,-k})$ be the corresponding likelihood ratio estimate.


\begin{corollary}
    \label{cor:power}
    Consider the $K$-fold cross-fit test statistic with a constant $K$. Suppose Assumption \ref{as: crossfit} holds and that $\P\Big(\norm{\wh{L}_{n,-k}-L}_{L_1} > r_n \Big) \rightarrow 0$ for some sequence of rates $r_n \geq 0$. Then choosing tolerance parameter $\delta_n$ such that $\delta^*-\delta_n - 2r_{n} = \omega(1/\sqrt{n})$,  the cross-fit test statistic diverges in probability
    \begin{equation*}
        \frac{\sqrt{n}}{\hat{\sigma}_{tr,cross}}\Big(T_{n,cross} - \delta_n - \frac{1}{2}\Big)\rightarrow \infty.
    \end{equation*}
    As a consequence, the rejection rate converges to $1$.
\end{corollary}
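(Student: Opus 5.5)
The plan is to split the standardized statistic into a centered part and a bias part:
\begin{align*}
\frac{\sqrt{n}}{\hat{\sigma}_{tr,cross}}\Big(T_{n,cross} - \delta_n - \tfrac{1}{2}\Big)
&= \frac{\sigma_{tr,cross}}{\hat{\sigma}_{tr,cross}}\cdot\frac{\sqrt{n}}{\sigma_{tr,cross}}\Big(T_{n,cross} - \tfrac{1}{K}\textstyle\sum_{k}\mu(\mathcal{D}_{n,-k})\Big)\\
&\quad + \frac{\sqrt{n}}{\hat{\sigma}_{tr,cross}}\Big(\tfrac{1}{K}\textstyle\sum_{k}\mu(\mathcal{D}_{n,-k}) - \delta_n - \tfrac{1}{2}\Big).
\end{align*}
Proposition \ref{thm: crossfit asymp norm} gives the first factor $\frac{\sqrt n}{\sigma_{tr,cross}}(\cdots)\xrightarrow{d}\N(0,1)$. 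Proposition \ref{prop: var est cross} gives $\sigma_{tr,cross}/\hat\sigma_{tr,cross}\xrightarrow{p}1$. So the first term is $O_p(1)$, and it suffices to show that the second term diverges to $+\infty$ in probability. I am using here that both propositions hold under Assumption \ref{as: crossfit} alone, not only under the null, since their proofs never invoke $H_0$.

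For the bias term, recall $\mu(\mathcal{D}_{n,-k})=\AUC(\hat g_{n,-k})=\AUC(\wh L_{n,-k})$, because AUC is invariant under the monotone map $g\mapsto g/(1-g)$. Also $\AUC(L)=\delta^*+\tfrac12$ by definition of $\rho$. Proposition \ref{prop:auc_cons} then gives, for each $k$,
$$\mu(\mathcal{D}_{n,-k})-\tfrac12 \ge \delta^* - 2\norm{\wh L_{n,-k}-L}_{L_1}.$$
Since $K$ is constant, a union bound over the $K$ folds shows that the event $E_n=\{\max_k \norm{\wh L_{n,-k}-L}_{L_1}\le r_n\}$ has probability tending to one. (Each fold uses a training set of the same size, so the stated rate assumption applies to every $k$.) On $E_n$,
$$\tfrac1K\textstyle\sum_k\mu(\mathcal{D}_{n,-k}) - \delta_n - \tfrac12 \ge \delta^*-\delta_n-2r_n.$$

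Next, the scale. Part 3 of Assumption \ref{as: crossfit} gives $\sigma_{tr,cross}^2>c$. Also $\sigma_{tr,cross}^2\le 1$, since $\phi+\psi\in[0,2]$ and so its variance is at most $1$. Consistency of the ratio then gives $\hat\sigma_{tr,cross}\in[\sqrt{c}/2,\,2]$ with probability tending to one. On this event intersected with $E_n$, the bias term is at least $\tfrac12\sqrt n(\delta^*-\delta_n-2r_n)$, which tends to infinity because $\delta^*-\delta_n-2r_n=\omega(n^{-1/2})$.

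Putting the pieces together, an $O_p(1)$ term plus a term that diverges in probability diverges in probability. In particular the statistic exceeds the fixed cutoff $z_{1-\alpha}$ with probability tending to one, so the rejection rate converges to $1$.

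The main obstacle I expect is justifying that the CLT and variance-consistency results hold under the alternative, where $\rho=\delta^*>\delta_n$. My plan is to check that their proofs use only Assumption \ref{as: crossfit}, which concerns stability and non-degeneracy of $\wh D$ and makes no reference to $H_0$. The random centering $\frac1K\sum_k\mu(\mathcal D_{n,-k})$ is exactly what makes them free of the hypothesis.
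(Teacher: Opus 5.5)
Your proposal is correct and follows essentially the same route as the paper. You reduce the centered part to $Z+o_p(1)$ via Propositions \ref{thm: crossfit asymp norm} and \ref{prop: var est cross}, which indeed do not use $H_0$. You then lower bound the bias $\frac{1}{K}\sum_k \mu(\mathcal{D}_{n,-k})-\delta_n-\frac12$ by $\delta^*-\delta_n-2r_n$ on the high-probability event that all $K$ folds meet the $L_1$ rate, using Proposition \ref{prop:auc_cons} and $\AUC(L)=\delta^*+\frac12$. Your scale control through $\sigma^2_{tr,cross}\le 1$, which gives $\hat\sigma_{tr,cross}\le 2$ with probability tending to one, is the direction actually needed for this lower bound. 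It is cleaner than the paper's write-up, which invokes the variance lower bound $c$ at that step.
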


As a further consequence, note that when the distinguisher procedure is consistent ($r_n \rightarrow 0$), asymptotically we can take $\delta_n$ to approach the optimal value $\delta^\ast.$ In the case where we cannot garentee consistent distinguisher estimation, i.e $r_n \rightarrow r \neq 0$, asymptotically the maximum tolerance $\delta_n$ we can take will be off by a bias no larger than $2r$.

\subsection{Classification With Coupled Samples}
\label{sec: coupled sample}
The power of our test is crucially dependent on the estimated distinguisher $\hat{g}$, which is typically constructed using an off-the-shelf probabilistic classification algorithm $\hat D$. The theoretical guarantees for these classification methods are usually stated in the case of two independent samples. In our setting, the two samples $\{(X_i,Y_i)\}$ and $\{(X'_j,Y'_j)\}$ are dependent since $X_i = X'_i$. Are common off-the-shelf methods effective in this dependent data setting? \citet{caiAsymptoticDistributionFreeIndependence2024a} showed that M-estimation methods, in both low and high dimensional settings, are effective in a dependent data setting, where the second sample is derived from a cyclic permutation of the first. In our present setting, the dependency structure is different. The intuition that off-the-shelf classifiers are effective in our setting is that since we know that the $X$-marginal has no effect on the class label $C$, having the same values in the $X$'s among the two samples should not drastically affect the performance. In this subsection, we provide theoretical evidence through a common high-dimensional sparse regression setting. The takeaway is whenever off-the-shelf classifiers are effective in the i.i.d. case, they are also effective in our dependency structure. 

Suppose we wish to estimate $\hat g$ through the optimization problem
\begin{equation}
    \label{eq: m-est global}
    \min_{g} \E\bigg[(1-g(X',Y'))^2+(0-g(X,Y))^2\bigg].
\end{equation}
Let $e_1,...,e_{K_n}:\R^d\times\mathcal{L}\rightarrow\R$ be a basis expansion where $K_n$ is the size of the expansion which is allowed to depend on the sample size $n$. To simplify the discussion, we assume that the optimizer $g^*$ has a sparse representation with respect to this basis. Similar results can be derived for other types of sparsity, such as $L_1$ sparsity, using the same strategy.  Without loss of generality, we consider the case of $K_n\rightarrow\infty$, because if $K_n$ is bounded, we can always add additional basis functions with zero coefficients.

\begin{assumption}
    \label{as: sparse basis rep} Assume that $g^*$ has a sparse basis representation
    $$
        g^*(x,y) = \sum_{i=1}^{K_n} \beta_i e_i(x,y),
    $$
    where $K_n$ denotes the number of basis elements used and $\beta \in \R^{K_n}$ is $s$-sparse $(\norm{\beta}_0=s)$.
\end{assumption}

Now suppose that we have data $\{(X_i,Y_i)\}_{i=1}^n$ and $\{(X'_i,Y'_i)\}_{i=1}^n$ such that $X_i=X'_i$. We can express the empirical form of (\ref{eq: m-est global}) as a linear regression problem with a design matrix $\Xi$ defined by
$$
\Xi
=
\begin{pmatrix}
\bigl(\mathfrak e(X_i,Y_i)\bigr)_{i=1}^n\\[2pt]
\bigl(\mathfrak e(X'_i,Y'_i)\bigr)_{i=1}^n
\end{pmatrix}
\in \mathbb R^{2n\times K_n}, \quad \mathfrak{e}(X_j,Y_j) = (e_i(X_j,Y_j))_{i\in[K_n]} \in \R^{K_n}
$$ 
and response  $Z := (\underbrace{0,...,0,}_n\underbrace{1,...,1}_n)$. The regularized empirical form of (\ref{eq: m-est global}) can be written as 
\begin{equation}
\label{eq: lasso problem}
    \argmin_{\beta} \frac{1}{4n} \norm{Z-\Xi\beta}_2^2 + \lambda\norm{\beta}_1,
\end{equation}
where $\lambda>0$ is a regularization parameter. We make the following restricted eigenvalue condition which is a standard assumption in LASSO analysis.
\begin{assumption}
    \label{as: LASSO}
    Let $S$ be the support of $\beta$. For a constant $\alpha\geq 1$, define the set
    $$
        \C_\alpha(S) = \{\Delta \in \R^{K_n}:\norm{\Delta_{S^c}}_1\leq \alpha\norm{\Delta_{S}}_1\}.
    $$
    We say that the design matrix $\Xi$ satisfies the $(\kappa,\alpha)$ restricted eigenvalue (RE) condition if
    \begin{equation}
        \label{eq:re}
        \frac{1}{n}\norm{\Xi\Delta}_2^2 \geq \kappa\norm{\Delta}_2^2
    \end{equation}
    for all $\Delta \in \C_\alpha(S)$.
    We assume the design matrix $\Xi$ satisfies the restricted eigenvalue condition with parameters $(\kappa,3)$.
\end{assumption}

In general, it is difficult to verify the restricted eigenvalue condition. There exist results which show that in the random design setting, it is possible to show the restricted eigenvalue condition holds with high probability \citep[][Theorem 7.16]{wainwrightHighDimensionalStatisticsNonAsymptotic2019}. These results require the rows of the design matrix to be iid. Our setting differs in two ways. The first $n$ rows and the last $n$ rows come from two different distributions and they are dependent with eachother. We give a discussion on why these results can be extended to our sample setting. For notation, define $\Xi_1,\Xi_2 \in \R^{n\times K_n}$ by
\begin{equation*}
    \Xi_1 = \begin{pmatrix}
        \mathfrak{e}(X_1,Y_1)\\
        \cdots\\
        \mathfrak{e}(X_n,Y_n)
    \end{pmatrix} \qquad \Xi_2 = \begin{pmatrix}
        \mathfrak{e}(X'_1,Y'_1)\\
        \cdots\\
        \mathfrak{e}(X'_n,Y'_n)
    \end{pmatrix}.
\end{equation*}
Under this notation, the RE condition (\ref{eq:re}) is 
\begin{equation}
\label{eq:re sep}
    \frac{1}{n} \norm{\Xi_1 \Delta}_2^2 + \frac{1}{n} \norm{\Xi_2 \Delta}_2^2 \geq \kappa \norm{\Delta}_2^2,
\end{equation}
for all $\Delta \in \C_\alpha(S)$.
Now each $\Xi_1,\Xi_2$ are design matrices where each row are iid. If these design matrices satisfy RE condition with slightly different parameters $(\kappa/2,3)$ with high probability, we can use union bound to see that the combined design matrix $\Xi$ satisfies RE condition with parameters $(\kappa,3)$ with high probability. We can derive the following consistency result in the setting where the basis expansion functions are bounded. In particular, when $\lambda \asymp \sqrt{(\log K_n)/n}$, $\hat{\beta}$ is consistent for $\beta^*$ as long as $\frac{\log(K_n)}{n} \rightarrow 0$.

\begin{theorem}
    \label{thm:consistency}
        Suppose that each basis function $e_i,i=1,...,K_n$ is bounded by $[-B,B]$ and $\lambda \asymp \sqrt{(\log K_n)/n}$. With probability at least $1-{K_n}^{-1}$,
    $
    \norm{\hat{\beta}-\beta^*}_2 \leq C\sqrt{(s\log(K_n))/n}.
    $
\end{theorem}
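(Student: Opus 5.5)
This is the standard Lasso error bound under a restricted eigenvalue condition, so I would follow the usual deterministic argument. The one non-standard ingredient is bounding the noise term, because the rows of $\Xi$ are neither independent nor identically distributed. I would treat $\beta^*$ as the population minimizer of (\ref{eq: m-est global}) within the span of the basis, which is $s$-sparse by Assumption \ref{as: sparse basis rep}. Then write $Z = \Xi\beta^* + w$ with residual vector $w = Z - \Xi\beta^*$. The first-order condition for the population problem gives $\E[\mathfrak e(X,Y)(0-g^*(X,Y)) + \mathfrak e(X',Y')(1-g^*(X',Y'))]=0$. Hence $\Xi^\top w = \sum_{i=1}^n \xi_i$ with
$$\xi_i := \mathfrak e(X_i,Y_i)\bigl(0-g^*(X_i,Y_i)\bigr) + \mathfrak e(X_i,Y_i')\bigl(1-g^*(X_i,Y_i')\bigr)\in\R^{K_n}.$$
The key point is that the $\xi_i$ are iid and mean zero, even though the two halves of the design are dependent: pairing the $i$-th row of $\Xi_1$ with the $i$-th row of $\Xi_2$ restores independence across $i$.

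\textbf{Step 1 (deterministic bound).} From the basic inequality for (\ref{eq: lasso problem}), with $\Delta = \hat\beta-\beta^*$,
$$\frac{1}{4n}\norm{\Xi\Delta}_2^2 \le \frac{1}{2n}\langle \Xi^\top w,\Delta\rangle + \lambda(\norm{\beta^*}_1 - \norm{\hat\beta}_1).$$
On the event $\norm{\Xi^\top w}_\infty/(2n)\le \lambda/2$, the standard argument gives $\Delta\in\C_3(S)$ and $\frac{1}{4n}\norm{\Xi\Delta}_2^2\le \tfrac{3}{2}\lambda\sqrt{s}\norm{\Delta}_2$. Applying the RE condition of Assumption \ref{as: LASSO} then yields $\norm{\Delta}_2\le 6\lambda\sqrt{s}/\kappa$. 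With $\lambda\asymp\sqrt{\log K_n/n}$ this is the claimed rate.

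\textbf{Step 2 (noise event).} I need $\norm{n^{-1}\sum_i \xi_i}_\infty \lesssim \sqrt{\log K_n/n}$ with probability at least $1-K_n^{-1}$. Each coordinate $\xi_{i,l}$ is bounded by $B(|g^*(X_i,Y_i)| + |1-g^*(X_i,Y_i')|)$. I would assume, or argue, that $g^*$ is bounded; otherwise I would replace $g^*$ by its projection onto the bounded span, or use $|g^*|\le B\norm{\beta}_1$, which is finite under sparsity and bounded coefficients. Hoeffding's inequality for each coordinate, followed by a union bound over the $K_n$ coordinates, gives
$$\P\Big(\max_l \Big|\frac1n\sum_i\xi_{i,l}\Big| > t\Big)\le 2K_n e^{-nt^2/(2C_0^2)}.$$
Choosing $t = C_1\sqrt{\log K_n/n}$ with $C_1$ large makes this at most $K_n^{-1}/2$. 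Setting $\lambda$ to be a matching constant multiple of $t$ completes this step.

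\textbf{Step 3 (combine).} The RE condition is assumed in Assumption \ref{as: LASSO}. If instead it holds only with high probability, via Theorem 7.16 of Wainwright applied separately to $\Xi_1$ and $\Xi_2$ together with (\ref{eq:re sep}), I would absorb that failure probability into the same $K_n^{-1}$ budget. Intersecting with the noise event from Step 2 gives the stated bound with probability at least $1-K_n^{-1}$.

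I expect Step 2 to be the main obstacle. The issue is recognizing that the noise is not iid row-by-row, but becomes iid once the paired rows $(X_i,Y_i,Y_i')$ are grouped together, and that it is mean zero only through the population normal equations. Uniform boundedness of $g^*$ also has to be handled carefully.
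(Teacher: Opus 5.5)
Your proof is correct and follows essentially the paper's route: the deterministic Lasso bound under the RE condition on the event that $\lambda$ dominates $\|\Xi^\top w/n\|_\infty$, plus coordinatewise concentration over the independent triplets $(X_i,Y_i,Y_i')$ and a union bound over the $K_n$ coordinates. Your Hoeffding bound on the paired-row sums $\xi_i$ is the paper's McDiarmid bounded-differences argument in another form, with the same constant. Your explicit mean-zero justification via the population normal equations fills in a step the paper leaves implicit. The worry about boundedness is moot, since the unrestricted squared-loss minimizer $g^*=P(C=1\mid X,Y)$ lies in $[0,1]$.
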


\section{Numerical Simulations}
\label{sec: Sim}
In this section, we consider two simulation settings to evaluate the performance of our proposed test procedures. The first is a logistic regression setting similar to \citet{javanmardGRASPGoodnessoffitTest2024}. The second setting is a sparse logistic setting where we illustrate the adaptivity of our method to underlying structure.

\subsection{Logistic Regression Setting}

We first evaluate the empirical performance of our Goodness-of-Fit test in a similar logistic regression setting as in \citet{javanmardGRASPGoodnessoffitTest2024}. The data is generated as follows:
\begin{enumerate}
    \item Generate a single $\theta^* \sim \mathcal{N}\left(0,(0.25)^2\cdot\I_{200}\right)$ to be used in all experiments.
    \item Set the true conditional probability function and the estimated conditional probability as
$$
    \eta(x) = \frac{1}{1+\exp(-x^\top\theta^*)} \qquad  \hat{\eta}(x) = \frac{1}{1+\exp(-x^\top\hat{\theta})},
$$
where $\hat{\theta} = \theta^*$ in the null case and $\hat{\theta} = -\theta^*$ in the alternative.
    \item Generate a holdout set $\{(X_i,Y_i)\}_{i=1}^n$ where
$
    X_1,...,X_n \overset{iid}{\sim}  \N(0,\I_{200}),Y_i \sim \text{Bern}(\eta(X_i)),
$
where $n=1000, 2000, 3000$.
\end{enumerate}

For a tolerance parameter $\delta>0$, we are interested in testing the hypothesis
$
    H_0: \rho(P_0,P_1) < \delta,
$
where $P_0 := \N\left(0,\I_{200}\right) \times \text{Bern}(\eta(X))$ and $P_1 := \N\left(0,\I_{200}\right) \times \text{Bern}(\hat{\eta}(X))$. In the following simulations, we compare three test procedures: sample-split, cross-fit, and asymptotic GRASP \citep{javanmardGRASPGoodnessoffitTest2024}. Across all simulations, we use a 50-50 split for the sample-split method, and we use 5 folds for the cross-fit method. 

We use the following method to construct the distinguisher. Suppose that we have the samples $(X_i,Y_i,C_i)$ for $i=1,...,n$ where $C_i=0,1$ denotes the distribution from which the sample comes from. Let $\mathcal{D}_0 := \{(X_i,C_i):Y_i = 0\}$ and $\mathcal{D}_1 := \{(X_i,C_i):Y_i = 1\}$. Using data $\mathcal{D}_0$, train a logistic classifier, which we call $\hat g_0$ and using data $\mathcal{D}_1$, train a  logistic classifier, which we call $\hat g_1$. We can combine both $\hat g_0$ and $\hat g_1$ into a single distinguisher by the rule $\hat g(x,y) = \hat g_y(x).$ The reason for splitting the distinguisher based on the value $Y$ is that we know that the marginal $P_X$ alone cannot distinguish between the two samples.


\paragraph*{Type-1}
For simulations under null, we consider the exact null
$
    \rho(P_0,P_1) = 0,
$
which corresponds to the setting $\hat{\theta} = \theta^*$. We set the significance level at $\alpha=0.05$ and evaluate the empirical type-1 error over 500 iterations with sample size $n=1000,2000,3000$. In Table \ref{tb: null}, we show the average number of rejections in the 500 trials for each of the three compared methods at the various sample sizes with $\alpha=0.05$. We find that across all settings, all three methods control type-1 error close to the nominal $\alpha=0.05$ level. The correct type-1 control of the cross-fitting statistic suggests that the stability assumption (\ref{as: crossfit}) is satisfied.

\begin{table}[H]
\centering
\begin{tabular}{l|lll}
n=(1000,2000,3000) & Sample split          & Cross-fit             & GRASP                 \\ \hline
Type-1 error       & (0.054, 0.054, 0.044) & (0.048, 0.052, 0.036) & (0.042, 0.040, 0.066)
\end{tabular}
\caption{The empirical type 1 error and average test statistic of 500 simulations under the null. All three methods reasonably controls type-1 error at the given $\alpha=0.05$ level.}
\label{tb: null}
\end{table}


\paragraph*{Power} In the alternative setting we set $\hat{\theta} = -\theta^*$. Numerically, we approximate the true separation $\delta^*$ to be
$
\delta^* :=\rho(P_0,P_1) \approx 0.45.
$ We will use the tolerant null
$
    H_0: \rho(P_0,P_1) < \delta.
$
We consider sample sizes of $n=1000,2000,3000$ at level $\alpha=0.05$. Following the setup in \cite{javanmardGRASPGoodnessoffitTest2024}, we record the empirical power over 50 trials, which is plotted in Figure \ref{fig: power} (left). The x-axis of the plot is the ratio $\frac{\delta}{\delta^*}$ between the chosen tolerance $\delta$ and the true separation $\delta^* \approx 0.45$. Choosing $\delta=0$ corresponds to a tolerance ratio of $0$ and setting $\delta=\delta^*$ corresponds to a tolerance ratio of $1$.

In the left plot in Figure \ref{fig: power}, we also plot empirical results for GRASP, where we test the tolerant null
$
    H_0: \rho_{\rm tv}(P_0,P_1) < \delta,
$
using the same experimental setting. The true separation under the total variation distance is
$
    \delta^*_{\rm tv} \approx 0.69,
$
and we again plot the power against the tolerance ratio $\frac{\delta}{\delta^*_{\rm tv}}$. There is a slight discrepancy in comparing our simulations results as GRASP uses a $f$-divergence such as the distance from the TV as the measure of separation while we use AUC. However, by Proposition \ref{prop: tv and auc}, we see that our distance is nearly proportional to the TV distance. Thus, we compare the empirical power at various tolerance ratios. 
\begin{figure}[]
\centering
\includegraphics[width=8cm]{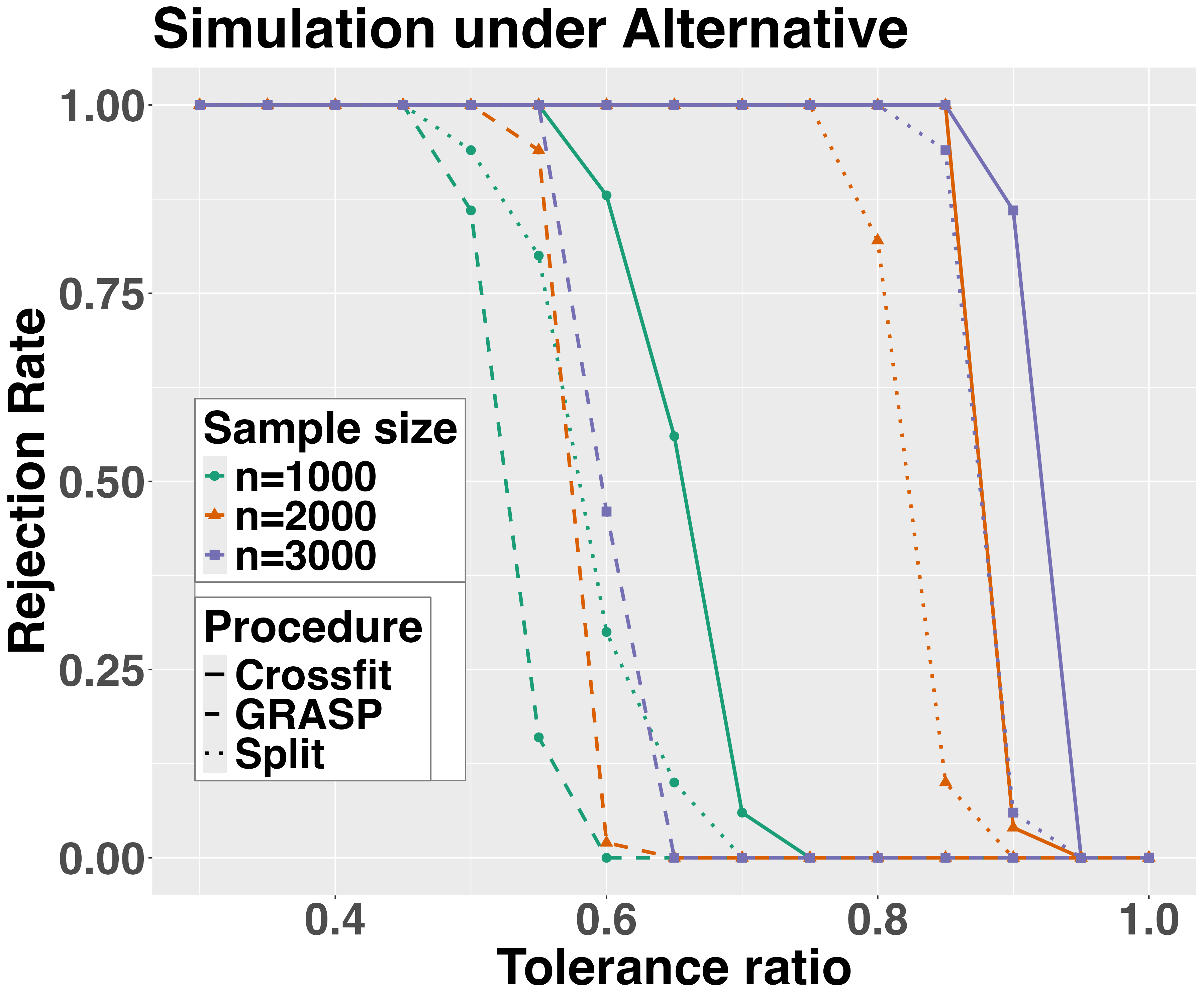}\includegraphics[width=8cm]{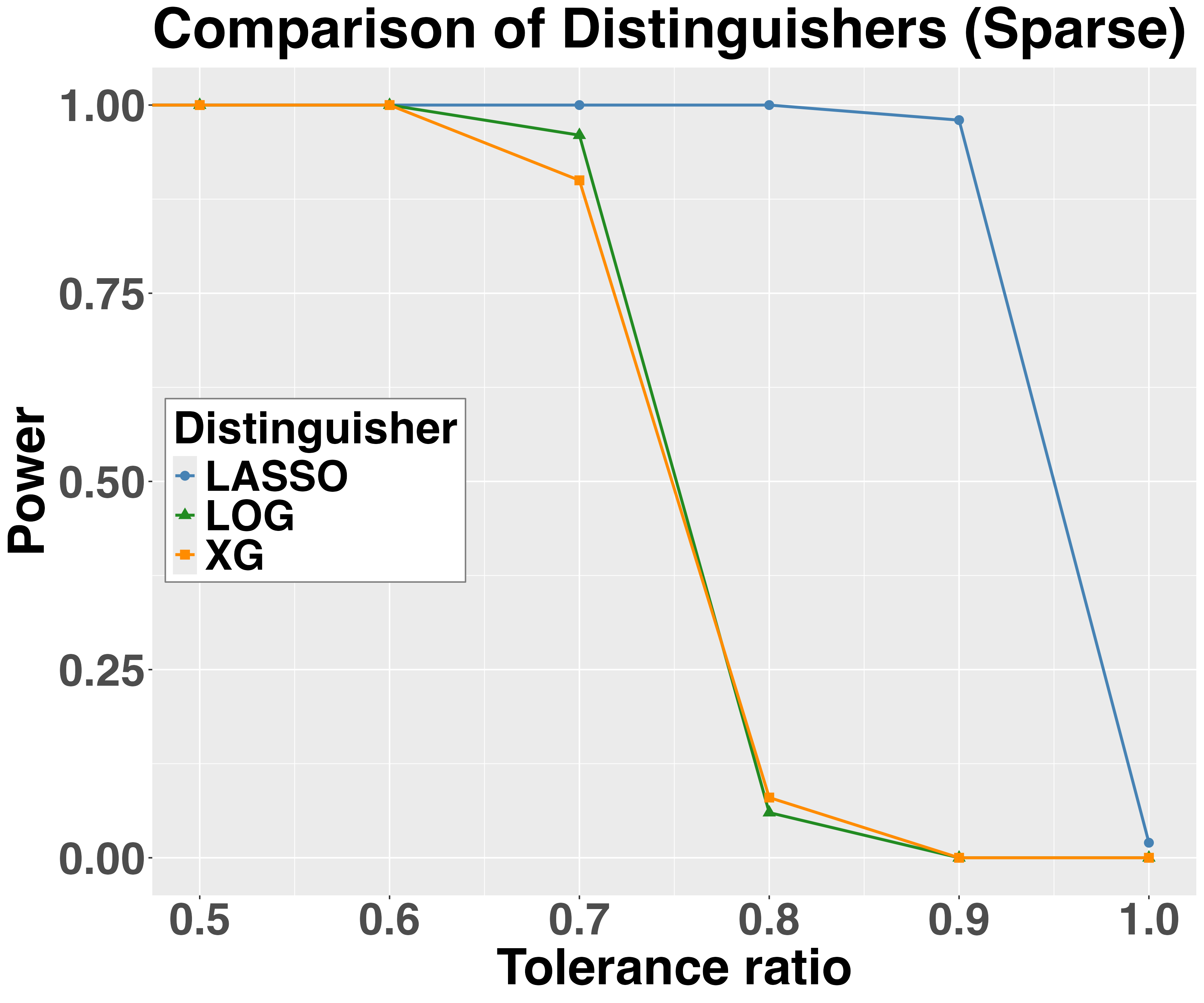}
\caption{Left: Comparison of sample split method, cross-fit method and GRASP. The empirical power of the three methods indicated by line type are plotted across multiple sample sizes. Right: Comparison of empirical power of three distinguisher choices in sparse setting. The adaptive LASSO distinguisher performs the best.}
\label{fig: power}
\end{figure}

The results in Figure \ref{fig: power} show that across all sample sizes, the cross-fit test statistic is more powerful compared to the sample split test statistic. This is expected as the power of our method depends heavily on the quality of the estimated distinguisher $\hat g$. Cross-fitting allows us to use a substantially larger sample size to train $\hat g$ which should lead to improvements in power. Overall, both the cross-fit and sample split methods perform better than GRASP.

\begin{remark}
    In our comparison with the GRASP method, we use a model agnostic choice of scoring function for GRASP as in the corresponding simulations in \citet[][Example 5.1/5.2]{javanmardGRASPGoodnessoffitTest2024}.  \citet[][Section 4]{javanmardGRASPGoodnessoffitTest2024} discusses using model based scoring functions in the model-X setting with auxiliary unsupervised data. As we do not assume access to a model-X setting, we do not make comparisons to these types of scoring functions. 
\end{remark}

\subsection{Sparse Setting}
In our second setting, we illustrate the adaptivity of our proposed testing procedure by evaluating the power of the cross-fit procedure in a sparse logistic setting. We focus on the cross-fit procedure in this section as it is more powerful than the sample split procedure. The data generation process is the same as in Setting 1 except that the vector $\theta^* = (1,1,1,1,1,0,...,0) \in \R^{200}$ is now sparse. To adapt to the sparsity we use a logistic LASSO distinguisher using the same procedure as in the logistic regression setting.


To understand the effect of choice of distinguisher, we compare the LASSO distinguisher with two other distinguisher procedures. First, we compare it to a logistic regression distinguisher, which is the correct parametric model but does not account for the sparsity. We further compare it to an XgBoost distinguisher, a common off-the-shelf method which also does not account for the sparsity. The results for the sample size of $n=1000$ are plotted on the right in Figure \ref{fig: power}. As in the first setting, we use 5 folds for cross-fitting and consider $\alpha=0.05$. In general, we find that when the distinguisher is adaptive to the underlying sparsity, we see a significant improvement in power reflecting the idea that our test is adaptive to additional distributional information.

\section{Real Data Examples}
\label{sec: Data}
Many machine learning datasets are used to benchmark different learning methods. Normally, the classification accuracy of different machine learning methods is used for comparison. In this section, we evaluate different classification methods using the goodness-of-fit testing approach on two popular benchmarking datasets, MNIST and Fashion MNIST, and compare the qualitative takeaways compared to solely using classification accuracy.  

Both datasets are used as benchmarks for multi-class classification. In both datasets, we are given a $28 \times 28$ grayscale image, which we can model as a feature vector $X\in\R^{784}$, where each entry denotes the grayscale level of one of the $784$ pixels. Both datasets also come with 10 total labels which we denote by $\ell\in \mathcal{L}=\{0,...,9\}$. For the MNIST dataset, the labels correspond to handwritten digits. For the Fashion MNIST dataset, the labels correspond to different items of clothing. 







We can model the data-generating distribution of both datasets by
\begin{equation*}
    (X,\ell) \sim P_X \times \Cat(\eta(X)),
\end{equation*}
where $P_X$ is a the marginal probability distribution of the grayscale level of the pixels, supported on $\R^{784}$ and $\eta: \R^{784} \rightarrow \Delta_9$ is a function from the features to the probability simplex on the set of labels $\{0,...,9\}$. 

The classification task is to construct an estimate $\hat{\eta}$ of the function $\eta$. We consider evaluating multiclass classifiers $\hat{\eta}_{\rm lin}$, $\hat{\eta}_{\rm rf}$ and $\hat{\eta}_{\rm xg}$, which correspond to logistic regression, random forest and XgBoost, respectively. The goal is to test the hypothesis
\begin{equation*}
    H_0: \rho\Big(P_X \times \Cat\big(\eta(X)\big), P_X \times \Cat\big(\hat{\eta}_{\mathsf{a}}(X)\big)\Big) < \delta,
\end{equation*}
where $\mathsf{a} \in \{\rm lin,\rm rf, \rm xg\}$. For the purpose of this experiment, we use 10000 samples as a training set and the remaining 60000 samples for evaluation. We focus on the cross-fit procedure with $5$ folds as we know that the cross-fit procedure is more powerful. The distinguisher procedure uses Xgboost.

 Table \ref{tb:mnist} records the classification accuracy and the GoF test results for each of the three methods. In the following results, we report the smallest radius $\delta$ which the GoF test does not reject at nominal type-1 level $0.05$. A smaller radius implies that the classifier is closer to the data-generating distribution. By looking at the classification accuracy alone, we conclude that the nonlinear methods perform approximately the same and both seem significantly better than the linear method. However, the GoF results reveal that among the two nonlinear methods, Xgboost fits the data generating distribution better than random forests, which suggests that XgBoost is the better classifier for this task. 

\begin{table}[h]
\centering
\begin{tabular}{l|ll}
$(\hat{\eta}_{\rm lin},\hat{\eta}_{\rm rf},\hat{\eta}_{\rm xg})$ & Accuracy           & GoF Rejection Radius \\ \hline
MNIST                                                            & (0.84, 0.95, 0.95) & (0.15, 0.12, 0.014)  \\
FMNIST                                                           & (0.73, 0.86, 0.87) & (0.16, 0.070, 0.021)
\end{tabular}
\caption{Comparison of classification accuracy and GoF rejection radius}
\label{tb:mnist}
\end{table}

\section{Discussion}
In this work, we evaluate classifiers using goodness-of-fit tests as an alternative to classification accuracy. Leveraging ideas from two-sample tests by classification, outcome indistinguishability, and cross-validation central limit theorems, we propose a test procedure which can assess black box classifiers under few assumptions on the classification procedure, which is effective in high-dimensional settings, is easy and efficient to implement, and can handle multi-class settings. We give some possible further research directions. In our setting, we consider assessing the performance of a given black-box estimate $\hat{\eta}$ beyond the use of classification accuracy. Another setting in which classification accuracy is used is in parameter tuning. In this setting, cross-validation may be used in conjunction with classification accuracy. It would be valuable to understand how to implement cross-validation and the GoF approach together to tune parameters in general classification models. This work falls into the general setting of inference of degenerate U-statistics using sample splitting. In general, sample splitting is a known technique to use for such problems \citep{kimDimensionagnosticInferenceUsing2024}. It would be interesting to see if cross-validated central limit theorems \citep{leiModernTheoryCrossValidation2025} can be used to allow cross-fitting to be used instead of sample splitting for better efficiency. 


\section*{Code Availability}
Code to implement the simulations can be found at \url{https://github.com/yuch44/GoF-classifier}. All data used are publically available.

\appendix
\section{Proofs}

\subsection{Cross-fit Asymptotics}
In this section we give the proofs to Lemmas \ref{lm: hajek decomposition} and \ref{lem: stability u statistic} used in proving Theorem \ref{thm: crossfit asymp norm}.

\begin{lemma}
    \label{lm: hajek decomposition} For any fold $k =1,...,K$ let
    \begin{align}
    \begin{split}
        \Phi_k &= \frac{1}{n_k}\sum_{i \in \mathcal{I}_{n,k}}\phi(X_i,Y_i,\mathcal{D}_{n,-k}) - \mu(\mathcal{D}_{n,-k})\\
        \Psi_k &= \frac{1}{n_k}\sum_{i \in \mathcal{I}_{n,k}}\psi(X'_i,Y'_i,\mathcal{D}_{n,-k}) - \mu(\mathcal{D}_{n,-k}).
    \end{split}
    \end{align}
    Then,
    \begin{equation*}
        \frac{\sqrt{n}}{\sigma_{tr,cross}}\left(\frac{1}{n_k^2} \sum_{i,j\in\mathcal{I}_{n,k}}R_{ij}(\mathcal{D}_{n,-k}) -\mu(\mathcal{D}_{n,-k}) - \Phi_k-\Psi_k\right) \xrightarrow{p} 0.
    \end{equation*}
\end{lemma}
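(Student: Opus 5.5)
Conditional on the fitting data $\mathcal{D}_{n,-k}$, the distinguisher $g:=\hat g_{n,-k}$ is a fixed function. The fold data $\{(X_i,Y_i,Y'_i)\}_{i\in\mathcal{I}_{n,k}}$ are iid triplets independent of $\mathcal{D}_{n,-k}$. Write $h(z_i,z_j)=\mathds{1}(g(X_i,Y_i)<g(X_j,Y'_j))$ with $z_i=(X_i,Y_i,Y'_i)$, and let
\[
W_{ij} = R_{ij}(\mathcal{D}_{n,-k}) - \phi(X_i,Y_i;\mathcal{D}_{n,-k}) - \psi(X_j,Y'_j;\mathcal{D}_{n,-k}) + \mu(\mathcal{D}_{n,-k}).
\]
The quantity inside the parentheses of the statement is then $n_k^{-2}\sum_{i,j\in\mathcal{I}_{n,k}}W_{ij}$, because the centerings combine as $-\mu-(\bar\phi-\mu)-(\bar\psi-\mu)$. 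Since $\sigma_{tr,cross}^2>c$ by Assumption \ref{as: crossfit}(3) and $n=Kn_k$ with $K$ fixed, it suffices to show that $\E\big[(n_k^{-2}\sum_{i,j}W_{ij})^2\mid\mathcal{D}_{n,-k}\big]=O(n_k^{-2})$ uniformly in $\mathcal{D}_{n,-k}$. Chebyshev then gives the conditional statement, and integrating over $\mathcal{D}_{n,-k}$ gives convergence in probability unconditionally.

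\textbf{Off-diagonal terms.} For $i\neq j$, $(X_i,Y_i)$ and $(X_j,Y'_j)$ come from different triplets, so they are independent with laws $P_X\times\Cat(\eta)$ and $P_X\times\Cat(\hat\eta)$. Hence $\E[W_{ij}\mid z_i,\mathcal{D}_{n,-k}]=0$ and $\E[W_{ij}\mid z_j,\mathcal{D}_{n,-k}]=0$, exactly as in the classical Hoeffding decomposition.

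The point that needs checking, which is the only deviation from the textbook two-sample case, is that $z_i$ carries both $(X_i,Y_i)$ and $(X_i,Y'_i)$. The term $W_{ij}$ depends on $z_i$ only through $(X_i,Y_i)$ and on $z_j$ only through $(X_j,Y'_j)$, so the conditional-mean-zero properties above still hold.

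\textbf{Covariance expansion.} Expand $\E[(\sum_{i,j}W_{ij})^2]=\sum\E[W_{ij}W_{i'j'}]$.
\begin{itemize}
\item If $\{i,j\}\cap\{i',j'\}=\emptyset$, independence of triplets makes the term vanish.
\item If exactly one index is shared and the pairs are off-diagonal, the term also vanishes. For example, with $i=i'$ and $j,j',i$ distinct, condition on $z_i$ and use independence of $z_j$ and $z_{j'}$ together with $\E[W_{ij}\mid z_i]=0$. Similarly, $\E[W_{ij}W_{ji'}]=0$ when $i,j,i'$ are distinct: conditioning on $z_j$ leaves $W_{ij}$ and $W_{ji'}$ as functions of the independent triplets $z_i$ and $z_{i'}$, and $\E[W_{ij}\mid z_j]=0$.
\item Terms involving a diagonal pair $W_{ii}$ combined with an off-diagonal pair sharing one index do not vanish in general, because the diagonal kernel is not degenerate. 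There are only $O(n_k^2)$ such combinations, each bounded by $16$ since $|W|\le 4$.
\item The remaining nonzero terms are the ones where $\{i,j\}=\{i',j'\}$, of which there are $O(n_k^2)$.
\end{itemize}
Hence the total is $O(n_k^2)$, and dividing by $n_k^4$ yields the $O(n_k^{-2})$ second-moment bound. The extra factor $\sqrt{n}$ in the statement is then harmless: $\sqrt n\cdot O_p(n_k^{-1})\to0$.

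\textbf{Main obstacle.} The delicate part is the careful enumeration of which index configurations have zero covariance under the coupled structure $X_i=X'_i$, in particular the mixed diagonal/off-diagonal configurations. I expect these to be handled by the crude boundedness argument given above, since they contribute only $O(n_k^2)$ terms.
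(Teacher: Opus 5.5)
Your proposal is correct and follows the paper's route: condition on $\mathcal{D}_{n,-k}$, apply Chebyshev to the second moment of $n_k^{-2}\sum_{i,j}W_{ij}$, show that only $O(n_k^2)$ of the $n_k^4$ covariance terms survive, and use $\sigma^2_{tr,cross}>c$. You also make the role of the coupling $X_i=X'_i$ more explicit than the paper does.

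Two details of your enumeration are off, though neither changes the $O(n_k^2)$ count. First, the disjoint-index pairs $W_{ii}W_{i'i'}$ with $i\neq i'$ do not vanish: they equal $(\E W_{11})^2$, and $\E W_{11}=\P\big(g(X,Y)<g(X,Y')\big)-\mu(\mathcal{D}_{n,-k})$ is nonzero in general because of the shared $X$. There are only $n_k(n_k-1)$ such bounded terms, so the count survives; alternatively, bound $|n_k^{-2}\sum_i W_{ii}|\le 4/n_k$ separately. Second, the mixed terms $W_{ii}W_{ij'}$ and $W_{ii}W_{i'i}$ actually do vanish: condition on $z_i$ and use $\E[W_{ij'}\mid z_i]=\E[W_{i'i}\mid z_i]=0$.
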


\begin{proof}
For brevity of notation, we will use $\mu_{n,k} := \mu(\mathcal{D}_{n,-k})$. To simplify notation we set $\sigma_n := \sigma_{tr,cross}$.

By definition, we want to show that 
\begin{align*}
&\frac{\sqrt{n}}{\sigma_n n_k^2}\sum_{i,j\in\mathcal{I}_{n,k}}[R_{ij}(\mathcal{D}_{n,-k}) -\mu_{n,k}] -\frac{\sqrt{n}}{\sigma_n n_k}\sum_{i\in \mathcal{I}_{n,k}} [\phi(X_i,Y_i;\mathcal{D}_{n,-k})-\mu_{n,k}]-\frac{\sqrt{n}}{\sigma_n n_k}\sum_{i\in \mathcal{I}_{n,k}}[\psi(X'_i,Y'_i;\mathcal{D}_{n,-k})-\mu_{n,k}]\\
&\xrightarrow{p} 0.
\end{align*}
We can rewrite the left side as
\begin{align*}
&\underbrace{\frac{\sqrt{n}}{\sigma_n n_k^2}\sum_{i,j\in\mathcal{I}_{n,k}}[(R_{ij}(\mathcal{D}_{n,-k}) -\mu_{n,k}) -(\phi(X_i,Y_i;\mathcal{D}_{n,-k})-\mu_{n,k}) - (\psi(X'_j,Y'_j;\mathcal{D}_{n,-k})-\mu_{n,k})]}_{(*)}.
\end{align*}
We show that $(*) \xrightarrow{p} 0.$
        We need to show that for all $\epsilon >0$,
    \begin{align*}
    &\P\left(\frac{\sqrt{n}}{\sigma_n n_k^2}\sum_{i,j\in\mathcal{I}_{n,k}}[(R_{ij}(\mathcal{D}_{n,-k}) -\mu_{n,-k}) -(\phi(X_i,Y_i;\mathcal{D}_{n,-k})-\mu_{n,k}) - (\psi(X'_j,Y'_j;\mathcal{D}_{n,-k})-\mu_{n,k})]>\epsilon\right)\\
    &\rightarrow 0.
    \end{align*}
    
    By conditioning on the out of fold data $\mathcal{D}_{n,-k}$, we see that the left side is equal to
    \begin{align*}
    &\E\Bigg[\P\Bigg(\frac{\sqrt{n}}{\sigma_n n_k^2}\sum_{i,j\in\mathcal{I}_{n,k}}\Big[(R_{ij}(\mathcal{D}_{n,-k}) -\mu_{n,k})\\
    &\quad-(\phi(X_i,Y_i;\mathcal{D}_{n,-k})-\mu_{n,k}) - (\psi(X'_j,Y'_j;\mathcal{D}_{n,k})-\mu_{n,k})\Big]>\epsilon\Bigg\rvert\mathcal{D}_{n,-k}\Bigg) \Bigg].
    \end{align*}
    The inner probability can now be bounded using Chebyshev's inequality.
    We have
    \begin{align*}
        &\E\left[\left(\frac{\sqrt{n}}{\sigma_n n_k^2}\sum_{i,j\in\mathcal{I}_{n,k}}\underbrace{(R_{ij}(\mathcal{D}_{n,-k}) -\mu_{n,k}) -(\phi(X_i,Y_i;\mathcal{D}_{n,-k})-\mu_{n,k}) - (\psi(X'_j,Y'_j;\mathcal{D}_{n,-k})-\mu_{n,k})}_{=:\xi_{i,j}}\right)^2\Bigg\rvert\mathcal{D}_{n,-k}\right]\\
        &=\frac{n}{\sigma_n^2 n_k^4} \sum_{(i,j)(i',j')} \E(\xi_{i,j} \xi_{i',j'}|\mathcal{D}_{n,-k}).
    \end{align*}
We need to count the number of $(i,j),(i',j')$ pairs where $ \E(\xi_{i,j} \xi_{i',j'}|\mathcal{D}_{n,-k}) \neq 0.$ Conditioning on $\mathcal{D}_{n,-k}$, the expectation is over the variables $(X_i,Y_i),(X'_j,Y'_j),(X_{i'},Y_{i'}),$ and $(X_{j'}',Y_{j'}')$. First, we claim that if one of these variables is independent of the rest, then $\E(\xi_{i,j} \xi_{i',j'}|\mathcal{D}_{n,-k}) = 0.$ For instance, suppose that $(X_i,Y_i)$ is independent of the rest. Then we can expand
\begin{align*}
    \E(\xi_{i,j} \xi_{i',j'}|\mathcal{D}_{n,-k}) &= \E(\E(\xi_{i,j}|\mathcal{D}_{n,-k},(X'_j,Y'_j),(X_{i'},Y_{i'}),(X_{j'}',Y_{j'}')) \xi_{i',j'}|\mathcal{D}_{n,-k})\\
    &=\E(\E(\xi_{i,j}|(X'_j,Y'_j),\mathcal{D}_{n,-k})\xi_{i',j'}|\mathcal{D}_{n,-k}).
\end{align*}
Now we can expand the inner expectation by
\begin{align*}
    &\E(\xi_{i,j}|(X'_j,Y'_j),\mathcal{D}_{n,-k})\\
    &= \E(R_{ij}(\mathcal{D}_{n,-k}) -\mu_{n,k}|(X'_j,Y'_j),\mathcal{D}_{n,-k}) - \E(\phi(X_i,Y_i;\mathcal{D}_{n,k})(\mathcal{D}_{n,-k}) -\mu_{n,k}|(X'_j,Y'_j),\mathcal{D}_{n,-k})\\
    &-\psi(X_j',Y_j';\mathcal{D}_{n,k})(\mathcal{D}_{n,-k}) -\mu_{n,k}.
\end{align*}
\begin{sloppypar}
Using that $\E(R_{ij}(\mathcal{D}_{n,-k}) -\mu_{n,k}|(X'_j,Y'_j),\mathcal{D}_{n,-k}) = \psi(X_j',Y_j';\mathcal{D}_{n,k})(\mathcal{D}_{n,-k}) -\mu_{n,k}$ and that $\E(\phi(X_i,Y_i;\mathcal{D}_{n,k})(\mathcal{D}_{n,-k}) -\mu_{n,k}|(X'_j,Y'_j),\mathcal{D}_{n,-k}) = 0$, we can conclude that $\E(\xi_{i,j} \xi_{i',j'}|\mathcal{D}_{n,-k}) = 0.$
\end{sloppypar}
If either 
\begin{enumerate}
    \item $i=i'$ or $i'=j$
    \item $j'=i$ or $j'=j$
\end{enumerate}
holds but not both, then one of $(X_i,Y_i),(X'_j,Y'_j),(X_{i'},Y_{i'}),$ and $(X_{j'}',Y_{j'}')$ is independent of the others so $\E(\xi_{i,j} \xi_{i',j'}|\mathcal{D}_{n,-k}) = 0.$ There are $O(n_k^2)$ terms where both conditions hold so there are at most $O(n_k^2)$ terms where $\E(\xi_{i,j} \xi_{i',j'}|\mathcal{D}_{n,-k}) \neq 0$. Denote the pairs of indicies which are nonzero by $\Omega.$ By Chebyshev, we have concluded the bound
\begin{align*}
    \P((*)\geq \epsilon) &\leq \frac{\E\left[\frac{n}{\sigma_n^2 n_k^4} \sum_{(i,j)(i',j')\in \Omega} \E(\xi_{i,j} \xi_{i',j'}|\mathcal{D}_{n,-k})\right]}{\epsilon^2}\\
    &= \frac{\frac{n}{\sigma_n^2 n_k^4} \sum_{(i,j)(i',j')\in \Omega} \E[\xi_{i,j} \xi_{i',j'}]}{\epsilon^2}\\
    &= O\left(\frac{1}{n_k}\right) \rightarrow 0.
\end{align*}
The final equality follows from the fact that the sum contains at most $O(n_k^2)$ terms each bounded above, and by Assumption \ref{as: crossfit}, the variance is bounded below by a constant.
\end{proof}

\begin{lemma}
\label{lem: stability u statistic}
    Under Assumption \ref{as: crossfit}, we have
    $$
        \norm{\nabla_i[ \phi(X,Y;\mathcal{D}_n) + \psi(X,Y';\mathcal{D}_n)]}_2 = o(n^{-1/2}).
    $$
\end{lemma}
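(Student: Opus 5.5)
We need to bound $\nabla_i \phi(X,Y;\mathcal{D}_n)$ and $\nabla_i\psi(X,Y';\mathcal{D}_n)$ separately in $L_2$, using the triangle inequality. By symmetry of the two terms it suffices to treat $\phi$. Write $\hat g=\hat g_n=\wh D(\mathcal D_n)$ and $\hat g^i=\wh D(\mathcal D_n^i)$. With $(X',Y')$ an independent draw from $P_X\times\Cat(\hat\eta(X))$,
\begin{equation*}
\nabla_i\phi(X,Y;\mathcal D_n)=\E_{X',Y'}\Big[\mathds 1\big(\hat g(X,Y)<\hat g(X',Y')\big)-\mathds 1\big(\hat g^i(X,Y)<\hat g^i(X',Y')\big)\Big].
\end{equation*}
The two indicators can differ only if $|\hat g(X,Y)-\hat g(X',Y')|\le |\nabla_i\hat g(X,Y)|+|\nabla_i\hat g(X',Y')|$. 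Indeed, if the gap between the two scores exceeds the sum of the perturbations, then the order is preserved. Since the difference is bounded by $1$ in absolute value, Jensen's inequality gives
\begin{equation*}
\norm{\nabla_i\phi}_2^2\le \E\,|\nabla_i\phi|\le \P\Big(|\hat g(X,Y)-\hat g(X',Y')|\le |\nabla_i\hat g(X,Y)|+|\nabla_i\hat g(X',Y')|\Big),
\end{equation*}
where the probability is over everything, including $\mathcal D_n$, $Z_i'$, $(X,Y)$ and $(X',Y')$.

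We split the event on whether the perturbation is small. Set $t_n=\log(n)^{-\beta}n^{-1/2}\,(C\log n)^{\beta}$ for a large constant $C$. Part 2 of Assumption \ref{as: crossfit} (the sub-Weibull tail) then gives
\begin{equation*}
\P\big(|\nabla_i\hat g(X,Y)|>t_n\big)\le 2\exp\big(-C\log n\big)=2n^{-C},
\end{equation*}
and the same for $(X',Y')$. On the complement we need $\P\big(|\hat g(X,Y)-\hat g(X',Y')|\le 2t_n\big)$. Condition on $\mathcal D_n$ and on $(X',Y')$. Then $\hat g(X,Y)$ has density bounded by $B$ (Part 1), so this conditional probability is at most $4Bt_n$. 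Note that $(X,Y)$ is independent of $(X',Y')$ and of $\mathcal D_n$. The same holds for $\psi$, now conditioning on $(X,Y)$ and using the density bound for $\hat g(X',Y')$.

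Combining, $\norm{\nabla_i\phi}_2^2\lesssim t_n+n^{-C}$, that is $\norm{\nabla_i\phi}_2\lesssim n^{-1/4}$ up to a constant power of the logarithm. This is the main obstacle: the naive density-plus-tail argument yields only the square root of the needed rate, because passing from $L_1$ to $L_2$ through boundedness loses a square root. To recover $o(n^{-1/2})$ I will avoid $\E|\nabla_i\phi|$ and exploit the outer expectation over $(X',Y')$ inside $\phi$. For fixed $(X,Y)$ and data, $\phi$ is the CDF-type function $1-F'_{\hat g}(\hat g(X,Y))$, where $F'_{\hat g}$ is the distribution function of $\hat g(X',Y')$. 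Its perturbation splits into two pieces:
\begin{enumerate}
\item A shift of the evaluation point $\hat g(X,Y)$. This is Lipschitz with constant $B$ by Part 1, so it contributes $B|\nabla_i\hat g(X,Y)|$, whose $L_2$ norm is $O(\log(n)^{-\beta}n^{-1/2})$ by sub-Weibull moment bounds.
\item A change of the distribution function, $\sup_s|\P(\hat g(X',Y')>s)-\P(\hat g^i(X',Y')>s)|$. This is bounded by $\E_{X',Y'}\min\{1,B|\nabla_i\hat g(X',Y')|\}$-type terms. Concretely, I split $\mathds 1(\hat g(X',Y')>s)-\mathds 1(\hat g^i(X',Y')>s)$ and use the density bound on the band of width $|\nabla_i \hat g(X',Y')|$. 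The expectation sits inside $\phi$ before squaring, so the contribution is bounded in $L_2$ by a constant times $\norm{\nabla_i\hat g(X',Y')}_2$ (via Jensen).
\end{enumerate}
Both pieces are $O(\log(n)^{-\beta}n^{-1/2})=o(n^{-1/2})$. The $\log(n)^{-\beta}$ factor in the assumption is exactly what delivers the little-$o$.

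Care is needed in the second piece, where a crude band bound only gives an $L_1$-type quantity. I will write it as
\begin{equation*}
\P\big(s-|\nabla_i\hat g(X',Y')|<\hat g(X',Y')\le s+|\nabla_i\hat g(X',Y')|\big).
\end{equation*}
I then truncate $|\nabla_i\hat g(X',Y')|$ at $t'_n=\log(n)^{-\beta}n^{-1/2}(C\log n)^\beta$ using the tail bound. This leaves a band of width $2t'_n$ with probability at most $2Bt'_n$, plus an exceptional event of probability $n^{-C}$. So the truncation loses only polylogarithmic factors, which the assumed $\log(n)^{-\beta}$ prefactor is meant to absorb. If this absorption fails, the fallback is to state the result with the $o$ coming from choosing the sub-Weibull scale sharply.
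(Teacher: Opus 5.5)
Your piece 1 (moving the evaluation point, Lipschitz with constant $B$) is fine, but piece 2 has a genuine gap.

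\textbf{The supremum bound is false.} You claim that $\sup_s|\P(\hat g(X',Y')>s)-\P(\hat g^i(X',Y')>s)|$ is controlled by $B\,\E_{X',Y'}|\nabla_i\hat g(X',Y')|$ (or by its $L_2$ norm). This fails because the band width $|\nabla_i\hat g(X',Y')|$ depends on the same $(X',Y')$ whose score density you invoke. For example, let $\hat g(X',Y')\sim\mathrm{Unif}[0,1]$ and $\hat g^i=\hat g+\epsilon$ exactly on $\{\hat g\in[s_0-\epsilon,s_0)\}$. Both scores have density at most $2$, and the distribution functions differ by $\epsilon$ at $s_0$. Yet $\E_{X',Y'}|\nabla_i\hat g|=\epsilon^2$ and its $L_2$ norm is $\epsilon^{3/2}$, so a Kolmogorov-type supremum genuinely loses a square root.

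\textbf{Truncation does not recover the rate.} Your level is $t_n'=\log(n)^{-\beta}n^{-1/2}(C\log n)^{\beta}=C^{\beta}n^{-1/2}$. The $\log(n)^{-\beta}$ prefactor is used up completely, so the band term is $O(n^{-1/2})$, not $o(n^{-1/2})$. With only an unconditional tail bound, making the exceptional term $o(n^{-1/2})$ in $L_2$ forces $t\gtrsim n^{-1/2}$. So no truncation level saves the supremum route, and rescaling the sub-Weibull parameter changes the hypothesis rather than proving the lemma.

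The paper's proof takes this same band-plus-tail route, pointwise in the data. It applies the unconditional sub-Weibull bound conditionally, and its choice $\epsilon\asymp\log(n)^{-\beta/2}n^{-1/2}$ leaves a tail of order $e^{-c\sqrt{\log n}}$. Following it would not close the gap either.

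\textbf{The missing idea.} Average over the evaluation point instead of taking $\sup_s$, using that $\hat g(X,Y)$ is independent of $(X',Y')$ with density at most $B$. Set $G(s)=\P_{X',Y'}(\hat g^i(X',Y')\le s)-\P_{X',Y'}(\hat g(X',Y')\le s)$. By Fubini,
\[
\E_{X,Y}\bigl|G(\hat g(X,Y))\bigr|\le B\int|G(s)|\,ds\le B\,\E_{X',Y'}\bigl|\nabla_i\hat g(X',Y')\bigr|,
\]
which is an $L_1$ bound of order $\log(n)^{-\beta}n^{-1/2}$ with no truncation loss. Use truncation only for the supremum, $\sup_s|G(s)|\le 2Bt+\P_{X',Y'}(|\nabla_i\hat g(X',Y')|>t)$, and interpolate:
\[
\E\,G(\hat g(X,Y))^2\le\E\Bigl[\sup_s|G(s)|\cdot\E_{X,Y}\bigl|G(\hat g(X,Y))\bigr|\Bigr]\le 2B^2t\,\E\bigl|\nabla_i\hat g(X',Y')\bigr|+B\,\P\bigl(|\nabla_i\hat g(X',Y')|>t\bigr).
\]
Taking $t=2^{\beta}n^{-1/2}$ makes the tail at most $2n^{-2}$, so piece 2 is $O(n^{-1/2}\log(n)^{-\beta/2})=o(n^{-1/2})$.

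The same works without your decomposition. Conditioning on whichever point defines the band and using the other score's density gives $\E|\nabla_i\phi|\le 4B\bigl(\E|\nabla_i\hat g(X,Y)|+\E|\nabla_i\hat g(X',Y')|\bigr)$. Then $\E|\nabla_i\phi|^2\le\E\bigl[|\nabla_i\phi|\,M_t\bigr]$ with the pointwise bound $M_t=4Bt+\P_{X',Y'}(|\nabla_i\hat g(X',Y')|>t)+\mathds 1(|\nabla_i\hat g(X,Y)|>t)$. The little-$o$ comes from half of the $\log(n)^{-\beta}$ prefactor, through this $L_1$-times-supremum product, not from truncation.
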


\begin{proof}
    We need a stability bound
$$
 \norm{\nabla_i[\phi(X,Y;\mathcal{D}_n) + \psi(X,Y';\mathcal{D}_n)]}_2 = o(n^{-1/2}).
$$
By linearity, it is sufficient to bound $\norm{\nabla_i\phi(X,Y;\mathcal{D}_n)}_2$ and $\norm{\nabla_i\psi(X,Y';\mathcal{D}_n)}_2$ separately. We show the argument for $\norm{\nabla_i\phi(X,Y;\mathcal{D}_n)}_2$. The argument for $\norm{\nabla_i\psi(X,Y';\mathcal{D}_n)}_2$ is similar. Let $\mathcal{D}_n^i$ denote the data $\mathcal{D}_n$ where the $i$-th data point $Z_i:=(X_i,Y_i,Y'_i)$ is replaced by an iid copy $\tilde{Z}_i$. To simplify notation, we additionally define $\hat g_n := \wh{D}(\mathcal{D}_n)$ and $\hat g_n^i := \wh{D}(\mathcal{D}_n^i)$ to denote the output of the distinguisher with respect to the data $\mathcal{D}_n$ along with its replace one counterpart.

By definition, we can write
\begin{align*}
&\norm{\nabla_i\phi(X,Y;\mathcal{D}_n)}_2 \\
&= \norm{\phi(X,Y;\mathcal{D}_n)-\phi(X,Y;\mathcal{D}_n^i)}_2\\
&= \norm{\E_{X',Y'}[\mathds{1}(\hat g_n(X,Y)<\hat g_n(X',Y'))|\mathcal{D}_n]-\E_{X',Y'}[\mathds{1}(\hat g_n^i(X,Y)<\hat g_n^i(X',Y'))|\mathcal{D}_n^i]}_2\\
&=\norm{\E_{X',Y'}[\mathds{1}(\hat g_n(X,Y)<\hat g_n(X',Y'))-\mathds{1}(\hat g_n^i(X,Y)<\hat g_n^i(X',Y'))|\mathcal{D}_n,\tilde{Z}_i]}_2.
\end{align*}

We bound the inner term
\begin{align*}
&\left|\E_{X',Y'}[\mathds{1}(\hat g_n(X,Y)<\hat g_n(X',Y'))-\mathds{1}(\hat g_n^i(X,Y)<\hat g_n^i(X',Y'))|\mathcal{D}_n,\tilde{Z}_i]\right|\\
&\leq \E_{X',Y'}\left[\left|\mathds{1}(\hat g_n(X,Y)<\hat g_n(X',Y'))-\mathds{1}(\hat g_n^i(X,Y)<\hat g_n^i(X',Y'))\right|\bigg|\mathcal{D}_n,\tilde{Z}_i\right].
\end{align*}

Define
\begin{align*}
    \Delta &= \hat g_n(X,Y) - \hat g_n^i(X,Y)\\
    \Delta' &= \hat g_n(X',Y') - \hat g_n^i(X',Y').
\end{align*}

By similar argument as in \citet{huTwoSampleConditionalDistribution2024a}(proof of Proposition 1), we can write
\begin{align*}
    &\left|\mathds{1}(\hat g_n(X,Y)<\hat g_n(X',Y'))-\mathds{1}(\hat g_n^i(X,Y)<\hat g_n^i(X',Y')\right|\\
    &\leq \mathds{1}\left(\left|\hat g_n(X,Y)-\hat g_n(X',Y')\right| < |\Delta-\Delta'|\right).
\end{align*}

For any $\epsilon > 0$, we can bound 
\begin{align*}
    &\E_{X',Y'}\left[\mathds{1}\left(\left|\hat g_n(X,Y)-\hat g_n(X',Y')\right| < |\Delta-\Delta'|\right)\bigg|\mathcal{D}_n,\tilde{Z}_i\right]\\
    &\leq \P_{X',Y'}\left(\left|\hat g_n(X,Y)-\hat g_n(X',Y')\right| <\epsilon\bigg|\mathcal{D}_n,\tilde{Z}_i\right) + \P_{X',Y'}\left(|\Delta-\Delta'| > \epsilon\bigg|\mathcal{D}_n,\tilde{Z}_i\right).
\end{align*}

Under the bounded density assumption in Assumption \ref{as: crossfit}, the first term can be bounded by $2B\epsilon$. Condition on any realization of $\mathcal{D}_n,\tilde{Z}_i, X$ and $Y$. Under Assumption \ref{as: crossfit}, $\Delta'$ is $(\log(n)^{-\beta}n^{-1/2},\beta)-SW$, so the second term can be bounded up to a constant by $\exp\left(-(\epsilon\log(n)^{\beta}n^{1/2})^{1/\beta}\right),$ noting that after the conditioning $\Delta$ is deterministic so $\Delta-\Delta'$ is SW with same parameters. Then we can choose $\epsilon \asymp \log(n)^{-0.5\beta}n^{-1/2}$ to conclude.
\end{proof}

\begin{proof}[Proof of Proposition \ref{thm: crossfit asymp norm}]

    First we make the decomposition 
    \begin{align*}
        &\frac{\sqrt{n}}{\sigma_{tr,cross}} \left(T_{n,cross}-\frac{1}{K}\sum_{k=1}^K \mu(\mathcal{D}_{n,-k})\right)\\
        &= \frac{\sqrt{n}}{\sigma_{tr,cross}}\left(\frac{1}{K}\sum_{k=1}^K\left(\frac{1}{n_k^2}\sum_{i,j \in \mathcal{I}_k}R_{ij}(\mathcal{D}_{n,-k}) - \mu(\mathcal{D}_{n,-k})\right)\right)\\
        &= \frac{\sqrt{n}}{\sigma_{tr,cross}}\left(\frac{1}{K}\sum_{k=1}^K\left(\frac{1}{n_k^2}\sum_{i,j \in \mathcal{I}_k}R_{ij}(\mathcal{D}_{n,-k}) - \mu(\mathcal{D}_{n,-k})-\Phi_k-\Psi_k+\Phi_k+\Psi_k\right)\right)\\
        &= \underbrace{\frac{\sqrt{n}}{\sigma_{tr,cross}}\left(\frac{1}{K}\sum_{k=1}^K\frac{1}{n_k^2}\sum_{i,j \in \mathcal{I}_k}R_{ij}(\mathcal{D}_{n,-k}) - \mu(\mathcal{D}_{n,-k})-\Phi_k-\Psi_k\right)}_{(I)} + \underbrace{\frac{\sqrt{n}}{\sigma_{tr,cross}}\left(\frac{1}{K}\sum_{k=1}^K\Phi_k+\Psi_k\right)}_{(II)},
    \end{align*}
    where $\Phi_k$ and $\Psi_k$ are defined in Lemma \ref{lm: hajek decomposition}. Using Lemma \ref{lm: hajek decomposition}, we have that $(I) \xrightarrow{p} 0$. Term $(II)$ can be written as
    $$
    \frac{\sqrt{n}}{\sigma_{tr,cross}}\left(\frac{1}{K}\sum_{k=1}^K\Phi_k+\Psi_k\right) = \frac{\sqrt{n}}{\sigma_{tr,cross}}\left(\frac{1}{n}\sum_{k=1}^K\sum_{i\in \mathcal{I}_{n,k}}\phi(X_i,Y_i;\mathcal{D}_{n,-k}) + \psi(X_i,Y'_i;\mathcal{D}_{n,-k}) - 2\mu(\mathcal{D}_{n,-k})\right).
    $$
    The result follows from applying the CV-CLT Theorem \ref{thm: stability clt} noting that the stability assumption is satisfied by Lemma \ref{lem: stability u statistic}.
\end{proof}

\subsection{Sample Split Asymptotics}
In this section, we give the proof of Theorem $\ref{thm: asymp normality ss}.$ The strategy is to first prove asymptotic normality conditional on $\mathcal{D}_{n,1}$, the sample split used for fitting the distinguisher, and then extending to unconditional asymptotic normality. 

\begin{proof}[Proof of Proposition \ref{thm: asymp normality ss}]

Condition on $\mathcal{D}_{n,1}$. Recall the projections $\psi,\phi$ defined in (\ref{eq:projections}).
Similar to the cross-fit case, we start by decomposing the two sample U-statistic $T_{n,split}$.
Let
    \begin{align*}
        \tilde{\Phi} &= \frac{1}{n_2}\sum_{i \in \mathcal{I}_{2}}\phi(X_i,Y_i,\mathcal{D}_{n,1}) - \mu(\mathcal{D}_{n,1})\\
        \tilde{\Psi} &= \frac{1}{n_2}\sum_{i \in \mathcal{I}_{2}}\psi(X'_i,Y'_i,\mathcal{D}_{n,1}) - \mu(\mathcal{D}_{n,1}).
    \end{align*} 
    
    Then by Lemma \ref{lm: hajek decomposition},
    $$
        \sqrt{n_2}\left(T_{n,split} - \mu(\mathcal{D}_{n,1}) -  \tilde{\Phi} -  \tilde{\Psi}\right) \xrightarrow{p} 0.
    $$

Using Lemma \ref{lm: hajek decomposition}, $T_{n,split}-\mu(\mathcal{D}_{n,1})$ is asymptotically equivalent to $\tilde{\Phi}_k - \tilde{\Psi}_k = \frac{1}{n_2}\sum_{i \in \mathcal{I}_{2}}\phi(X_i,Y_i,\mathcal{D}_{n,1}) +\psi(X'_i,Y'_i,\mathcal{D}_{n,1}).$ Conditional on $\mathcal{D}_{n,1}$, $\tilde{\Phi}_k - \tilde{\Psi}_k$ is the sample mean of $n_2$ iid random variables $\phi(X_i,Y_i,\mathcal{D}_{n,1}) +\psi(X'_i,Y'_i,\mathcal{D}_{n,1}),$ the boundedness of $\phi,\psi$ imply that these additionally have finite absolute third moments. In particular, we have Berry-Esseen bound
$$
    \sup_{w\in \R}\left|\P\left(\frac{\sqrt{n_2}}{\sigma_{n,split}}[\tilde{\Phi} +\tilde{\Psi}] \leq w\bigg|\mathcal{D}_{n,1}\right) - \Upsilon(w)\right| \lesssim \frac{1}{\sigma_{n,split}^3\sqrt{n_2}},
$$
where $\Upsilon$ denotes the CDF of standard normal.

This says that for any $w\in \R$, we have
$$
\Upsilon(w) -  \frac{1}{\sigma_{n,split}^3\sqrt{n_2}} \lesssim \P\left(\frac{\sqrt{n_2}}{\sigma_{n,split}}[\tilde{\Phi} +\tilde{\Psi}] \leq w\bigg|\mathcal{D}_{n,1}\right) \lesssim \Upsilon(w) +  \frac{1}{\sigma_{n,split}^3\sqrt{n_2}}.
$$ Taking expectation over the fitting randomness in $\mathcal{D}_{n,1}$, we arrive at
$$
\Upsilon(w) -  \P\left(\frac{1}{\sigma_{n,split}^3\sqrt{n_2}}\right) \lesssim \P\left(\frac{\sqrt{n_2}}{\sigma_{n,split}}[\tilde{\Phi} +\tilde{\Psi}] \leq w\right) \lesssim \Upsilon(w) +  \P\left(\frac{1}{\sigma_{n,split}^3\sqrt{n_2}}\right).
$$
Similar to the proof of Theorem 3 in \citet{caiAsymptoticDistributionFreeIndependence2024a}, we can show $\P\left(\frac{1}{\sigma_{split}^3\sqrt{n_2}}\right) = o(1)$, as $n_1,n_2\rightarrow\infty$ under the growth condition on $\sigma_{split}$ in Assumption \ref{as: ss}. This concludes that 
$$
    \sup_{w\in \R}\left|\P\left(\frac{\sqrt{n_2}}{\sigma_{n,split}}[\tilde{\Phi} +\tilde{\Psi}] \leq w\right) - \Upsilon(w)\right| \lesssim o(1),
$$
completing the proof.
\end{proof}

\subsection{Variance Estimation}
In this section we give the proof of Proposition \ref{prop: var est split} and Proposition \ref{prop: var est cross} on variance estimation. For brevity, we give the proof of Proposition \ref{prop: var est cross} in detail and note that the proof of Proposition \ref{prop: var est split} follows from a modified argument in step 2 involving only a single split.
\begin{proof}[Proof of Proposition \ref{prop: var est cross}]
We first introduce an intermediate variance term
$$
    \tilde{\sigma}^2_{tr,cross} := \frac{1}{K}\sum_{k=1}^K \frac{1}{n_k}\sum_{i\in \mathcal{I}_{n,k}} \left(\phi(X_i,Y_i;\mathcal{D}_{n,-k})+\psi(X_i,Y'_i;\mathcal{D}_{n,-k}) - 2T_{n,-k}\right)^2.
$$ where we recall that
$$
    T_{n,-k} = \frac{1}{n_k^2}\sum_{i,j \in \mathcal{I}_k}R_{ij}(\mathcal{D}_{n,-k}).
$$
The argument proceeds using the following two steps:
\begin{enumerate}
    \item Show that $\tilde{\sigma}^2_{tr,cross}$ is a good approximation of $ \sigma^2_{tr,cross}$. The major difference between the two quantities is that the variance term $\tilde{\sigma}^2_{tr,cross}$ has additional randomness in the fitting folds $\mathcal{D}_{n,-k}$. The key observation is that by Efron-Stein inequality and the stability assumption (\ref{as: crossfit}), this additional fitting randomness is negligible.
    \item Show that $\tilde{\sigma}^2_{tr,cross}$ is close to $\hat{\sigma}^2_{tr,cross}$. In this step, we control the differences between the term $\tilde{\sigma}^2_{tr,cross}$ which involve the true projections $\phi,\psi$ and the term $\hat{\sigma}^2_{tr,cross}$ which replaces them by their empirical versions $\hat{\phi}$ and $\hat{\psi}$.
\end{enumerate}
\noindent\textbf{Step 1:} By similar argument as in the proof of Theorem 4.6 in \citet{leiModernTheoryCrossValidation2025}, and that $\phi,\psi$ are uniformly bounded by $1$, we can conclude that
$$
    \frac{\tilde{\sigma}_{tr,cross}^2}{\sigma_{tr,cross}^2} \xrightarrow{P} 1.
$$
\noindent\textbf{Step 2:} Consider fold $k \in [K]$. We first aim to bound $|\tilde{\sigma}_{tr,cross}^2 - \hat{\sigma}_{tr,cross}^2|$ by
\begin{align*}
    \bigg|\bigg(\phi(&X_i,Y_i;\mathcal{D}_{n,-k})+\psi(X_i,Y'_i;\mathcal{D}_{n,-k})\bigg)^2 - \bigg(\hat{\phi}(X_i,Y_i;\mathcal{D}_{n,-k})+\hat{\psi}(X_i,Y'_i;\mathcal{D}_{n,-k})\bigg)^2 \bigg|\\
    &\leq \underbrace{\left|\phi(X_i,Y_i;\mathcal{D}_{n,-k})^2-\hat{\phi}(X_i,Y_i;\mathcal{D}_{n,-k})^2\right|}_{(I)}\\
    &+ \underbrace{\left|\psi(X_i,Y'_i;\mathcal{D}_{n,-k})^2-\hat{\psi}(X_i,Y'_i;\mathcal{D}_{n,-k})^2\right|}_{(II)}\\
    &+ \underbrace{\left|\phi(X_i,Y_i;\mathcal{D}_{n,-k})\psi(X_i,Y'_i;\mathcal{D}_{n,-k})-\hat{\phi}(X_i,Y_i;\mathcal{D}_{n,-k})\hat{\psi}(X_i,Y'_i;\mathcal{D}_{n,-k})\right|}_{(III)}.
\end{align*}
We can bound these terms separately. We will show the claim for term $(II)$ in detail. The rest can be bounded in a similar manner. We can expand
\begin{align*}
    (II) &\leq |\psi(X_i,Y'_i;\mathcal{D}_{n,-k})-\hat{\psi}(X_i,Y'_i;\mathcal{D}_{n,-k})||\psi(X_i,Y'_i;\mathcal{D}_{n,-k})|\\
    &+ |\psi(X_i,Y'_i;\mathcal{D}_{n,-k})-\hat{\psi}(X_i,Y'_i;\mathcal{D}_{n,-k})||\hat{\psi}(X_i,Y'_i;\mathcal{D}_{n,-k})|.
\end{align*}

Notice that $\psi(X',Y';\mathcal{D}_{n,-k})$ is the CDF of the random variable $\hat g_{n,-k}(X,Y)$ evaluated at $\hat g_{n,-k}(X',Y')$. Similarly, $\hat{\psi}(X',Y';\mathcal{D}_{n,-k})$ is the empirical CDF of $\hat g_{n,-k}(X,Y)$ evaluated at $\hat g_{n,-k}(X',Y')$. By DKW inequality, the difference is bounded by $O(n_{k}^{-1/2})$. Using that $|\hat{\psi}(X_i,Y'_i;\mathcal{D}_{n,-k})|$ and $|\psi(X_i,Y'_i;\mathcal{D}_{n,-k})|$ are both bounded by $1$, we conclude that $(II) = O_p(n_{k}^{-1/2})$. Similar arguments show that $(I) = O_p(n_{k}^{-1/2})$ and $(III) = O_p(n_{k}^{-1/2})$ as well. Thus,
$$
\bigg|\bigg(\phi(X_i,Y_i;\mathcal{D}_{n,-k})+\psi(X_i,Y'_i;\mathcal{D}_{n,-k})\bigg)^2 - \bigg(\hat{\phi}(X_i,Y_i;\mathcal{D}_{n,-k})+\hat{\psi}(X_i,Y'_i;\mathcal{D}_{n,-k})\bigg)^2 \bigg| = O_p(n_{k}^{-1/2}).
$$
Combining across all folds shows that $$\tilde{\sigma}_{tr,cross}^2 -\hat{\sigma}_{tr,cross}^2 = O_p(n^{-1/2}).$$
Now we can bound
$$
    \frac{\hat{\sigma}_{tr,cross}^2}{\sigma_{tr,cross}^2} = \frac{(\hat{\sigma}_{tr,cross}^2 - \tilde{\sigma}_{tr,cross}^2)+\tilde{\sigma}_{tr,cross}^2}{\sigma_{tr,cross}^2} = O_p(n^{-1/2}) + \frac{\tilde{\sigma}_{tr,cross}^2}{\sigma_{tr,cross}^2} \xrightarrow{P} 1,
$$
where we additionally use that $\sigma_{tr,cross}^2$ is lower bounded by a constant given in Assumption \ref{as: crossfit}.
\end{proof}

\subsection{Power}
\begin{proof}[Proof of Corollary \ref{cor:power}]
    By the asymptotic normality of the test statistic we know that
    \begin{align*}
        &\frac{\sqrt{n}}{\hat{\sigma}_{tr,cross}}\Big(T_{n,cross} - \delta_n - \frac{1}{2}\Big)\\
        &= Z + \frac{\sqrt{n}}{\hat{\sigma}_{tr,cross}}\Big(\frac{1}{K}\sum_{k=1}^K \AUC(\wh{L}_k) - \delta_n - \frac{1}{2}\Big) + o_p(1)\\
        &\geq Z + \frac{\sqrt{n}}{\hat{\sigma}_{tr,cross}}\Big(\frac{1}{K}\sum_{k=1}^K (\AUC(\wh{L}_k) - \AUC(L)) + \delta^* - \delta_n \Big) + o_p(1)\\
        &\geq Z + \underbrace{\frac{\hat{\sigma}_{tr,cross}}{\sigma_{tr,cross}}\frac{\sqrt{n}}{c}\Big(\frac{1}{K}\sum_{k=1}^K (\AUC(\wh{L}_k) - \AUC(L)) + \delta^* - \delta_n \Big) \mathds{1}\Big\{\norm{\wh{L}_k-L} \leq r_{n} \forall k \in [K]\Big\}}_{(I)}\\
        &\quad+ \underbrace{\frac{\hat{\sigma}_{tr,cross}}{\sigma_{tr,cross}}\frac{\sqrt{n}}{c}\Big(\frac{1}{K}\sum_{k=1}^K (\AUC(\wh{L}_k) - \AUC(L)) + \delta^* - \delta_n \Big) \mathds{1}\Big\{\exists k \in [K]:\norm{\wh{L}_k-L} > r_{n} \Big\}}_{(II)} + o_p(1).
    \end{align*}
    Here $Z$ is a standard normal variable, $c$ is the deterministic variance lower bound in Assumption \ref{as: crossfit}, and to get the second line we use $\frac{1}{2} = \AUC(L) - \delta^*$. We now need to bound the terms $(I),(II)$. 

    For term $(I)$, using Proposition \ref{prop:auc_cons}, and the condition $\delta^*-\delta_n - 2r_{n} = \omega(1/\sqrt{n})$, we know that
    \begin{align*}
        \frac{\sqrt{n}}{c}\Big(\frac{1}{K}\sum_{k=1}^K (\AUC(\wh{L}_k) - \AUC(L)) + \delta^* - \delta_n \Big)
        \geq \frac{\sqrt{n}}{c}\Big(-2r_{n} + \delta^* - \delta_n \Big) = \omega(1).
    \end{align*}
    Using Proposition \ref{prop: var est cross} and that $\P\Big(\norm{\wh{L}_k-L} \leq r_{n} \forall k \in [K]\Big) \rightarrow 1$, we can conclude that $(I)$ diverges in probability.

    For term $(II)$, using Proposition \ref{prop: var est cross}, the fact that $\Big(\frac{1}{K}\sum_{k=1}^K (\AUC(\wh{L}_k) - \AUC(L)) + \delta^* - \delta_n \Big)$ is bounded and that $\P\Big(\exists k \in [K]:\norm{\wh{L}_k-L} > r_{n}\Big) \rightarrow 0$, we can conclude that $(II) \rightarrow 0$ in probability.

    Combining the control on term $(I),(II)$, we can conclude that the cross-fit test statistic diverges in probability.
\end{proof}
\subsection{Classification with Coupled Samples}

In this section we prove Theorem \ref{thm:consistency}. By standard LASSO analysis \citet{wainwrightHighDimensionalStatisticsNonAsymptotic2019} (Theorem 7.13(a)) we can get $\ell_2$ control of the solution $\hat{\beta}$ to $(\ref{eq: lasso problem})$. In what follows, we define
$
    w = Z- \Xi^\top \beta^*,
$
to denote the residuals.

\begin{theorem}
\label{thm: LASSO bound}
    Under Assumptions \ref{as: sparse basis rep} and \ref{as: LASSO}, the estimate $\hat{\beta}$ satisfies
    $
        \norm{\hat{\beta}-\beta^*}_2 \leq \frac{3}{\kappa} \sqrt{s}\lambda,
    $
    when $\lambda \geq 4\norm{\frac{\Xi^\top w}{n}}_\infty$.
\end{theorem}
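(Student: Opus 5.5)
The plan is to use the first-order optimality (KKT) conditions of \eqref{eq: lasso problem} instead of the basic inequality. The basic-inequality route only controls $\frac{1}{4n}\norm{\Xi\Delta}_2^2$ and yields the constant $\frac{9}{2}$. The KKT route controls $\frac{1}{2n}\norm{\Xi\Delta}_2^2$, which gains a factor of two and gives a constant below $3$.

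Throughout, set $\Delta=\hat\beta-\beta^*$ and $\mu:=\norm{\Xi^\top w/n}_\infty$, so that $\mu\le\lambda/4$ by hypothesis. I read the residual as $w=Z-\Xi\beta^*$, interpreting the transpose in the paper's definition of $w$ as a typo.

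\textbf{Step 1 (KKT identity).} Since \eqref{eq: lasso problem} is convex, there is a subgradient $\hat z\in\partial\norm{\hat\beta}_1$ satisfying
$$\frac{1}{2n}\Xi^\top(\Xi\hat\beta-Z)+\lambda\hat z=0 .$$
Substituting $Z=\Xi\beta^*+w$ and taking the inner product with $\Delta$ gives
$$\frac{1}{2n}\norm{\Xi\Delta}_2^2=\frac{1}{2n}w^\top\Xi\Delta-\lambda\langle\hat z,\Delta\rangle .$$

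\textbf{Step 2 (bounding the right side).} By H\"older's inequality, $\frac{1}{2n}|w^\top\Xi\Delta|\le\frac{\mu}{2}\norm{\Delta}_1\le\frac{\lambda}{8}\norm{\Delta}_1$. For the subgradient term, $\langle\hat z,\hat\beta\rangle=\norm{\hat\beta}_1$ and $\norm{\hat z}_\infty\le1$, so
$$-\langle\hat z,\Delta\rangle\le\norm{\beta^*}_1-\norm{\hat\beta}_1\le\norm{\Delta_S}_1-\norm{\Delta_{S^c}}_1 .$$
The last inequality is the usual decomposability argument, using that $\beta^*$ is supported on $S$. Together these give
$$0\le\frac{1}{2n}\norm{\Xi\Delta}_2^2\le\frac{9\lambda}{8}\norm{\Delta_S}_1-\frac{7\lambda}{8}\norm{\Delta_{S^c}}_1 .$$

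\textbf{Step 3 (cone membership).} The nonnegativity in Step 2 gives $\norm{\Delta_{S^c}}_1\le\frac{9}{7}\norm{\Delta_S}_1\le3\norm{\Delta_S}_1$. Hence $\Delta\in\C_3(S)$, and Assumption \ref{as: LASSO} applies to $\Delta$.

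\textbf{Step 4 (conclusion).} Drop the negative term in Step 2 and use Cauchy--Schwarz in the form $\norm{\Delta_S}_1\le\sqrt{s}\norm{\Delta}_2$. Combined with the RE condition this gives
$$\frac{\kappa}{2}\norm{\Delta}_2^2\le\frac{1}{2n}\norm{\Xi\Delta}_2^2\le\frac{9}{8}\lambda\sqrt{s}\norm{\Delta}_2 .$$
If $\Delta=0$ there is nothing to prove. Otherwise dividing by $\norm{\Delta}_2$ yields $\norm{\Delta}_2\le\frac{9}{4}\frac{\sqrt{s}\lambda}{\kappa}\le\frac{3}{\kappa}\sqrt{s}\lambda$, which is the claimed bound.

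The main obstacle is the loss of a factor of two that the basic inequality incurs. The step that needs care is Step 1, which replaces it: one must write the KKT condition with the correct factor $\frac{1}{2n}$, coming from differentiating $\frac{1}{4n}\norm{\cdot}_2^2$, and verify the sign conventions in $-\langle\hat z,\Delta\rangle\le\norm{\beta^*}_1-\norm{\hat\beta}_1$. After that, the remaining steps are routine.
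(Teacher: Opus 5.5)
Your proof is correct, but it takes a different route from the paper. The paper offers no argument of its own beyond invoking the basic-inequality analysis of \citet{wainwrightHighDimensionalStatisticsNonAsymptotic2019} (Theorem 7.13(a)), i.e.\ comparing the objective of \eqref{eq: lasso problem} at $\hat\beta$ and at $\beta^*$.

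Your remark about that route is accurate and matters here. Under the paper's normalizations ($2n$ rows, loss $\frac{1}{4n}\norm{\cdot}_2^2$, RE normalized by $\frac1n$), the basic inequality with $\lambda\ge 4\norm{\Xi^\top w/n}_\infty$ only gives $\norm{\Delta}_2\le\frac{9}{2\kappa}\sqrt{s}\lambda$. Applying Theorem 7.13(a) verbatim, with sample size $2n$ and RE constant $\kappa/2$, gives only $\frac{6}{\kappa}\sqrt{s}\lambda$. So the citation, read literally, does not produce the constant $3$.

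Your KKT identity amounts to strengthening the basic inequality by the curvature term of the quadratic loss. Writing $F$ for the objective in \eqref{eq: lasso problem}, it gives $F(\beta^*)\ge F(\hat\beta)+\frac{1}{4n}\norm{\Xi\Delta}_2^2$. This doubles the left-hand side and yields $\frac{9}{4\kappa}\sqrt{s}\lambda\le\frac{3}{\kappa}\sqrt{s}\lambda$, so your argument genuinely establishes the bound as stated.

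The trade-off is that you use exact optimality of $\hat\beta$, i.e.\ the existence of the subgradient certificate. The basic inequality only needs $\hat\beta$ to have objective value no larger than that of $\beta^*$. For the downstream use in Theorem~\ref{thm:consistency} only the rate $\sqrt{s\log K_n/n}$ matters, so the constant discrepancy is harmless there.
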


\begin{lemma}
    \label{lm: lambda bound}
    Suppose that each basis function $e_i,i=1,...,K_n$ is bounded by $[-B,B]$. For some constant $C$, we have
    $$
    P\left(\norm{\frac{\Xi^\top w}{n}}_\infty\geq C\sqrt{\frac{\log K_n}{n}}\right) \leq {K_n}^{-1}.
    $$
\end{lemma}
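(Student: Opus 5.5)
We need to bound $\|\Xi^\top w/n\|_\infty$, where $w=Z-\Xi\beta^*$ and $\Xi$ stacks $\Xi_1$ (rows $\mathfrak e(X_i,Y_i)$) over $\Xi_2$ (rows $\mathfrak e(X_i,Y'_i)$). For a fixed coordinate $l\in[K_n]$ we can write
\begin{equation*}
\frac{1}{n}(\Xi^\top w)_l=\frac{1}{n}\sum_{i=1}^n W_{il},\qquad W_{il}:=e_l(X_i,Y_i)\bigl(0-g^*(X_i,Y_i)\bigr)+e_l(X_i,Y'_i)\bigl(1-g^*(X_i,Y'_i)\bigr).
\end{equation*}
The key point is that, although the two samples are coupled through $X_i=X'_i$, the triplets $(X_i,Y_i,Y'_i)$ are iid across $i$. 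Each $W_{il}$ is therefore a function of a single triplet, and $\sum_i W_{il}$ is a sum of iid terms. This lets us avoid any dependence issues by grouping the two rows indexed by $i$ together.

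First, the terms are mean zero. Since $g^*$ minimizes the population objective (\ref{eq: m-est global}) over the linear span of $e_1,\dots,e_{K_n}$ (by Assumption \ref{as: sparse basis rep}), the first-order condition in direction $e_l$ gives $\E[e_l(X',Y')(1-g^*(X',Y'))-e_l(X,Y)g^*(X,Y)]=0$. The expectation depends only on the marginals of $(X,Y)$ and $(X',Y')$, which match those of $(X_i,Y_i)$ and $(X_i,Y'_i)$. Hence $\E W_{il}=0$. If $g^*$ is instead viewed as the unrestricted minimizer, the same identity holds a fortiori.

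Second, the terms are bounded. Since $|e_j|\le B$, we have $|g^*|\le B\|\beta^*\|_1\le B\sqrt{s}\|\beta^*\|_2$. I will treat $\|\beta^*\|_1$ as bounded by a constant, which is implicit in the constant $C$. Alternatively, since the unrestricted minimizer is $P(C=1\mid x,y)\in[0,1]$, one may use $|g^*|\le1$ directly. Either way, $|W_{il}|\le 2B(1+\sup|g^*|)=:M$.

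Third, Hoeffding's inequality gives
\begin{equation*}
P\Bigl(\Bigl|\tfrac1n\textstyle\sum_i W_{il}\Bigr|\ge t\Bigr)\le 2\exp\!\bigl(-nt^2/(2M^2)\bigr).
\end{equation*}
A union bound over the $K_n$ coordinates then yields
\begin{equation*}
P\bigl(\|\Xi^\top w/n\|_\infty\ge t\bigr)\le 2K_n\exp\!\bigl(-nt^2/(2M^2)\bigr).
\end{equation*}
Choosing $t=C\sqrt{\log K_n/n}$ with $C^2\ge 2M^2\cdot 3$ gives the bound $2K_n^{1-3}\le K_n^{-1}$ once $K_n\ge2$; since $K_n\to\infty$, this holds for large $n$.

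The main obstacle is the mean-zero step, specifically making sure the population first-order condition is taken with respect to the same joint law that generates the coupled rows. That reduces to the marginal-matching observation above. The boundedness of $g^*$ is a secondary issue, handled by the constant $C$.
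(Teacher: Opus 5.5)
Your proof is correct and follows essentially the same route as the paper. The paper applies McDiarmid's inequality to each coordinate $\zeta_k$ of $\Xi^\top w$, with bounded-difference constant $4B$ from replacing one coupled data point, and then takes a union bound; grouping the two rows for index $i$ into iid summands $W_{il}$ and using Hoeffding is the same argument. Your first-order-condition check that $\E W_{il}=0$, together with the observation that $g^*=P(C=1\mid x,y)\in[0,1]$ bounds $w$, supplies steps the paper uses only implicitly, since McDiarmid concentrates $\zeta_k$ around its mean rather than around zero.
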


\begin{proof}
First let's consider a single component of $\Xi^\top w$. The $k$-th component is given by
\begin{equation*}
    \zeta_k := \begin{pmatrix}
        e_k(X_1,Y_1)\\
        \cdots\\
        e_k(X_n,Y_n)\\
        e_k(X'_1,Y'_1)\\
        \cdots\\
        e_k(X'_n,Y'_n)
    \end{pmatrix}^\top w.
\end{equation*}
This entry can be viewed as a function of random variables $(X_1,Y_1),...,(X_n,Y_n),(X'_1,Y'_1),...,(X'_n,Y'_n)$. Suppose we replace a single one of these random variables by an iid copy. Without loss of generality, we can replace $(X_1,Y_1)$ by $(\tilde{X}_1,\tilde{Y_1})$. By our dependency structure, $X_1'$ is also replaced by $\tilde{X}_1$. After replacement by the iid copy, the component is now 
\begin{equation*}
   \tilde{\zeta}_k:=
   \begin{pmatrix}
        e_k(\tilde{X}_1,\tilde{Y}_1)\\
        \cdots\\
        e_k(X_n,Y_n)\\
        e_k(\tilde{X}_1,Y'_1)\\
        \cdots\\
        e_k(X'_n,Y'_n)
    \end{pmatrix}^\top w.
\end{equation*}
Using the boundness of $e_k$, we see that
$$
    |\zeta_k-\tilde{\zeta}_k| \leq 4B.
$$
Thus, $\zeta_k$ satisfies the bounded differences condition so we can apply McDiarmids inequality to get
$$
        P(|\zeta_k| \geq t) \leq 2\exp\left(\frac{-nt^2}{8B^2}\right).
$$
This bound holds across all components, so we can take a union bound to get
    $$
        P\Big(\norm{\Xi^\top w/n}_\infty \geq t\Big) \leq 2K_n\exp\left(\frac{-nt^2}{8B^2}\right).
    $$
    The result follows by taking
    $$
        t \asymp \sqrt{\log K_n/n}.
    $$
\end{proof}

The proof of Theorem \ref{thm:consistency} is a simple corollary of Lemma \ref{lm: lambda bound}.

\begin{proof}[Proof of Theorem \ref{thm:consistency}]
    By Lemma \ref{lm: lambda bound}, with probability at least $1-{K_n}^{-1}$, we can take $\lambda \asymp \sqrt{(\log K_n)/n}$.
\end{proof}

\label{sec: Proofs}

\begin{singlespace}
\medskip
\bibliographystyle{plainnat}
\bibliography{gof.bib}
\end{singlespace}

\end{document}